\pgfplotsset{compat=1.18}
\theoremstyle{definition}
\newtheorem{dfn}{Definition}
\newtheorem{thm}[dfn]{Theorem}
\newtheorem{lem}[dfn]{Lemma}
\newtheorem{prp}[dfn]{Proposition}
\newtheorem{rmk}[dfn]{Remark}
\newtheorem{asm}[dfn]{Assumption}
\newtheorem{cor}[dfn]{Corollary}
\newcommand{\tr}{\operatorname{tr}}
\newcommand{\ii}{\mathrm{i}}
\newcommand{\dif}{\mathrm{d}}
\newcommand{\sss}{\mathrm{ss}}
\newcommand{\wslim}[1]{\underset{#1}{\mathrm{w}^*\mathchar`-\lim~}}
\newcommand{\loc}{\mathrm{loc}}
\newcommand{\supp}{\operatorname{supp}}
\newcommand{\id}{\mathrm{id}}
\newcommand{\cb}{\mathrm{cb}}
\newcommand{\NESS}{\mathrm{NESS}}
\newcommand{\TANESS}{\mathrm{TANESS}}
\newcommand{\C}{\mathbb{C}}
\newcommand{\dist}{\operatorname{dist}}
\title{Nonequilibrium steady state in Lindblad dynamics for infinite quantum spin systems}
\author[1]{Kenji Shimomura}
\author[2]{Nagisa Hara}
\author[3]{Seiichiro Kusuoka}
\affil[1]{Center for Gravitational Physics and Quantum Information, Yukawa Institute for Theoretical Physics, Kyoto University, Japan}
\affil[2]{Department of Mathematics and Statistics,
University of Ottawa, Canada}
\affil[3]{Department of Mathematics, Graduate School of Science, Kyoto University, Japan}
\begin{document}
\maketitle

\begin{abstract}
    We consider Lindblad dynamics of quantum spin systems on infinite lattices and define a nonequilibrium steady state (NESS) and a time-averaged nonequilibrium steady state (TANESS) on the basis of $C^*$-algebraic formalism.
    Generically, the NESS on an infinite system does not equal the thermodynamic limit of NESSs on finite systems.
    We give a sufficient condition that they coincide with each other, in terms of both a condition number, which quantifies the normality of a Liouvillian, and some spectral gaps on finite subsystems. 
    To appreciate the importance of the condition number, we provide an example in which the spectral gaps have nonzero lower bounds uniformly for any finite subsystems but a thermodynamic limit and a long-time limit (or a long-time average) do not commute with each other.
\end{abstract}

\tableofcontents

\section{Introduction}
Open quantum systems provide a framework for describing the dynamics of quantum systems coupled to external environments (see e.g. \cite{Breuer2002,Attal2006a,Attal2006b,Attal2006c,Rivas2011} as standard text books).
Due to dissipation or decoherence to the surroundings, the system often relaxes to a steady state that is generally out of equilibrium.
Recently, quantum phases of matter in NESSs have attracted a lot of attention in physics.
Toward rigorous characterization of various phases, we would like to provide an operator algebraic definition of NESSs in open quantum systems on infinite lattices.

There are mainly two approaches to give operator algebraic description of open quantum systems.
First, we can take the environmental degrees of freedom into explicit consideration and introduce a unitary time evolution of the whole system consisting of the system and the environment.
This approach was proposed by Ruelle in \cite{Ruelle2000,Ruelle2001} and then developed by Jakšić and Pillet in \cite{Jaksic2001,Jaksic2002a,Jaksic2002}.
They considered a system with finite size that couples to different thermal baths with infinite size.
The whole system was described by a $C^*$-algebra, the time evolution over that was determined by a $C^*$-dynamics $\alpha_t$, and the baths were initially set in Kubo-Martin-Schwinger (KMS) states.
Within the formalism, an NESS for an initial state $\omega$ was defined as the long-time average of the evolved state, i.e., the cluster point of a net,
\begin{align*}
    \left(\frac{1}{T}\int_0^T \omega\circ\alpha_t \dif t\right)_{T\ge 0}.
\end{align*}
This approach is faithful to physical setup of open quantum systems, but it is generally hard to calculate the NESS.

Second, we can consider a closed form of non-unitary time evolution reduced into the system ignoring the environmental degrees of freedom.
This approach of so-called quantum dynamical semigroup supposes time evolution of observables to be completely-positive and unit-preserving.
For Markovian processes, a general form of Liouvillian of such dynamics was deduced by Gorini, Kossakowski, and Sudarshan in \cite{Gorini1976} for $N$-level systems in the Schr\"{o}dinger picture and by Lindblad in \cite{Lindblad1976,Lindblad1976b} for von Neumann algebras in the Heisenberg picture, independently.
Within the formalism, an NESS in a finite-dimensional system is usually defined by a density operator belonging to the kernel of the generator.
This approach is convenient for effective analysis of wide variety of dissipations but lacks a general defnition of NESSs for infinite systems from the operator algebraic viewpoint.
This is because it is unclear whether the Liouvillian in infinite lattices is well-defined due to the problem of a convergence.

In this paper, we combine some methods in these approaches to give an operator algebraic definition of NESSs in Lindblad dynamics for quantum spin systems on infinite lattices.
We exploit the Lindblad form of Liouvillian to describe dynamics for the system without explicit consideration of the environment.
The existence of Lindblad dynamics $(\gamma_t^\Gamma)_{t\ge 0}$ on an infinite lattice $\Gamma$, i.e., the existence of thermodynamic limit $\Lambda\to\Gamma$ of quantum dynamical semigroups $(\gamma_t^\Lambda)_{t\ge 0}$, is guaranteed by the Lieb-Robinson bound for Lindblad generators under a locality condition, given in \cite{Nachtergaele2011} (see Theorem~\ref{thm:existence_of_dynamics}).
Then, imitating the manner of Ruelle and Jakšić-Pillet, we define an NESS and a time-averaged NESS (TANESS) as a long-time limit and a long-time average of the evolved state from an initial state, respectively, in Definition~\ref{dfn:NESS} and \ref{dfn:TANESS}.

In our definition, the order of a long-time limit (or a long-time average) and a thermodynamic limit matters.
We find a sufficient condition that these limits commute with each other in Theorem~\ref{thm:main_theorem} and \ref{thm:main_theorem2}.
In the theorems, essential concepts are both three types of spectral gap $\Delta_\Lambda,\Delta^{\text{ex}}_\Lambda$, and $\Delta^{\text{p}}_\Lambda$ for a Liouvillian $\mathcal{L}_\Lambda$ on each finite subsystem $\Lambda$ and a condition number $\kappa_\Lambda(\mathcal{V}_\Lambda^\epsilon)=\|\mathcal{V}_\Lambda^\epsilon\|_{\mathfrak{A}_\Lambda\to\mathbb{C}^{d^{2|\Lambda|}}}\|(\mathcal{V}_\Lambda^\epsilon)^{-1}\|_{\mathbb{C}^{d^{2|\Lambda|}}\to\mathfrak{A}_\Lambda}$ of an invertible linear map $\mathcal{V}_\Lambda^\epsilon$ that diagonalizes a normalization of Liouvillian $\mathcal{L}_\Lambda$ by a small value $\epsilon$.
The condition number quantifies departure from normality of the Liouvillian.
The spectral gap $\Delta_\Lambda$ (or $\Delta^{\text{p}}_\Lambda$) represents the distance between the nonzero spectrum of the Liouvillian $\mathcal{L}_\Lambda$
and the imaginary axis (or the origin) of the complex plain, and $\Delta^{\text{ex}}_\Lambda$ does the distance between the set of non-semisimple eigenvalues of the Liouvillian $\mathcal{L}_\Lambda$ and the origin (see Figure~\ref{fig:gaps}).
Theorem~\ref{thm:main_theorem} (or \ref{thm:main_theorem2}) claims that if the spectral gaps $\Delta_\Lambda$ (or $\Delta^{\text{p}}_\Lambda$) and $\Delta^{\text{ex}}_\Lambda$ have uniform nonzero bounds in finite subsystems $\Lambda$ from below and the condition number has a uniform bound in finite subsystems $\Lambda$ from above, then a thermodynamic limit commutes with a long-time limit (or a long-time average) in the NESS (or the TANESS) for any initial state.
To appreciate the importance of the condition number, we provide an example in which the spectral gaps have nonzero lower bounds uniformly for any finite subsystems but a thermodynamic limit and a long-time limit (or a long-time average) do not commute with each other, in Section~\ref{sec:example}.
Thus, the condition number plays a central role in characterizing the NESS and the TANESS as well as the spectral gaps.

The paper is organized as follows.
In Section~\ref{sec:Preliminary}, we introduce the $C^*$-theoretical setup of quantum spin systems and Lindblad dynamics by reviewing previous studies.
In Section~\ref{sec:Def_of_NESS}, we give a definition of NESS for quantum spin systems on infinite lattices.
In Section~\ref{sec:Main_theorems}, we claim and prove our main theorems on the order of a long-time limit (or a long-time average) and a thermodynamic limit.
In Section~\ref{sec:example}, for an example we calculate the expectation value of an observable in the NESS and solve the eigenvalue problems of the Liouvillian.
In appendices, we summarize basic definitions and properties about a net and a $C^*$-algebra used in the main text for physicists.
Physical aspects of the NESS such as spontaneous symmetry breaking are investegated in another paper \cite{Shimomura2025}.

\section{Preliminary on quantum spin systems and Lindblad dynamics}\label{sec:Preliminary}

In this section, we review the $C^*$-theoretical setup of quantum spin systems and Lindblad dynamics.

\subsection{Quantum spin systems}
We prepare a $C^*$-theoretical setup to describe quantum spin systems according to \cite{Bratteli1987,Bratteli1997,Naaijkens2017}.

Let $\Gamma$ be a countable set and $d$ a positive integer.
We call a point $j\in\Gamma$ a site.
Physically, $\Gamma$ stands for the lattice on which the quantum spin system is considered and $d$ does for the dimension of a Hilbert space of a spin at each site.
The symbol $\Lambda\Subset\Gamma$ denotes that $\Lambda$ is a finite subset of $\Gamma$.
To each finite subset $\Lambda\Subset\Gamma$, we assign a finite-dimensional $C^*$-algebra
\begin{align*}
    \mathfrak{A}_\Lambda
    \coloneqq \bigotimes_{j\in\Lambda}\mathrm{M}_d
\end{align*}
with the $C^*$-norm $\|\bullet\|$, where $\mathrm{M}_d$ is the $d$-dimensional matrix ring over $\mathbb{C}$ equipped with the Hermitian conjugate $*$ as the involution.
In the present case
the $C^*$-norm $\|\hat{A}\|$ for $\hat{A}\in\mathfrak{A}_\Lambda$ is characterized as the largest singular value of the operator $\hat{A}$.
For the empty set $\Lambda=\emptyset$, we assign $\mathfrak{A}_\emptyset\coloneqq\mathbb{C}$. 
The dimension of $\mathfrak{A}_\Lambda$ is $d^{2|\Lambda|}$.
Every $\mathfrak{A}_\Lambda$ has an identity:  $\hat{I}_\Lambda\coloneqq\bigotimes_{j\in\Lambda}\hat{I}_d$ for $\Lambda\neq\emptyset$ and $\hat{I}_\emptyset\coloneqq 1$.
When two finite subset $\Lambda_1,\Lambda_2$ have the inclusion relation $\Lambda_1\subset\Lambda_2\Subset\Gamma$, we can introduce an inclusion map $\iota_{\Lambda_1,\Lambda_2}\colon\mathfrak{A}_{\Lambda_1}\hookrightarrow\mathfrak{A}_{\Lambda_2}$ by
\begin{align*}
    \iota_{\Lambda_1,\Lambda_2}(\hat{A})
    \coloneqq \hat{A}\otimes\hat{I}_{\Lambda_2\setminus\Lambda_1}
    \in\mathfrak{A}_{\Lambda_2},
    \quad
    \hat{A}
    \in\mathfrak{A}_{\Lambda_1},
\end{align*}
where $\hat{I}_d$ is the $d$-dimensional identity matrix.
For $\Lambda_1\subset\Lambda_2\Subset\Gamma$, hereafter, we consider that $\mathfrak{A}_{\Lambda_1}$ and $\mathfrak{A}_{\Lambda_2}$ have the inclusion relation $\mathfrak{A}_{\Lambda_1}\subset\mathfrak{A}_{\Lambda_2}$ and often identify an operator $\hat{A}$ in $\mathfrak{A}_{\Lambda_1}$ as $\iota_{\Lambda_1,\Lambda_2}(\hat{A})$ in $\mathfrak{A}_{\Lambda_2}$.
Since this inclusion map is isometric:
\begin{align*}
    \|\iota_{\Lambda_1,\Lambda_2}(\hat{A})\|
    = \|\hat{A}\|,
    \quad
    \hat{A}
    \in\mathfrak{A}_{\Lambda_1},
\end{align*}
we do not distinguish the norm of $\mathfrak{A}_\Lambda$ from that of $\mathfrak{A}_{\Lambda'}$ for two finite subsets $\Lambda,\Lambda'\Subset\Gamma$.
For any finite subset $\Lambda\Subset\Gamma$, we often call a self-adjoint operator in $\mathfrak{A}_\Lambda$ a local observable.

A normed $*$-algebra with the norm $\|\bullet\|$ is defined by
\begin{align*}
    \mathfrak{A}_{\loc}
    \coloneqq \bigcup_{\Lambda\Subset\Gamma}\mathfrak{A}_\Lambda.
\end{align*}
The identity $\hat{I}$ exists in $\mathfrak{A}_{\loc}$.
For $\hat{A}\in\mathfrak{A}_{\loc}$, the \emph{support} of $\hat{A}$ is the smallest finite subset $\Lambda\Subset\Gamma$ such that $\hat{A}\in\mathfrak{A}_\Lambda$ holds and we write it by $\supp\hat{A}$.
We define a \emph{quantum spin system} as the completion of $\mathfrak{A}_{\loc}$ with respect to the norm:
\begin{align*}
    \mathfrak{A}
    \coloneqq \overline{\bigcup_{\Lambda\Subset\Gamma}\mathfrak{A}_\Lambda}^{\|\bullet\|}.
\end{align*}
The quantum spin system $\mathfrak{A}$ is a unital $C^*$-algebra with the identity $\hat{I}$ and the $C^*$-norm $\|\bullet\|$ that are the canonical extensions of $\hat{I}\in\mathcal{A}_{\loc}$ and $\|\bullet\|$ on $\mathfrak{A}_{\loc}$.
We often call a self-adjoint operator in $\mathfrak{A}$ an observable.
The pair of $\mathfrak{A}$ and a net $(\mathfrak{A}_\Lambda)_{\Lambda\Subset\Gamma}$ is a \emph{quasi-local algebra}.


\subsection{Complete positivity and complete boundedness}\label{subsec:CP&CB}
Next, we introduce the positivity and complete positivity of linear maps on the quantum spin system $\mathfrak{A}$.
A linear map $\mathcal{T}$ from $\mathfrak{A}$ to itself is said to be \emph{positive} if $\mathcal{T}(\hat{A})$ is positive for every positive element $\hat{A}$ of $\mathfrak{A}$, and $\mathcal{T}$ is said to be \emph{completely positive} if the linear map $\mathcal{T}\otimes\id_{\mathrm{M}_n}\colon\mathfrak{A}\otimes\mathrm{M}_n\to\mathfrak{A}\otimes\mathrm{M}_n,\hat{A}\otimes\hat{B}\to\mathcal{T}(\hat{A})\otimes\hat{B}$ is positive for every positive integer $n$. 
A completely positive map is positive, but the converse is not true.

We also remark upon a norm of linear maps over $\mathfrak{A}_\Lambda$ for a finite subset $\Lambda\Subset\Gamma$.
See \cite{Paulsen2003} for details of this topic.
We define the norm $\|\bullet\|_{B(\mathfrak{A}_\Lambda)}$ by
\begin{align*}
    \|\mathcal{T}_\Lambda\|_{B(\mathfrak{A}_\Lambda)}
    \coloneqq \sup\{\|\mathcal{T}_\Lambda(\hat{A})\|\mid\hat{A}\in\mathfrak{A}_\Lambda,\|\hat{A}\|\le 1\}
\end{align*}
for every linear map $\mathcal{T}_\Lambda\colon\mathfrak{A}_\Lambda\to\mathfrak{A}_\Lambda$.
We write the set of bounded linear maps from $\mathfrak{A}_\Lambda$ to itself by $B(\mathfrak{A}_\Lambda)$.
Note that the vector space $\mathfrak{A}_\Lambda$ is finite-dimensional.
We often extend the domain $\mathfrak{A}_\Lambda$ of $\mathcal{T}_\Lambda$ to $\mathfrak{A}_{\Lambda'}$ with $\Lambda\subset\Lambda'\Subset\Gamma$ by identifying $\mathcal{T}_\Lambda$ as $\mathcal{T}_\Lambda\otimes\id_{\mathfrak{A}_{\Lambda'\setminus\Lambda}}$.
It is noteworthy that this extension alters the value of the norm, i.e., 
there exists a linear map $\mathcal{T}_\Lambda\in B(\mathfrak{A}_\Lambda)$ and a finite subset $\Lambda'\Subset\Gamma$ including $\Lambda$ such that
\begin{align*}
    \|\mathcal{T}_\Lambda\|_{B(\mathfrak{A}_\Lambda)}
    \neq \|\mathcal{T}_\Lambda\otimes\id_{\mathfrak{A}_{\Lambda'\setminus\Lambda}}\|_{B(\mathfrak{A}_{\Lambda'})}
\end{align*}
Hence, we should distinguish $\|\bullet\|_{B(\mathfrak{A}_\Lambda)}$ from $\|\bullet\|_{B(\mathfrak{A}_{\Lambda'})}$.
For a finite subset $\Lambda\Subset\Gamma$, a linear map $\mathcal{T}_\Lambda$ from $\mathfrak{A}_\Lambda$ to itself is said to be \emph{completely bounded} if for all positive integer $n$, the linear maps $\mathcal{T}_\Lambda\otimes\id_{\mathrm{M}_n}\colon\mathfrak{A}_\Lambda\otimes\mathrm{M}_n\to\mathfrak{A}_\Lambda\otimes\mathrm{M}_n$ are uniformly bounded in $n$, i.e.,
\begin{align*}
    \|\mathcal{T}_\Lambda\|_{\cb}
    \coloneqq \sup_{n\ge 1}\|\mathcal{T}_\Lambda\otimes\id_{\mathrm{M}_n}\|_{B(\mathfrak{A}_\Lambda\otimes\mathrm{M}_n)}
    < \infty.
\end{align*}
The following proposition implies that $\mathcal{T}_\Lambda$ is always completely bounded because $\mathfrak{A}_\Lambda$ is of finite dimension.
\begin{prp}[{\cite[Exercise 3.11]{Paulsen2003}}]
    For a finite subset $\Lambda\Subset\Gamma$ and a linear map $\mathcal{T}_\Lambda\colon\mathfrak{A}_\Lambda\to\mathfrak{A}_\Lambda=\mathrm{M}_{d^{|\Lambda|}}$, it holds that
    \begin{align*}
        \|\mathcal{T}_\Lambda\|_{\cb}\le d^{|\Lambda|}\|\mathcal{T}_\Lambda\|_{B(\mathfrak{A}_\Lambda)}.
    \end{align*}
\end{prp}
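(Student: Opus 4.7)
The plan is to reduce the matrix-valued claim to $N^2$ scalar-valued ones (with $N\coloneqq d^{|\Lambda|}$) by decomposing $\mathcal{T}_\Lambda$ into its matrix entries, and then to recombine the resulting bounds via an elementary block-matrix inequality that supplies the correct linear dependence on $N$. Concretely, I would fix the standard basis of $\mathbb{C}^N$ and introduce the scalar coordinate functionals $\phi_{ij}\colon\mathfrak{A}_\Lambda\to\mathbb{C}$ defined by $\phi_{ij}(\hat{A})\coloneqq\langle e_i,\mathcal{T}_\Lambda(\hat{A})e_j\rangle$, so that $\mathcal{T}_\Lambda(\hat{A})=\sum_{i,j=1}^{N}\phi_{ij}(\hat{A})E_{ij}$ with $\|\phi_{ij}\|\le\|\mathcal{T}_\Lambda\|_{B(\mathfrak{A}_\Lambda)}$. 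For any $n$ and $\hat{B}\in\mathfrak{A}_\Lambda\otimes\mathrm{M}_n$, the amplification then reads
\[
(\mathcal{T}_\Lambda\otimes\id_{\mathrm{M}_n})(\hat{B})=\sum_{i,j=1}^{N}C_{ij}\otimes E_{ij},\qquad C_{ij}\coloneqq(\phi_{ij}\otimes\id_{\mathrm{M}_n})(\hat{B})\in\mathrm{M}_n,
\]
which I regard as an $N\times N$ block matrix with $n\times n$ blocks $C_{ij}$.

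The next step is to invoke the classical identity $\|\phi\|_{\cb}=\|\phi\|$ for every bounded linear functional $\phi$ on a $C^*$-algebra (proved by representing $\phi_{ij}$ as a vector functional $\langle\xi,\pi(\cdot)\eta\rangle$ through a cyclic representation and amplifying trivially), which gives $\|C_{ij}\|\le\|\phi_{ij}\|\,\|\hat{B}\|\le\|\mathcal{T}_\Lambda\|_{B(\mathfrak{A}_\Lambda)}\,\|\hat{B}\|$ \emph{uniformly in} $n$. I then apply the elementary block-matrix norm inequality $\bigl\|\sum_{i,j=1}^{N}C_{ij}\otimes E_{ij}\bigr\|\le N\max_{i,j}\|C_{ij}\|$, proved in one line by Cauchy--Schwarz on the Hilbert space pairing $\langle\xi,M\eta\rangle$ with $\xi=(\xi_i)_{i=1}^{N},\,\eta=(\eta_j)_{j=1}^{N}\in(\mathbb{C}^n)^{N}$. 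Combining these two estimates gives $\|(\mathcal{T}_\Lambda\otimes\id_{\mathrm{M}_n})(\hat{B})\|\le N\|\mathcal{T}_\Lambda\|_{B(\mathfrak{A}_\Lambda)}\|\hat{B}\|$ uniformly in $n$, whence the claim follows by taking the supremum over $n$ and $\hat{B}$.

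The main obstacle is obtaining the \emph{linear}, rather than quadratic, dependence on $N$: a naive triangle inequality applied to the block decomposition loses a factor of $N$ and yields only $\|\mathcal{T}_\Lambda\|_{\cb}\le N^2\|\mathcal{T}_\Lambda\|_{B(\mathfrak{A}_\Lambda)}$. The Cauchy--Schwarz improvement in the block-matrix inequality is precisely what trades $N^2$ for $N$, while the $\cb$-equals-operator-norm identity for functionals is what makes the per-block estimate uniform in the amplification level $n$; any estimate of the form $\|\phi_{ij}\|_{\cb}\le C(n)\|\phi_{ij}\|$ with $C(n)\to\infty$ would cause the supremum over $n$ in the definition of $\|\bullet\|_{\cb}$ to diverge.
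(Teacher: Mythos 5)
Your argument is correct: the reduction to the entry functionals $\phi_{ij}$, the classical fact $\|\phi_{ij}\|_{\cb}=\|\phi_{ij}\|$ for bounded functionals on a $C^*$-algebra, and the Cauchy--Schwarz block-matrix estimate $\|\sum_{i,j}C_{ij}\otimes E_{ij}\|\le N\max_{i,j}\|C_{ij}\|$ together give exactly the linear bound $\|\mathcal{T}_\Lambda\|_{\cb}\le d^{|\Lambda|}\|\mathcal{T}_\Lambda\|_{B(\mathfrak{A}_\Lambda)}$. The paper itself gives no proof but simply cites Paulsen's Exercise 3.11, and your proof is essentially the standard intended solution of that exercise, so there is nothing to add beyond the trivial bookkeeping remark that the amplified element naturally sits in $\mathrm{M}_{d^{|\Lambda|}}\otimes\mathrm{M}_n$ and the flip to your block ordering is an isometric $*$-isomorphism.
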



In the last part of this subsection, we introduce a specific way to extend the domain of a completely bounded map over $\mathfrak{A}_\Lambda$ into $\mathfrak{A}$.
Hereafter, we often encounter a situation in which we have a local map over $\mathfrak{A}_\Lambda$ for a finite subset $\Lambda\Subset\Gamma$ and want to extend it into the whole quantum spin system $\mathfrak{A}$.
Arveson's or Wittstock's extension theorem assures the existence of such an extension preserving the complete positivity or the complete boundedness (see e.g. \cite{Paulsen2003}).
Throughout this paper, in particular, we will adapt a specific way of extension as follows: For a given linear map $\mathcal{T}_\Lambda\colon\mathfrak{A}_\Lambda\to\mathfrak{A}_\Lambda$, we take the tensor product between $\mathcal{T}_\Lambda$ and identities to define the linear map $\tilde{\mathcal{T}}_\Lambda\colon\mathfrak{A}_{\text{loc}}\to\mathfrak{A}_{\text{loc}},\hat{A}\mapsto\mathcal{T}_\Lambda\otimes\id_{\supp\hat{A}\setminus\Lambda}(\hat{A})$.  
Since $\mathcal{T}_\Lambda$ is completely bounded, $\tilde{T}_\Lambda$ is a bounded map over $\mathfrak{A}_{\text{loc}}$, so we can get a unique extension $\tilde{\tilde{\mathcal{T}}}_\Lambda$ of $\tilde{T}_\Lambda$ onto the quantum spin system $\mathfrak{A}=\overline{\mathfrak{A}_{\text{loc}}}^{\|\bullet\|}$ with $\|\tilde{\tilde{\mathcal{T}}}_\Lambda\|_{B(\mathfrak{A})}=\|\mathcal{T}_\Lambda\|_{\text{cb}}$.
Moreover, if $\mathcal{T}_\Lambda$ is completely positive, the Kraus representation $\mathcal{T}_\Lambda(\hat{A})=\sum_j \hat{K}_j\hat{A}\hat{K}_j^*$ with $\hat{K}_j\in\mathfrak{A}_\Lambda$ is naturally extended to $\tilde{\tilde{\mathcal{T}}}_\Lambda\colon\mathfrak{A}\to\mathfrak{A}$ by embedding each $\hat{K}_j$ into $\mathfrak{A}$, and hence the extension $\tilde{\tilde{\mathcal{T}}}_\Lambda$ is also completely positive.
Thus, we obtain the bounded extension of $\mathcal{T}_\Lambda$ over $\mathfrak{A}$, which is completely positive if so is $\mathcal{T}_\Lambda$.

\subsection{Lindblad dynamics and existence of the thermodynamic limit}
We introduce the generator of Lindblad dynamics on the quantum spin systems $\mathfrak{A}$ in accordance with \cite{Nachtergaele2011}.

For each finite subset $Z\Subset\Gamma$, we give the following:
\begin{itemize}
    \item a self-adjoint operator $\hat{\Phi}(Z)\in\mathfrak{A}_Z$ with $\hat{\Phi}(Z)^* = \hat{\Phi}(Z)$;
    \item $N(Z)$ number of operators $\hat{L}_\alpha(Z)\in\mathfrak{A}_Z$ for $\alpha=1,2,\ldots,N(Z)$.
\end{itemize}
Then, we define the bounded linear maps $\mathcal{L}_\Lambda\colon\mathfrak{A}_\Lambda\to\mathfrak{A}_\Lambda$ for each finite subset $\Lambda\Subset\Gamma$ by
\begin{align}\label{eq:def_of_Liouvillian}
    \mathcal{L}_\Lambda(\hat{A})
    \coloneqq \sum_{Z\subset\Lambda}\Psi_Z(\hat{A}),
    \quad
    \hat{A}\in\mathfrak{A}_\Lambda,
\end{align}
with
\begin{align*}
    \Psi_Z(\hat{A})
    \coloneqq \ii[\Phi(Z),\hat{A}] + \sum_{\alpha=1}^{N(Z)}\left(\hat{L}_\alpha(Z)^*\hat{A}\hat{L}_\alpha(Z) - \frac{1}{2}\{\hat{L}_\alpha(Z)^*\hat{L}_\alpha(Z),\hat{A}\}\right) \notag,
\end{align*}
where $[\hat{A},\hat{B}]\coloneqq\hat{A}\hat{B}-\hat{B}\hat{A}$ and $\{\hat{A},\hat{B}\}\coloneqq\hat{A}\hat{B}+\hat{B}\hat{A}$.
Here, we define $\Psi_Z(\hat{A})$ for $\hat{A}\in\mathfrak{A}_\Lambda$ and $Z\subset\Lambda$ by identifying $\Psi_Z$ as the bounded linear map $\Psi_Z\otimes\id_{\mathfrak{A}_{\Lambda\setminus Z}}$ from $\mathfrak{A}_\Lambda$ to itself.
Note that we are allowed to take $N(Z)=\infty$ if the infinite sum in $\Psi_Z(\hat{A})$ converges with respect to the operator norm for each $\hat{A}\in\mathfrak{A}_{\loc}$. 

The map $\mathcal{L}_\Lambda$ stands for the Lindblad generator on the finite subset $\Lambda$.
We call $\mathcal{L}_\Lambda$ a \textit{Liouvillian}.
The time evolution of $\hat{A}\in\mathfrak{A}_\Lambda$ on a finite subset $\Lambda$ is provided by the exponential $e^{t\mathcal{L}_\Lambda}$ of the bounded linear map $\mathcal{L}_\Lambda$ for the time $t\ge 0$,
which is a unit-preserving completely positive map over $\mathfrak{A}_\Lambda$ \cite{Lindblad1976}.
Following the way in Section~\ref{subsec:CP&CB}, we extend the map $e^{t\mathcal{L}_\Lambda}$ over $\mathfrak{A}_\Lambda$ to $\gamma_t^\Lambda$ over $\mathfrak{A}$ keeping it unit-preserving and completely positive.
We intend to define Lindblad dynamics over the infinite system $\Gamma$ by taking the thermodynamic limit of $\gamma_t^\Lambda$.
To ensure the existence of the thermodynamic limit, for now we impose locality conditions to $\Gamma$ and Liouvillians, based on \cite{Nachtergaele2011}.

\begin{asm}[Locality of Liouvillian]\label{asm:locality}
    Let the countable set $\Gamma$ be a metric space equipped with the distance $\dist\colon\Gamma\times\Gamma\to[0,\infty)$.
    \begin{enumerate}[(1)]
        \item There exists a non-increasing function $F\colon[0,\infty)\to[0,\infty)$ satisfying
        \begin{align*}
            \sup_{x\in\Gamma}\sum_{y\in\Gamma}F(\dist(x,y))
            < \infty
        \end{align*}
        and
        \begin{align*}
            \sup_{x,y\in\Gamma}\sum_{z\in\Gamma}\frac{F(\dist(x,z))F(\dist(z,y))}{F(\dist(x,y))}
            < \infty.
        \end{align*}
        \item There exists $\mu>0$ such that 
        \begin{align*}
            \sup_{x,y\in\Gamma}\sum_{\substack{Z\Subset\Gamma \\ \text{s.t. }x,y\in Z}}\frac{\|\Psi_Z\|_{\cb}}{F_\mu(\dist(x,y))}
            < \infty
        \end{align*}
        with $F_\mu(a)\coloneqq e^{-\mu a}F(a)$.
    \end{enumerate} 
\end{asm}
Under this two assumptions, the Lieb-Robinson bound was shown, which resulted in the existence of Lindblad dynamics over $\Gamma$ as follows.
\begin{thm}[\cite{Nachtergaele2011} Theorem 3]\label{thm:existence_of_dynamics}
    Let $(\mathcal{L}_\Lambda)_{\Lambda\Subset\Gamma}$ be a family of Liouvillians on finite subsets satisfying Assumption \ref{asm:locality}.
    Then, there exists a strongly continuous semigroup $(\gamma_t^\Gamma)_{t\ge 0}$ of unit-preserving and completely positive maps on the quantum spin system $\mathfrak{A}$ such that 
    \begin{align}\label{eq:TDL_of_dynamics}
        \lim_{\Lambda}\|\gamma_t^{\Lambda}(\hat{A}) - \gamma_t^\Gamma(\hat{A})\|
        = 0
    \end{align}
    for any $t\ge 0$ and any $\hat{A}\in\mathfrak{A}$.
    Moreover, the convergence above is uniform over every finite interval $[0,T]$ for $T\in(0,\infty)$.
\end{thm}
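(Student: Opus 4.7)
The strategy is the classical three-step thermodynamic-limit argument: first prove that $(\gamma_t^\Lambda(\hat{A}))_\Lambda$ is norm-Cauchy for every local $\hat{A}$, uniformly in $t$ on compact intervals; then extend the limit by density using that each $\gamma_t^\Lambda$ is a contraction; and finally transfer the algebraic and analytic properties (semigroup, strong continuity, unit-preservation, complete positivity) to the limit. Throughout, the net is indexed by the finite subsets $\Lambda \Subset \Gamma$ directed by inclusion, and all statements about convergence are in the $C^*$-norm.

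For the Cauchy step, fix $\hat{A} \in \mathfrak{A}_X$ with $X \Subset \Gamma$, and take $X \subset \Lambda_1 \subset \Lambda_2 \Subset \Gamma$. The Duhamel identity $\gamma_t^{\Lambda_2} - \gamma_t^{\Lambda_1} = \int_0^t \frac{\dif}{\dif s}\bigl(\gamma_{t-s}^{\Lambda_2}\gamma_s^{\Lambda_1}\bigr)\,\dif s$ in $B(\mathfrak{A}_{\Lambda_2})$ gives
\begin{align*}
    \gamma_t^{\Lambda_2}(\hat{A}) - \gamma_t^{\Lambda_1}(\hat{A})
    = \int_0^t \gamma_{t-s}^{\Lambda_2}\bigl((\mathcal{L}_{\Lambda_2} - \mathcal{L}_{\Lambda_1})\gamma_s^{\Lambda_1}(\hat{A})\bigr)\,\dif s,
\end{align*}
and the contractivity of the unit-preserving completely positive $\gamma_{t-s}^{\Lambda_2}$ reduces the task to bounding $\sum_{Z \subset \Lambda_2,\,Z \not\subset \Lambda_1} \|\Psi_Z(\gamma_s^{\Lambda_1}(\hat{A}))\|$. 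Every such $Z$ must meet $\Lambda_2 \setminus \Lambda_1$, so once $\Lambda_1$ contains a large neighborhood of $X$ the surviving $Z$ sit far from $X$.

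The decisive input is the Lieb-Robinson bound for Lindblad generators proved in \cite{Nachtergaele2011} under Assumption~\ref{asm:locality}, which supplies a quasi-locality estimate of the form
\begin{align*}
    \|\Psi_Z(\gamma_s^{\Lambda_1}(\hat{A}))\|
    \le C\,\|\Psi_Z\|_{\cb}\,\|\hat{A}\|\,\bigl(e^{v s} - 1\bigr)\sum_{x \in X,\,y \in Z} F_\mu(\dist(x,y))
\end{align*}
for some $v>0$, uniformly in $\Lambda_1$, whenever $Z$ is disjoint from $X$. Summing over admissible $Z$ via Assumption~\ref{asm:locality}(2) and invoking the $F$-summability of Assumption~\ref{asm:locality}(1) dominates the full boundary sum by $\sum_{y \in \Gamma \setminus \Lambda_1} \sum_{x \in X} F_\mu(\dist(x,y))$, which vanishes as $\Lambda_1 \to \Gamma$. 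The finitely dangerous $Z$ meeting $X$ are handled separately: for them $Z$ must contain a site in $\Lambda_2 \setminus \Lambda_1$, hence $\mathrm{diam}(Z)$ is large, and Assumption~\ref{asm:locality}(2) again forces $\|\Psi_Z\|_{\cb}$ to be negligibly small. Hence $(\gamma_t^\Lambda(\hat{A}))_\Lambda$ is Cauchy uniformly in $t \in [0,T]$, defining $\gamma_t^\Gamma(\hat{A}) := \lim_\Lambda \gamma_t^\Lambda(\hat{A})$ for $\hat{A} \in \mathfrak{A}_\loc$.

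The remaining items are routine. An $\varepsilon/3$ argument based on the uniform contractivity $\|\gamma_t^\Lambda\|_{B(\mathfrak{A})} \le 1$ extends $\gamma_t^\Gamma$ to all of $\mathfrak{A} = \overline{\mathfrak{A}_\loc}^{\|\bullet\|}$ while preserving~(\ref{eq:TDL_of_dynamics}). The semigroup property passes to the limit from $\gamma_{s+t}^\Lambda = \gamma_s^\Lambda \circ \gamma_t^\Lambda$ by combining uniform convergence on $[0,T]$ with contractivity. Strong continuity at $t=0$ follows from $\|\gamma_t^\Lambda(\hat{A}) - \hat{A}\| \le t\,\sup_{0 \le s \le t}\|\mathcal{L}_\Lambda(\gamma_s^\Lambda(\hat{A}))\|$ for $\hat{A}\in\mathfrak{A}_\loc$, uniformly bounded in $\Lambda$ by $t\sum_{Z \cap \supp\hat{A} \neq \emptyset}\|\Psi_Z\|_{\cb}\|\hat{A}\|$, which is finite by Assumption~\ref{asm:locality}; density extends it to $\mathfrak{A}$. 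Unit-preservation and complete positivity survive pointwise norm limits. The main obstacle is thus the Lieb-Robinson step itself, whose proof in \cite{Nachtergaele2011} proceeds by a Dyson-series iteration that closes on itself through the two $F$-function bounds; here we take it as a black box, so the remaining technical burden is the bookkeeping of the boundary double sum, which the $F_\mu$-summability dispatches.
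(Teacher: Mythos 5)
The paper does not actually prove this theorem: it is imported verbatim (up to minor adaptations) from \cite{Nachtergaele2011}, and the authors only remark on the differences — time-independent Liouvillians, convergence along the net of finite subsets rather than along exhausting sequences (with a pointer to the argument of Theorem 6.2.11 in \cite{Bratteli1997}), and the uniformity on compact time intervals. Your sketch is essentially a faithful reconstruction of the proof in that reference: the Duhamel/telescoping identity, contractivity of the finite-volume unit-preserving completely positive semigroups, the irreversible Lieb--Robinson bound applied to $\Psi_Z$ (legitimate since $\Psi_Z(\hat{I})=0$ and $\Psi_Z$ is completely bounded), the $F_\mu$-summability bookkeeping, and the routine transfer of the semigroup, strong-continuity, unit-preservation and complete-positivity properties to the limit; so you are taking the same route as the cited source rather than a new one. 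Two small points of care: for the terms with $Z$ meeting $\supp\hat{A}$, it is not that each individual $\|\Psi_Z\|_{\cb}$ is small, but that Assumption~\ref{asm:locality}(2) bounds the \emph{sum} of $\|\Psi_Z\|_{\cb}$ over all $Z$ containing a fixed pair $x\in\supp\hat{A}$, $y\in\Lambda_2\setminus\Lambda_1$ by $CF_\mu(\dist(x,y))$, and it is the decay of $F_\mu$ together with Assumption~\ref{asm:locality}(1) that makes the resulting double sum vanish as $\Lambda_1\to\Gamma$; and since you take the Lieb--Robinson estimate as a black box, your argument is a proof of the theorem only relative to that input, which is exactly the content of the citation the paper relies on.
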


We remark that the original statement in \cite{Nachtergaele2011} has several differences from that above. 
First, we now consider the case the Liouvillians are independent of $t$, unlike the original one.
Second, we are handling the convergence of the net $(\gamma_t^\Lambda(\hat{A}))_{\Lambda\Subset\Gamma}$ in Eq.~\eqref{eq:TDL_of_dynamics}, while the original paper argues the convergence of subsequences $(\gamma_t^{\Lambda_n})_n$ for any exhausting increasing sequences $(\Lambda_n)_{n=1}^\infty$ of finite subsets $\Lambda_n\Subset\Gamma$.
The extension from the proof of the sequence version to that of the net version is straightforward; see Theorem 6.2.11 in \cite{Bratteli1997}, for instance.
Third, we included the uniform convergence in the statement, which was commented in the proof part of \cite{Nachtergaele2011}.

In Theorem~\ref{thm:existence_of_dynamics}, we fix an operator $\hat{A}$ and take the thermodynamic limit of $\gamma_t^\Lambda$.
Still, we can also increase the support of operator to the whole $\Gamma$.
When taking the thermodynamic limit of the dynamics, we can simultaneously increase the support of operator to the whole $\Gamma$.
\begin{cor}\label{cor:limit_of_operator}
    Let $(\mathcal{L}_\Lambda)_{\Lambda\Subset\Gamma}$ be a family of Liouvillians on finite subsets satisfying Assumption \ref{asm:locality}.
    Suppose that an operator $\hat{A}\in\mathfrak{A}$ is approximated by a sequence $(\hat{A}_n)_{n=1}^\infty$ on $\mathfrak{A}_{\loc}$ with support $\supp\hat{A}_n=\Lambda_n\Subset\Gamma$, i.e.,
    \begin{align*}
        \lim_{n\to\infty}\|\hat{A}_n-\hat{A}\|
        = 0.
    \end{align*}
    Then, it holds that
    \begin{align*}
       \gamma_t^\Gamma(\hat{A})
       = \lim_{n\to\infty}\gamma_t^{\Lambda_n}(\hat{A}_n).
    \end{align*}
\end{cor}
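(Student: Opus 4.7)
The plan is a standard triangle-inequality reduction to Theorem~\ref{thm:existence_of_dynamics} combined with the fact that each $\gamma_t^{\Lambda_n}$, being a unit-preserving completely positive map on the unital $C^*$-algebra $\mathfrak{A}$, is contractive in operator norm. Concretely, I would start from the splitting
\begin{align*}
    \|\gamma_t^{\Lambda_n}(\hat{A}_n) - \gamma_t^\Gamma(\hat{A})\|
    \le \|\gamma_t^{\Lambda_n}(\hat{A}_n - \hat{A})\|
    + \|\gamma_t^{\Lambda_n}(\hat{A}) - \gamma_t^\Gamma(\hat{A})\|,
\end{align*}
and estimate the two pieces independently.

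For the first summand, I would invoke the general fact that any unit-preserving positive linear map on a unital $C^*$-algebra is a contraction; this applies in particular to the extension $\gamma_t^{\Lambda_n}$ of $e^{t\mathcal{L}_{\Lambda_n}}$ constructed in Section~\ref{subsec:CP&CB}, which remains unit-preserving and completely positive. Hence the first term is bounded by $\|\hat{A}_n - \hat{A}\|$, which tends to $0$ by hypothesis, and this estimate is uniform in $t$ and $\Lambda_n$.

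For the second summand, I would apply Theorem~\ref{thm:existence_of_dynamics} to the fixed operator $\hat{A}\in\mathfrak{A}$: the net $(\gamma_t^\Lambda(\hat{A}))_{\Lambda\Subset\Gamma}$ converges in norm to $\gamma_t^\Gamma(\hat{A})$. Passing from net convergence to sequence convergence along $(\Lambda_n)$ uses that $(\Lambda_n)$ is cofinal in the directed set of finite subsets of $\Gamma$ ordered by inclusion, i.e.\ every $\Lambda_0 \Subset \Gamma$ is eventually contained in $\Lambda_n$; this is the natural reading of "$(\hat{A}_n)$ approximates $\hat{A}\in\mathfrak{A}$ with $\supp\hat{A}_n=\Lambda_n$". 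Combining the two estimates yields $\gamma_t^{\Lambda_n}(\hat{A}_n)\to\gamma_t^\Gamma(\hat{A})$.

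I expect no substantive obstacle beyond a bookkeeping subtlety: the cofinality of $(\Lambda_n)$ is essential, since without it the corollary fails (e.g.\ taking $\hat{A}_n=\hat{A}\in\mathfrak{A}_{\Lambda_0}$ constant would force $\gamma_t^{\Lambda_0}(\hat{A})=\gamma_t^\Gamma(\hat{A})$, which is generally false). Provided this mild exhaustion hypothesis on $(\Lambda_n)$ is either assumed or read in, the proof is a one-line consequence of Theorem~\ref{thm:existence_of_dynamics} and contractivity. The uniform convergence on compact time intervals stated in Theorem~\ref{thm:existence_of_dynamics} moreover transfers to the conclusion of the corollary with no extra work.
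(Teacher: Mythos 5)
Your proof is correct and is essentially the paper's own argument: the same triangle-inequality splitting, contractivity of the unit-preserving completely positive maps $\gamma_t^{\Lambda_n}$, and Theorem~\ref{thm:existence_of_dynamics} for the term $\|\gamma_t^{\Lambda_n}(\hat{A})-\gamma_t^\Gamma(\hat{A})\|$. Your cofinality caveat is well taken — the paper's proof passes from net to sequence convergence along $(\Lambda_n)$ without comment, implicitly reading the sequence of supports as exhausting $\Gamma$ — but this is a point about the statement's hypotheses rather than a divergence in method.
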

\begin{proof}
    Since $\|\gamma_t^{\Lambda_n}(\hat{B})\|\le\|\hat{B}\|$ for any $\hat{B}\in\mathfrak{A}$ and $n\in\mathbb{N}$, we have
    \begin{align*}
        \|\gamma_t^{\Lambda_n}(\hat{A}_n) - \gamma_t^\Gamma(\hat{A})\|
        \le \|\hat{A}_n-\hat{A}\| + \|\gamma_t^{\Lambda_n}(\hat{A})-\gamma_t^{\Gamma}(\hat{A})\|
        \xrightarrow{n\to\infty} 0.
    \end{align*}
\end{proof}

\section{Definition of nonequilibrium steady state}\label{sec:Def_of_NESS}
In this section, we introduce our definition of nonequilibrium steady states and their time-averaged counterpart.

\begin{dfn}[Nonequilibrium steady state]\label{dfn:NESS}
    Let $\omega$ be a state over the quantum spin system $\mathfrak{A}$.
    For an initial state $\omega$, we define the \emph{nonequilibrium steady state (NESS)} $\omega_{\sss}$ by a weak-$*$ cluster point of a net $(\omega\circ\gamma_t^\Gamma)_{t>0}$ of states.
    We write the total set of the NESS for an initial state $\omega$ by $\Sigma_{\NESS}(\omega)$, i.e.,
    \begin{align*}
        \Sigma_{\NESS}(\omega)
        \coloneqq \mleft\{\omega_{\sss} \;\middle|\; \text{$\exists(t_q)_q\colon$a subnet of $(t)_{t>0}$ s.t. }\omega_{\sss}=\wslim{q}\omega\circ\gamma_{t_q}^\Gamma \mright\}.
    \end{align*}
\end{dfn}

\begin{dfn}[Time-averaged nonequilibrium steady state]\label{dfn:TANESS}
    Let $\omega$ be a state on the quantum spin systems $\mathfrak{A}$.
    For an initial state $\omega$, we define the \emph{time-averaged nonequilibrium steady state (TANESS)} $\omega_{\sss}^{\mathrm{ave}}$ by a weak-$*$ cluster point of a net $\left(\frac{1}{T}\int_0^T \omega\circ\gamma_t^\Gamma \dif t\right)_{T>0}$ of states, where we define the integral of a linear-functional-valued continuous function $\varphi_t$ over $\mathfrak{A}$ by 
    \begin{align*}
        \left(\int_0^T \varphi_t\dif t\right)(\hat{A})
        \coloneqq \int_0^T \varphi_t(\hat{A})\dif t
    \end{align*}
    for $\hat{A}\in\mathfrak{A}$.
    We write the total set of the TANESS for an initial state $\omega$ by $\Sigma_{\TANESS}(\omega)$, i.e.,
    \begin{align*}
        \Sigma_{\TANESS}(\omega)
        \coloneqq \mleft\{\omega_{\sss}^{\mathrm{ave}} \;\middle|\; \text{$\exists(T_q)_q\colon$a subnet of $(t)_{t>0}$ s.t. }\omega_{\sss}^{\mathrm{ave}}=\wslim{q}\frac{1}{T_q}\int_0^{T_q}\omega\circ\gamma_t^\Gamma\dif t \mright\}.
    \end{align*}
\end{dfn}

\begin{rmk}\label{rmk:Bochner}
    We comment on the integral $\int_0^T\omega\circ\gamma_t^\Gamma\dif t$.
    As mentioned above, it is defined via the integral $\int_0^T\omega\circ\gamma_t^\Gamma(\hat{A})\dif t$ of a complex-valued function $t\in\mathbb{R}\mapsto \omega\circ\gamma_t^\Gamma(\hat{A})\in\mathbb{C}$.
    
    Meanwhile, we can also characterize it in another manner. 
    The semigroup $(\gamma_t^\Gamma)_{t\ge 0}$ is strongly continuous, so the Bochner integral $\int_0^T \gamma_t^\Gamma(\hat{A})\dif t$ is well-defined for each $\hat{A}\in\mathfrak{A}$.
    Thereby we can define a map $\int_0^T\gamma_t^\Gamma(\bullet)\dif t \colon \hat{A}\in\mathfrak{A}\mapsto\int_0^T \gamma_t^\Gamma(\hat{A})\dif t\in\mathfrak{A}$ and the composition $\omega\circ\int_0^T\gamma_t^\Gamma(\bullet)\dif t\colon\mathfrak{A}\to\mathbb{C}$.
    In fact, $\omega\circ\int_0^T\gamma_t^\Gamma(\bullet)\dif t$ is equivalent to $\int_0^T\omega\circ\gamma_t^\Gamma\dif t$.
    Indeed, for each $\hat{A}\in\mathfrak{A}$, we have
    \begin{align*}
        \omega\circ\int_0^T\gamma_t^\Gamma(\hat{A})\dif t
        = \int_0^T \omega\circ\gamma_t^\Gamma(\hat{A})\dif t
    \end{align*}
    because $\omega$ is bounded.

    In the same way, we can define the map $\int_0^T\gamma_t^\Lambda(\bullet)\dif t\colon\mathfrak{A}\to\mathfrak{A}$ and show
    \begin{align*}
        \omega\circ\int_0^T\gamma_t^\Lambda(\bullet)\dif t
        = \int_0^T\omega\circ\gamma_t^\Lambda\dif t
    \end{align*}
    for any finite subset $\Lambda\Subset\Gamma$.
    The argument here is necessary for the proof of the main theorem on TANESS.
\end{rmk}

For unitary dynamics where all of the dissipators $\hat{L}_\alpha$ vanish, the TANESS defined here is equivalent to the conventional notion of NESS in \cite{Jaksic2002}. 
The NESS and TANESS for an initial state always exist by Theorem 2.3.15 in \cite{Bratteli1987} (see Theorem \ref{thm:compactness_of_states_set} in Appendix) and are also states over $\mathfrak{A}$, but are not necessarily unique, i.e.,
\begin{align*}
    |\Sigma_{\NESS}(\omega)|
    \ge 1,
    \quad
    |\Sigma_{\TANESS}(\omega)|
    \ge 1.
\end{align*}

If a state $\omega_1$ over $\mathfrak{A}$ is an NESS (or a TANESS) for an initial state $\omega$, then the evolved state $\omega_1\circ\gamma_t^\Gamma$ for an interval $t\ge 0$ is also an NESS (or a TANESS) though it does not necessarily equal the original NESS (or TANESS) $\omega_1$.
Therefore, we can say that an NESS (or a TANESS) is not necessarily invariant under the time evolution $\gamma_t^\Gamma$ but the set $\Sigma_{\NESS}(\omega)$ (or $\Sigma_{\TANESS}(\omega)$) is closed under the time evolution.
In the particular case when the NESS (or TANESS) is unique for an initial state, it is invariant under the time evolution.

The expectation value of an operator $\hat{A}$ under an NESS $\omega_{\sss}$ or a TANESS $\omega_{\sss}^{\mathrm{ave}}$ is computed by
\begin{align*}
    \omega_{\sss}(\hat{A})
    = \lim_q \lim_\Lambda \omega\circ\gamma_{t_q}^\Lambda(\hat{A}),
    \quad
    \omega_{\sss}^{\mathrm{ave}}(\hat{A})
    = \lim_q \frac{1}{T_q} \int_0^{T_q}\lim_\Lambda\omega\circ\gamma_{t}^\Lambda(\hat{A})\dif t,
\end{align*}
respectively.
We can use Proposition~\ref{cor:limit_of_operator} to calculate the thermodynamic limit $\Lambda\to\Gamma$. 
It is noteworthy that the thermodynamic limit $\Lambda\to\Gamma$ is taken first and then the long-time limit $t_q\to\infty$ or $T_q\to\infty$ is done second.
This order of the limits does not commute in general.
In Section~\ref{sec:example}, we will see an example where different orders of the limits yield distinct values.
Before the case study, we will establish a sufficient condition that the limits commute each other in the next section.

\section{A sufficient condition that thermodynamic and long-time limits commute}\label{sec:Main_theorems}
As noted in the previous section, the order of the thermodynamic limit $\Lambda\to\Gamma$ and the long-time limit $t_q\to\infty$ or $T_q\to\infty$ is a sensitive issue for defining NESSs and TANESSs. 
In this section, we elucidate a sufficient condition that they commute each other.
We describe the statement of main theorems for NESSs and TANESSs in Sec.~\ref{subsec:main_theorems}, prepare some lemmas in Sec.~\ref{subsec:behavior_on_finite_subsets}, and then prove the main theorems in Sec.~\ref{subsec:proof_of_main_theorems}.
In Sec.~\ref{subsec:TDL_of_finite_system_NESS_and_TANESS} we give a remark about finite system NESSs and TANESSs.

\subsection{Main theorems}\label{subsec:main_theorems}
In this subsection, we introduce the statement of main theorems and give some remarks on them.
First, we see a theorem about NESSs.
Hereafter, we define $\min\emptyset\coloneqq+\infty$.

\begin{thm}\label{thm:main_theorem}
    Let $(\mathcal{L}_\Lambda)_{\Lambda\Subset\Gamma}$ be a family of Liouvillians on finite subsets satisfying Assumption \ref{asm:locality}.
    For each finite subset $\Lambda\Subset\Gamma$, let $\sigma_{B(\mathfrak{A}_\Lambda)}(\mathcal{L}_\Lambda)$ be the spectrum of the linear map $\mathcal{L}_\Lambda$ from the finite-dimensional vector space $\mathfrak{A}_\Lambda$ to itself
    and $\sigma_{B(\mathfrak{A}_\Lambda)}^{\mathrm{ex}}$ the set of all non-semisimple eigenvalues of $\mathcal{L}_\Lambda$.
    Suppose that
    \begin{enumerate}[($\text{A}$1)]
        \item there exists a positive number $\Delta$ such that for any finite subset $\Lambda\Subset\Gamma$ we have
        \begin{align}\label{eq:line_gap}
            \Delta_\Lambda
            \coloneqq \min\{ |\operatorname{Re}\lambda| \mid \lambda\in\sigma_{B(\mathfrak{A}_\Lambda)}(\mathcal{L}_\Lambda)\setminus\{0\} \}
            \ge \Delta
            > 0;
        \end{align}
        \item there exist (finite) positive numbers $\Delta^{\mathrm{ex}}$, $\epsilon\in (0,\Delta^{\mathrm{ex}})$, and $\kappa$ such that for any finite subset $\Lambda\Subset\Gamma$ we have both
        \begin{align}\label{eq:existence_of_ex_gap}
            \Delta_\Lambda^{\mathrm{ex}}
            \coloneqq \min\{|\operatorname{Re}\lambda| \mid\lambda\in\sigma_{B(\mathfrak{A}_\Lambda)}^{\mathrm{ex}}(\mathcal{L}_\Lambda)\}
            \ge \Delta^{\mathrm{ex}}
            > 0
        \end{align}
        and 
        \begin{align*}
            \kappa_\Lambda(\mathcal{V}_\Lambda^\epsilon)
            \coloneqq \|\mathcal{V}_\Lambda^\epsilon\|_{\mathfrak{A}_\Lambda\to\C^{d^{2|\Lambda|}}}\|(\mathcal{V}_\Lambda^\epsilon)^{-1}\|_{\C^{d^{2|\Lambda|}}\to\mathfrak{A}_\Lambda}
            \le \kappa
            < \infty,
        \end{align*}
        where $\mathcal{V}_\Lambda^\epsilon\in B(\mathfrak{A}_\Lambda,\C^{d^{2|\Lambda|}})$ is an invertible linear map that transforms $(\Delta^{\mathrm{ex}}-\epsilon)^{-1}\mathcal{L}_\Lambda$ into the Jordan canonical form $\mathcal{J}_\Lambda^{\epsilon}\in\mathrm{M}_{d^{2|\Lambda|}}$,
        \begin{align*}
            \mathcal{J}_\Lambda^\epsilon
            = \mathcal{V}_\Lambda^\epsilon \circ \frac{1}{\Delta^{\mathrm{ex}}-\epsilon}\mathcal{L}_\Lambda \circ (\mathcal{V}_\Lambda^\epsilon)^{-1},
        \end{align*}
        and the norms $\|\bullet\|_{\mathfrak{A}_\Lambda\to\C^{d^{2|\Lambda|}}}$ and $\|\bullet\|_{\C^{d^{2|\Lambda|}}\to\mathfrak{A}_\Lambda}$ are defined by
        \begin{gather*}
            \|\mathcal{S}\|_{\mathfrak{A}_\Lambda\to\C^{d^{2|\Lambda|}}}
            = \sup\{ {\|\mathcal{S}(\hat{A})\|}_2 \mid \hat{A}\in\mathfrak{A}_\Lambda,\ \|\hat{A}\|\le 1 \},
            \quad
            \mathcal{S}\colon\mathfrak{A}_\Lambda\to\C^{d^{2|\Lambda|}}, \\
            \|\mathcal{T}\|_{\C^{d^{2|\Lambda|}}\to\mathfrak{A}_\Lambda}
            = \sup\{ \|\mathcal{T}(\boldsymbol{v})\| \mid \boldsymbol{v}\in\C^{d^{2|\Lambda|}},\ {\|\boldsymbol{v}\|}_2\le 1 \},
            \quad
            \mathcal{T}\colon\C^{d^{2|\Lambda|}}\to\mathfrak{A}_\Lambda.
        \end{gather*}
        Here, $\|\bullet\|_2$ is the 2-norm of numerical vectors.
    \end{enumerate}
    Then, for any state $\omega$ over the quantum spin system $\mathfrak{A}$, the limit
    \begin{align}\label{eq:LTLthenTDL}
        \wslim{\Lambda} \wslim{t\to\infty}\omega\circ\gamma_t^{\Lambda}
    \end{align}
    exists and is the unique NESS for the initial state $\omega$. 
\end{thm}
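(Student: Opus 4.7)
The plan breaks the proof into four steps: (i) derive a uniform-in-$\Lambda$ exponential decay estimate for $e^{t\mathcal{L}_\Lambda}$ on $\mathfrak{A}_\Lambda$; (ii) define the long-time limit state $\omega_\Lambda^{\sss}:=\lim_{t\to\infty}\omega\circ\gamma_t^{\Lambda}$ for each finite $\Lambda$; (iii) use the decay estimate together with Theorem~\ref{thm:existence_of_dynamics} to show that $t\mapsto\omega\circ\gamma_t^\Gamma(A)$ is Cauchy for every local $A$; (iv) identify the iterated limit \eqref{eq:LTLthenTDL} with this single limit and verify that it is the unique NESS.

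The heart of the argument is step (i). Since $\Delta_\Lambda^{\mathrm{ex}}\ge\Delta^{\mathrm{ex}}>0$, the eigenvalue $0$ is semisimple, so the spectral projection $\mathcal{P}_\Lambda$ onto $\ker\mathcal{L}_\Lambda$ is the pullback of a diagonal block of $\mathcal{J}_\Lambda^\epsilon$. For any nonzero Jordan block $\mu\, I+N$ of $\mathcal{J}_\Lambda^\epsilon$ with nilpotent part $N$, the elementary estimate $\sum_{j\ge 0}s^j/j!\le e^s$ yields $\|e^{s(\mu I+N)}\|_2\le e^{s(\operatorname{Re}\mu+1)}$. Dissipativity ($\operatorname{Re}\mu\le 0$) combined with $|\operatorname{Re}\mu|\ge\Delta^{\mathrm{ex}}/(\Delta^{\mathrm{ex}}-\epsilon)=1+\epsilon/(\Delta^{\mathrm{ex}}-\epsilon)$ for non-semisimple blocks makes this at most $e^{-s\epsilon/(\Delta^{\mathrm{ex}}-\epsilon)}$, and semisimple nonzero blocks decay as $e^{-s\Delta/(\Delta^{\mathrm{ex}}-\epsilon)}$. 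Substituting $s=t(\Delta^{\mathrm{ex}}-\epsilon)$ and transporting back through $\mathcal{V}_\Lambda^\epsilon$ using the uniform bound $\kappa_\Lambda(\mathcal{V}_\Lambda^\epsilon)\le\kappa$ gives
\[
\|e^{t\mathcal{L}_\Lambda}-\mathcal{P}_\Lambda\|_{B(\mathfrak{A}_\Lambda)}\le\kappa\, e^{-ct},\qquad c:=\min(\Delta,\epsilon),
\]
uniformly in $\Lambda$.

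For any local $A$ with $\supp A\subset\Lambda$, step (ii) is immediate: $\omega\circ\gamma_t^\Lambda(A)=\omega\circ e^{t\mathcal{L}_\Lambda}(A)\to\omega\circ\mathcal{P}_\Lambda(A)=:\omega_\Lambda^{\sss}(A)$ at rate $\kappa e^{-ct}\|A\|$. For step (iii), the three-term estimate
\[
|\omega\circ\gamma_t^\Gamma(A)-\omega\circ\gamma_{t'}^\Gamma(A)|\le\|\gamma_t^\Gamma(A)-\gamma_t^\Lambda(A)\|+2\kappa e^{-c\min(t,t')}\|A\|+\|\gamma_{t'}^\Lambda(A)-\gamma_{t'}^\Gamma(A)\|
\]
proceeds as follows: first fix $t,t'$ large so that the middle term is small, then for each such pair choose $\Lambda$ large via Theorem~\ref{thm:existence_of_dynamics} to make the outer terms small. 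Hence $\omega\circ\gamma_t^\Gamma(A)$ is Cauchy and converges to some $\omega_{\sss}(A)$, which extends by density and contractivity to a state on $\mathfrak{A}$. A parallel three-term estimate, now with $\gamma_t^\Lambda$ playing the role of the inner term, gives $\omega_\Lambda^{\sss}\to\omega_{\sss}$ weak-$*$, completing step (iv); since $\omega_{\sss}$ is a full limit, it is the only weak-$*$ cluster point of $(\omega\circ\gamma_t^\Gamma)_{t>0}$, hence the unique NESS.

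The principal obstacle is the $\Lambda$-uniform estimate in step (i): Jordan blocks can be as large as $d^{2|\Lambda|}$, so the polynomial factor $\sum_j s^j/j!$ threatens to spoil any $\Lambda$-independent decay rate, and the change of basis between the $2$-norm on $\mathbb{C}^{d^{2|\Lambda|}}$ and the operator norm on $\mathfrak{A}_\Lambda$ is a priori dimension-sensitive. The $\epsilon$-buffer in the rescaling turns the polynomial into a controlled exponential with rate $\epsilon$ independent of $|\Lambda|$, and the condition number bound in (A2) makes the change of basis $\Lambda$-uniform with constant $\kappa$. This is exactly why the hypothesis is posed in terms of $\kappa_\Lambda(\mathcal{V}_\Lambda^\epsilon)$ together with both $\Delta_\Lambda$ and $\Delta_\Lambda^{\mathrm{ex}}$.
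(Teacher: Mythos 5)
Your proposal is correct, and its analytic core is exactly the estimate at the heart of the paper's proof: rescaling $\mathcal{L}_\Lambda$ by $(\Delta^{\mathrm{ex}}-\epsilon)^{-1}$, bounding the nilpotent parts by $\|N\|\le 1$, splitting blocks by semisimplicity to get the rates $e^{-\Delta t}$ and $e^{-\epsilon t}$, and transporting back through $\mathcal{V}_\Lambda^\epsilon$ with the condition-number bound to obtain $\|e^{t\mathcal{L}_\Lambda}-\mathcal{P}_\Lambda\|_{B(\mathfrak{A}_\Lambda)}\le\kappa e^{-\min\{\Delta,\epsilon\}t}$ uniformly in $\Lambda$, applied only to $\hat{A}$ with $\supp\hat{A}\subset\Lambda$ so that no dimension factor enters. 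Where you genuinely differ is the limit bookkeeping: the paper extracts a weak-$*$ convergent subnet of the finite-volume steady states $(\omega\circ\Pi_\Lambda)_\Lambda$ using compactness of the state space, shows its limit coincides with \emph{every} NESS, and then upgrades to convergence of the full net via the subnet-of-subnets criterion (Theorem~\ref{thm:subnet_of_subnet}); you instead prove directly that $t\mapsto\omega\circ\gamma_t^\Gamma(\hat{A})$ is Cauchy for each local $\hat{A}$ and then show $\omega\circ\Pi_\Lambda\to\omega_{\sss}$ by a second three-term estimate. Your route dispenses with Banach--Alaoglu and the net/subnet machinery, a small gain in elementarity, while the paper's route identifies all cluster points of $(\omega\circ\gamma_t^\Gamma)_{t>0}$ with the same limit in one stroke; the uniqueness conclusion comes out either way. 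Two points to tighten in a full write-up: (a) you invoke dissipativity ($\operatorname{Re}\lambda\le 0$), the presence of the eigenvalue $0$, and its semisimplicity; these are Lemma~\ref{lem:spectrum_in_finite_systems} in the paper and should be cited or reproved, and (b) for the iterated limit \eqref{eq:LTLthenTDL} to make sense you need the inner limit $\wslim{t\to\infty}\omega\circ\gamma_t^\Lambda$ to exist as a state on all of $\mathfrak{A}$, not merely on observables supported in $\Lambda$; for fixed $\Lambda$ this is supplied by the completely bounded extension as in Lemma~\ref{lem:NESS_in_finite_systems}, where the factor $d^{|\Lambda|}$ is harmless because $\Lambda$ is fixed there.
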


Roughly speaking, Theorem~\ref{thm:main_theorem} says that the thermodynamic limit and the long-time limit commute under the assumptions (A1) and (A2).
We can express the long-time limit $\wslim{t\to\infty}\omega\circ\gamma_t^\Lambda$ by using the spectral projection of the Liouvillian $\mathcal{L}_\Lambda$ over $\mathfrak{A}_\Lambda$ as shown in Lemma~\ref{lem:NESS_in_finite_systems} below, so commuting the limits may simplify calculating the NESS.
The limit in Eq.~\eqref{eq:LTLthenTDL} means the thermodynamic limit of finite system NESSs, which should be distinguished from the NESS in the infinite system $\Gamma$.
Furthermore, the theorem also yields the uniqueness of the NESS for a given initial state.
In particular, we can calculate the NESS by taking the long-time limit $t\to\infty$ without considering a certain subnet.
Note that Theorem~\ref{thm:main_theorem} allows different NESSs to exist for different initial states. 

\begin{figure}
    \centering
    \begin{tikzpicture}[scale=1.]
        
        \draw[-{Latex[length=3mm]}] (-4,0) -- (3,0) node[right] {$\text{Re}$};
        \draw[-{Latex[length=3mm]}] (2,-3) -- (2,3) node[above] {$\text{Im}$};
        
        \draw[blue!70, dashed, thick] (-2,-3) -- (-2,3);
        \draw[blue!70, dashed, thick] (1,-3) -- (1,3);
    
        \foreach \x/\y in {-3/1, -3/-1, 1/2, 1/-2, -1/0, -0.5/0.5, -0.5/-0.5, 0.5/0, 2/0} {
            \node[cross out, draw=red, thick, minimum size=6pt, inner sep=0pt] at (\x,\y) {};
        }
        
        \foreach \x/\y in {-2/1.5, -2/-1.5, -2.5/0} {
            \node[diamond, draw=red, thick, fill=red!30, minimum size=10pt, inner sep=0pt] at (\x,\y) {};
        }
        
        \draw[blue!70, latex-latex] (-2,2.5) -- (2,2.5) node[midway, above] {$\Delta^{\text{ex}}_\Lambda$};
        \draw[blue!70, latex-latex] (2,0) -- (0.7,-0.75) node[midway, below] {$\Delta^{\text{p}}_\Lambda$};
        \draw[blue!70, latex-latex] (2,-2.5) -- (1,-2.5) node[midway, below] {$\Delta_\Lambda$};
        
        \draw[blue!70, thick, dashed] (2,-1.5) to[out=180, in=270] (0.5,0) to[out=90, in=180] (2,1.5);
        
        \node[draw, rectangle, left, align=center] at (-2,3) {$\sigma_{B(\mathfrak{A}_\Lambda)}(\mathcal{L}_\Lambda)$};
    
        \node[right, align=center] at (2,0.2) {$0$};
        
        \node at (0,-4) {
            \begin{tabular}{@{}c@{\hspace{1em}}c@{\hspace{1em}}l@{}}
                \tikz{\node[cross out, draw=red, thick, minimum size=6pt, inner sep=0pt] {};} & \textcolor{red}{:} & \textcolor{red}{semisimple eigenvalue} \\[1pt]
                \tikz{\node[diamond, draw=red, thick, fill=red!30, minimum size=10pt, inner sep=0pt] {};} & \textcolor{red}{:} & \textcolor{red}{non-semisimple eigenvalue}
            \end{tabular}
        };
    
    \end{tikzpicture}
    \caption{A schematic figure for three kinds of gaps in Theorems~\ref{thm:main_theorem} and \ref{thm:main_theorem2}: $\Delta_\Lambda$, $\Delta_\Lambda^{\mathrm{p}}$, and $\Delta_\Lambda^{\mathrm{ex}}$ for a finite subset $\Lambda\Subset\Gamma$.}
    \label{fig:gaps}
\end{figure}
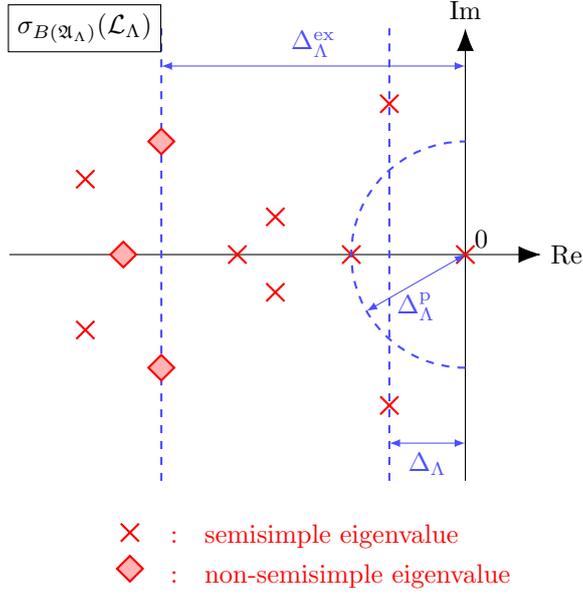

We also give some comments on the two assumptions in Theorem~\ref{thm:main_theorem}.
See Figure ~\ref{fig:gaps} for the gaps $\Delta_\Lambda$ and $\Delta_\Lambda^{\mathrm{ex}}$ that appear in the assumptions.
The first assumption (A1) states that the Liouvillians have a uniform spectral gap for any finite subset $\Lambda\Subset\Gamma$.
It has been well-known in various branches of mathematics that a spectral gap in a generator of time evolution controls the dynamic behavior of the corresponding system, such as mixing times in Markov chains (see e.g. \cite{Levin2017}) and decay of correlations in closed quantum many-body systems described as $C^*$-dynamical systems (see e.g. \cite{Hastings2006}), so that this assumption (A1) is a common characteristic with several other settings.
Specifically, the notion of spectral gap in $C^*$-dynamical systems plays a central role in operator algebraic investigations for closed qunatum lattice systems as in \cite{Affleck1988,Matsui2001,Bachmann2012,Matsui2013,Nachtergaele2019,Nachtergaele2022,Kapustin2022,Ogata2023,Hastings2015,Bachmann2020,Prodan2016}.
Furthermore, some methods are known to evaluate the spectral gap.
An example is logarithmic Sobolev inequalities, which have been used for various models from classical stochastic processes to quantum dynamical semigroups, such as Markov processes on classical spin systems in \cite{Holley1987,Stroock1992}, exclusion processes in \cite{Bertini2004}, and Lindblad dynamics in \cite{Kastoryano2013}.

In contrast, the second assumption (A2) appears only in the case of open quantum systems considered in this paper because closed quantum systems, which have self-adjoint Liouvillians, automatically satisfy this.
The quantity $\kappa_\Lambda(\mathcal{V}_\Lambda^\epsilon)$ is so-called \emph{condition number} of $\mathcal{V}_\Lambda^\epsilon$.
The condition number measures the normality of an operator and controls the stability of spectrum of an operator against perturbations (see e.g. \cite{Trefethen2005}), so it has often been utilized in applied mathematics, particularly the field of numerical analysis \cite{Saad2011}, and non-Hermitian physics \cite{Nakai2024,Shimomura2024}.
The assumption (A2) requests the condition numbers for the invertible matrices that transform the (normalized) Liouvillians into the Jordan canonical forms should be uniformly bounded for any finite subsets $\Lambda\Subset\Gamma$.
In Section~\ref{sec:example}, we will see that a concrete model has nonzero $\Delta,\Delta^{\text{ex}}$ and unbounded condition number, and thereby violates the assumption (A2), yielding an NESS that differs from Eq.~\eqref{eq:LTLthenTDL}.

When we define the condition number of $\mathcal{V}_\Lambda^\epsilon$, there is an ambiguity of the transformation $\mathcal{V}_\Lambda^\epsilon\mapsto\mathcal{A}\circ\mathcal{V}_\Lambda^\epsilon$ by a linear map $\mathcal{A}\in B(\mathbb{C}^{d^{2|\Lambda|}})$ respecting a symmetry $\mathcal{A}\circ\mathcal{J}_\Lambda^\epsilon\circ\mathcal{A}^{-1}=\mathcal{J}_\Lambda^\epsilon$.
Since $\mathcal{A}$ is not necessarily unitary, the transformation generically changes the value of condition number, $\kappa_\Lambda(\mathcal{A}\circ\mathcal{V}_\Lambda^\epsilon)\neq\kappa_\Lambda(\mathcal{V}_\Lambda^\epsilon)$.
Even when $\kappa_\Lambda(\mathcal{V}_\Lambda^\epsilon)$ is not bounded in $\Lambda$, we may have bounded $\kappa_\Lambda(\mathcal{A}_\Lambda^\epsilon\circ\mathcal{V}_\Lambda^\epsilon)$ after an adequate transformation $\mathcal{V}_\Lambda^\epsilon\mapsto\mathcal{A}_\Lambda^\epsilon\circ\mathcal{V}_\Lambda^\epsilon$.
A standard choice of $\mathcal{A}\circ\mathcal{V}_\Lambda^\epsilon$ has been discussed in \cite{Sluis1969,Nakai2024} for diagonalizable $\mathcal{L}_\Lambda$.

The second assumption (A2) is worth further considering.
The number $\Delta^{\mathrm{ex}}$ is a uniform gap of non-semisimple eigenvalues of $\mathcal{L}_\Lambda$ for any $\Lambda\Subset\Gamma$, so we call it the \emph{exceptional gap} named after the exceptional point introduced in \cite{Kato1995}.
The existence of $\Delta^{\mathrm{ex}}$ that meets Eq.~\eqref{eq:existence_of_ex_gap} is automatically achieved under the first assumption (A1).
Still, it is nontrivial whether we can take an adequate value of $\Delta^{\mathrm{ex}}$ for which the condition number $\kappa_\Lambda(\mathcal{V}_\Lambda^\epsilon)$ is bounded by a finite constant.

If $\mathcal{L}_\Lambda$ is diagonalizable for all $\Lambda\Subset\Gamma$, Theorem~\ref{thm:main_theorem} is reduced to the following form.
\begin{cor}
    Let $(\mathcal{L}_\Lambda)_{\Lambda\Subset\Gamma}$ be a family of Liouvillians on finite subsets satisfying Assumption \ref{asm:locality}.
    For each finite subset $\Lambda\Subset\Gamma$, let $\sigma_{B(\mathfrak{A}_\Lambda)}(\mathcal{L}_\Lambda)$ be the spectrum of the linear map $\mathcal{L}_\Lambda$ from the finite-dimensional vector space $\mathfrak{A}_\Lambda$ to itself.
    Suppose that
    \begin{enumerate}[($\text{A}$1)]
        \setcounter{enumi}{-1}
        \item for any finite subset $\Lambda\Subset\Gamma$, $\mathcal{L}_\Lambda$ is diagonalizable;
        \item there exists a positive number $\Delta$ such that for any finite subset $\Lambda\Subset\Gamma$ we have
        \begin{align*}
            \Delta_\Lambda
            \coloneqq \min\{ |\operatorname{Re}\lambda| \mid \lambda\in\sigma_{B(\mathfrak{A}_\Lambda)}(\mathcal{L}_\Lambda)\setminus\{0\} \}
            \ge \Delta
            > 0;
        \end{align*}
        \item there exist a positive number $\kappa$ such that for any finite subset $\Lambda\Subset\Gamma$ we have 
        \begin{align*}
            \kappa_\Lambda(\mathcal{V}_\Lambda)
            \coloneqq \|\mathcal{V}_\Lambda\|_{\mathfrak{A}_\Lambda\to\C^{d^{2|\Lambda|}}}\|(\mathcal{V}_\Lambda)^{-1}\|_{\C^{d^{2|\Lambda|}}\to\mathfrak{A}_\Lambda}
            \le \kappa
            < \infty,
        \end{align*}
        where $\mathcal{V}_\Lambda\in B(\mathfrak{A}_\Lambda,\C^{d^{2|\Lambda|}})$ is an invertible linear map that diagonalizes $\mathcal{L}_\Lambda$.
    \end{enumerate}
    Then, for any state $\omega$ over the quantum spin system $\mathfrak{A}$, the limit
    \begin{align*}
        \wslim{\Lambda} \wslim{t\to\infty}\omega\circ\gamma_t^{\Lambda}
    \end{align*}
    exists and is the unique NESS for the initial state $\omega$. 
\end{cor}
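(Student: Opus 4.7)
The plan is to derive this corollary as a direct specialization of Theorem~\ref{thm:main_theorem}: diagonalizability collapses the exceptional-gap component of assumption (A2) to a triviality, and reduces the remaining content to the bound on the condition number of $\mathcal{V}_\Lambda$ itself. So the strategy is to check that hypotheses (A0), (A1$'$), (A2$'$) of the corollary imply the hypotheses (A1), (A2) of Theorem~\ref{thm:main_theorem}, and then simply invoke that theorem.

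First I would observe that the spectral-gap hypothesis (A1$'$) is literally identical to (A1), so no work is required there. Next, since (A0) says that $\mathcal{L}_\Lambda$ is diagonalizable for every $\Lambda\Subset\Gamma$, every eigenvalue of $\mathcal{L}_\Lambda$ is semisimple, which means $\sigma_{B(\mathfrak{A}_\Lambda)}^{\mathrm{ex}}(\mathcal{L}_\Lambda)=\emptyset$. With the convention $\min\emptyset\coloneqq+\infty$ adopted before Theorem~\ref{thm:main_theorem}, this gives $\Delta_\Lambda^{\mathrm{ex}}=+\infty$ uniformly in $\Lambda$, so the first half of (A2) is satisfied for any positive choice of $\Delta^{\mathrm{ex}}$. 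I would then fix any such $\Delta^{\mathrm{ex}}>0$ and any $\epsilon\in(0,\Delta^{\mathrm{ex}})$ once and for all; since the specific numerical value is irrelevant for the qualitative statement, the simplest choice is $\Delta^{\mathrm{ex}}=1$, $\epsilon=\tfrac{1}{2}$.

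The key observation is then that rescaling a diagonalizable operator by a nonzero scalar preserves its eigenvectors and merely rescales its eigenvalues. Concretely, if $\mathcal{V}_\Lambda$ diagonalizes $\mathcal{L}_\Lambda$, then the same $\mathcal{V}_\Lambda$ puts $(\Delta^{\mathrm{ex}}-\epsilon)^{-1}\mathcal{L}_\Lambda$ into the diagonal form $(\Delta^{\mathrm{ex}}-\epsilon)^{-1}\mathcal{V}_\Lambda\circ\mathcal{L}_\Lambda\circ\mathcal{V}_\Lambda^{-1}$, which is the Jordan canonical form since all Jordan blocks are of size one. So I may take $\mathcal{V}_\Lambda^\epsilon=\mathcal{V}_\Lambda$; because the condition number is scale-invariant in this sense,
\begin{align*}
    \kappa_\Lambda(\mathcal{V}_\Lambda^\epsilon)
    = \|\mathcal{V}_\Lambda\|_{\mathfrak{A}_\Lambda\to\C^{d^{2|\Lambda|}}}\|\mathcal{V}_\Lambda^{-1}\|_{\C^{d^{2|\Lambda|}}\to\mathfrak{A}_\Lambda}
    = \kappa_\Lambda(\mathcal{V}_\Lambda)
    \le \kappa,
\end{align*}
using (A2$'$). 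Hence the second half of (A2) is verified with the same constant $\kappa$.

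With both (A1) and (A2) of Theorem~\ref{thm:main_theorem} in hand, the conclusion follows immediately: for every state $\omega$, the weak-$*$ limit $\wslim{\Lambda}\wslim{t\to\infty}\omega\circ\gamma_t^\Lambda$ exists and is the unique NESS for the initial state $\omega$. There is no substantial obstacle in this argument; the only point that requires a brief comment is the empty-set convention for $\Delta_\Lambda^{\mathrm{ex}}$, together with the trivial but essential fact that a rescaling $\mathcal{L}_\Lambda\mapsto c\mathcal{L}_\Lambda$ by $c\in\C\setminus\{0\}$ leaves the diagonalizing map, and hence its condition number, unchanged.
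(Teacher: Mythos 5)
Your proposal is correct and matches the paper's intent exactly: the corollary is stated as a specialization of Theorem~\ref{thm:main_theorem}, and your verification — (A1$'$) is literally (A1), diagonalizability gives $\sigma^{\mathrm{ex}}_{B(\mathfrak{A}_\Lambda)}(\mathcal{L}_\Lambda)=\emptyset$ so $\Delta^{\mathrm{ex}}_\Lambda=+\infty$ by the convention $\min\emptyset=+\infty$, and the same $\mathcal{V}_\Lambda$ serves as $\mathcal{V}_\Lambda^\epsilon$ with unchanged condition number since scaling preserves the diagonalizing map — is precisely the reduction the paper has in mind.
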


We also have a similar condition on the TANESS as Theorem~\ref{thm:main_theorem}.

\begin{thm}\label{thm:main_theorem2}
    Let $(\mathcal{L}_\Lambda)_{\Lambda\Subset\Gamma}$ be a family of Liouvillians on finite subsets satisfying Assumption \ref{asm:locality}.
    For each finite subset $\Lambda\Subset\Gamma$, let $\sigma_{B(\mathfrak{A}_\Lambda)}(\mathcal{L}_\Lambda)$ be the spectrum of the linear map $\mathcal{L}_\Lambda$ from the finite-dimensional vector space $\mathfrak{A}_\Lambda$ to itself
    and $\sigma_{B(\mathfrak{A}_\Lambda)}^{\mathrm{ex}}$ the set of all non-semisimple eigenvalues of $\mathcal{L}_\Lambda$.
    Suppose that
    \begin{enumerate}[($\text{B}$1)]
        \item there exists a positive number $\Delta^{\mathrm{p}}$ such that for any finite subset $\Lambda\Subset\Gamma$ we have
        \begin{align}\label{eq:point_gap}
            \Delta_\Lambda^{\mathrm{p}}
            \coloneqq \min\{ |\lambda| \mid \lambda\in\sigma_{B(\mathfrak{A}_\Lambda)}(\mathcal{L}_\Lambda)\setminus\{0\} \}
            \ge \Delta^{\mathrm{p}}
            > 0;
        \end{align}
        \item there exist (finite) positive numbers $\Delta^{\mathrm{ex}}$, $\epsilon\in (0,\Delta^{\mathrm{ex}})$, and $\kappa$ such that for any finite subset $\Lambda\Subset\Gamma$ we have both
        \begin{align}\label{eq:existence_of_ex_gap2}
            \Delta_\Lambda^{\mathrm{ex}}
            \coloneqq \min\{|\operatorname{Re}\lambda| \mid\lambda\in\sigma_{B(\mathfrak{A}_\Lambda)}^{\mathrm{ex}}(\mathcal{L}_\Lambda)\}
            \ge \Delta^{\mathrm{ex}}
            > 0
        \end{align}
        and 
        \begin{align*}
            \kappa_\Lambda(\mathcal{V}_\Lambda^\epsilon)
            \coloneqq \|\mathcal{V}_\Lambda^\epsilon\|_{\mathfrak{A}_\Lambda\to\C^{d^{2|\Lambda|}}}\|(\mathcal{V}_\Lambda^\epsilon)^{-1}\|_{\C^{d^{2|\Lambda|}}\to\mathfrak{A}_\Lambda}
            \le \kappa
            < \infty,
        \end{align*}
        where $\mathcal{V}_\Lambda^\epsilon\in B(\mathfrak{A}_\Lambda,\C^{d^{2|\Lambda|}})$ is an invertible linear map that transforms $(\Delta^{\mathrm{ex}}-\epsilon)^{-1}\mathcal{L}_\Lambda$ into the Jordan canonical form $\mathcal{J}_\Lambda^{\epsilon}\in\mathrm{M}_{d^{2|\Lambda|}}$,
        \begin{align*}
            \mathcal{J}_\Lambda^\epsilon
            = \mathcal{V}_\Lambda^\epsilon \circ \frac{1}{\Delta^{\mathrm{ex}}-\epsilon}\mathcal{L}_\Lambda \circ (\mathcal{V}_\Lambda^\epsilon)^{-1},
        \end{align*}
        and the norms $\|\bullet\|_{\mathfrak{A}_\Lambda\to\C^{d^{2|\Lambda|}}}$ and $\|\bullet\|_{\C^{d^{2|\Lambda|}}\to\mathfrak{A}_\Lambda}$ are defined by
        \begin{gather*}
            \|\mathcal{S}\|_{\mathfrak{A}_\Lambda\to\C^{d^{2|\Lambda|}}}
            = \sup\{ {\|\mathcal{S}(\hat{A})\|}_2 \mid \hat{A}\in\mathfrak{A}_\Lambda,\ \|\hat{A}\|\le 1 \},
            \quad
            \mathcal{S}\colon\mathfrak{A}_\Lambda\to\C^{d^{2|\Lambda|}}, \\
            \|\mathcal{T}\|_{\C^{d^{2|\Lambda|}}\to\mathfrak{A}_\Lambda}
            = \sup\{ \|\mathcal{T}(\boldsymbol{v})\| \mid \boldsymbol{v}\in\C^{d^{2|\Lambda|}},\ {\|\boldsymbol{v}\|}_2\le 1 \},
            \quad
            \mathcal{T}\colon\C^{d^{2|\Lambda|}}\to\mathfrak{A}_\Lambda.
        \end{gather*}
        Here, $\|\bullet\|_2$ is the 2-norm of numerical vectors.
    \end{enumerate}
    Then, for any state $\omega$ over the quantum spin system $\mathfrak{A}$, the limit
    \begin{align}\label{eq:LTLthenTDL2}
        \wslim{\Lambda} \wslim{T\to\infty}\frac{1}{T}\int_0^T\omega\circ\gamma_t^{\Lambda}\dif t
    \end{align}
    exists and is the unique TANESS for the initial state $\omega$. 
\end{thm}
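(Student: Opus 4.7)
The strategy mirrors that of Theorem~\ref{thm:main_theorem}, replacing the exponential convergence $e^{t\mathcal{L}_\Lambda}\to P_\Lambda$ used for the NESS by the Ces\`aro convergence $\frac{1}{T}\int_0^T e^{t\mathcal{L}_\Lambda}\dif t\to P_\Lambda$, where $P_\Lambda\in B(\mathfrak{A}_\Lambda)$ is the spectral projection of $\mathcal{L}_\Lambda$ onto $\ker\mathcal{L}_\Lambda$. The projection is well defined because (B2) forces the eigenvalue $0$ to be semisimple. The key reason that only the point gap (B1) is needed, rather than the line gap of (A1), is that time-averaging a purely imaginary eigenmode $e^{\ii\mu t}$ produces $\tfrac{e^{\ii\mu T}-1}{\ii\mu T}$, of size $O(1/(T|\mu|))$.

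First I would prove a uniform finite-system rate. Using the similarity $\mathcal{L}_\Lambda=(\Delta^{\mathrm{ex}}-\epsilon)(\mathcal{V}_\Lambda^\epsilon)^{-1}\mathcal{J}_\Lambda^\epsilon\mathcal{V}_\Lambda^\epsilon$, the task reduces to estimating the Ces\`aro average of $e^{t(\Delta^{\mathrm{ex}}-\epsilon)\mathcal{J}_\Lambda^\epsilon}$ minus the projection $\tilde{P}_\Lambda$ onto its zero eigenspace, block by block. The zero block (size $1$) cancels $\tilde{P}_\Lambda$; a $1\times 1$ block at a nonzero semisimple eigenvalue $\lambda$ contributes $\tfrac{e^{T\lambda}-1}{T\lambda}$ of magnitude at most $\tfrac{2}{T\Delta^{\mathrm{p}}}$ by (B1); a non-semisimple block takes the form $\lambda I+(\Delta^{\mathrm{ex}}-\epsilon)N$ with $N$ the standard nilpotent, and the bound $\|e^{t[\lambda I+(\Delta^{\mathrm{ex}}-\epsilon)N]}\|\le e^{t\operatorname{Re}\lambda+t(\Delta^{\mathrm{ex}}-\epsilon)}\le e^{-t\epsilon}$ (using $\operatorname{Re}\lambda\le-\Delta^{\mathrm{ex}}$ from (B2)) yields a time-average bound $\tfrac{1}{T\epsilon}$, \emph{independent of block size}. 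Absorbing the conjugation by $\mathcal{V}_\Lambda^\epsilon$ into the factor $\kappa_\Lambda(\mathcal{V}_\Lambda^\epsilon)\le\kappa$ gives
\begin{align*}
    \left\|\tfrac{1}{T}\int_0^T e^{t\mathcal{L}_\Lambda}\dif t - P_\Lambda\right\|_{B(\mathfrak{A}_\Lambda)}
    \le \frac{2\kappa}{T\min(\Delta^{\mathrm{p}},\epsilon)},
\end{align*}
uniformly in $\Lambda\Subset\Gamma$.

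Next I would glue this to the infinite-volume dynamics by a three-$\epsilon$ argument, entirely parallel to the NESS proof. Fix a local $\hat{A}\in\mathfrak{A}_{\loc}$ with $\supp\hat{A}=\Lambda_0$. For $\Lambda\supseteq\Lambda_0$ we may regard $\hat{A}\in\mathfrak{A}_\Lambda$, so the above estimate yields
\begin{align*}
    \left|\omega(P_\Lambda\hat{A})-\tfrac{1}{T}\int_0^T\omega\circ\gamma_t^\Lambda(\hat{A})\dif t\right|\le\tfrac{2\kappa\|\hat{A}\|}{T\min(\Delta^{\mathrm{p}},\epsilon)}.
\end{align*}
Combined with the uniform convergence $\gamma_t^\Lambda\to\gamma_t^\Gamma$ on $[0,T]$ from Theorem~\ref{thm:existence_of_dynamics}, this shows that $(\omega(P_\Lambda\hat{A}))_{\Lambda\supseteq\Lambda_0}$ is Cauchy in $\C$ and, by the same three-$\epsilon$ juggling, that $\tfrac{1}{T}\int_0^T\omega\circ\gamma_t^\Gamma(\hat{A})\dif t$ converges to the same limit $L(\hat{A})$ as $T\to\infty$ along the \emph{full} net. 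By density of $\mathfrak{A}_{\loc}$ in $\mathfrak{A}$ and the uniform bound $1$ on the time-averaged functionals, the convergence extends to all $\hat{A}\in\mathfrak{A}$ and identifies the unique TANESS for $\omega$ with the iterated limit $\wslim{\Lambda}\wslim{T\to\infty}\tfrac{1}{T}\int_0^T\omega\circ\gamma_t^\Lambda\dif t$.

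The main obstacle is the uniform Jordan-block estimate in Step~1: block sizes may grow with $|\Lambda|$, and a naive bound on a size-$k$ block produces polynomial growth of degree $k-1$ in $t$ that could spoil Ces\`aro convergence. Hypothesis (B2) is exactly what averts this---after normalizing by $\Delta^{\mathrm{ex}}-\epsilon$, non-semisimple blocks acquire a strictly subunital exponential factor $e^{-t\epsilon}$ that dominates the polynomial part size-uniformly. Once that estimate is in hand, the gluing steps are routine and the Bochner-integral manipulations are justified by Remark~\ref{rmk:Bochner}.
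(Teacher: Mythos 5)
Your proposal is correct and takes essentially the same route as the paper: the identical uniform Ces\`aro estimate via the Jordan form of $(\Delta^{\mathrm{ex}}-\epsilon)^{-1}\mathcal{L}_\Lambda$ (bound $2/(T\Delta^{\mathrm{p}})$ for nonzero semisimple blocks from (B1), bound $1/(T\epsilon)$ for non-semisimple blocks from $\|N\|\le 1$ and $\operatorname{Re}\lambda\le-\Delta^{\mathrm{ex}}$, conjugation absorbed into $\kappa$), followed by the same three-term gluing using the uniform convergence on $[0,T]$ from Theorem~\ref{thm:existence_of_dynamics} and density of $\mathfrak{A}_{\loc}$. The only minor difference is the final convergence step --- the paper extracts a weak-$*$ convergent subnet of $(\omega\circ\Pi_\Lambda)$ and upgrades to full-net convergence via Theorem~\ref{thm:subnet_of_subnet}, while you show the scalar nets $(\omega(P_\Lambda\hat{A}))$ are Cauchy directly --- and both are valid.
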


Theorem~\ref{thm:main_theorem2} implies that the thermodynamic limit and the long-time limit commute under the conditions (B1) and (B2), similarly to Theorem~\ref{thm:main_theorem}.
In addition, it also implies the thermodynamic limit and the integral of time averaging commute.
Again, note that Theorem~\ref{thm:main_theorem2} allows different TANESSs to exist for different initial states.

Let us compare the assumptions in Theorem~\ref{thm:main_theorem2} with those in Theorem~ \ref{thm:main_theorem}.
See Figure~\ref{fig:gaps} again for comparison of three gaps $\Delta_\Lambda$, $\Delta_\Lambda^{\mathrm{p}}$, and $\Delta_\Lambda^{\mathrm{ex}}$.
Whereas the gap $\Delta_\Lambda$ in the assumption (A1) is the distance between the set $\sigma_{B(\mathfrak{A}_\Lambda)}(\mathcal{L}_\Lambda)\setminus\{0\}$ and the imaginary axis, the gap $\Delta_\Lambda^{\mathrm{p}}$ in the assumption (B1) is that between $\sigma_{B(\mathfrak{A}_\Lambda)}(\mathcal{L}_\Lambda)\setminus\{0\}$ and zero.
Hence, $\Delta_\Lambda$ and $\Delta$ are called the \textit{line gap}, while $\Delta_\Lambda^{\mathrm{p}}$ and $\Delta^{\mathrm{p}}$ the \textit{point gap} \cite{Kawabata2019}.
Theorems~\ref{thm:main_theorem} and \ref{thm:main_theorem2} yield that the point gap controls the TANESS, while the line gap does the NESS.
The condition (B1) does not imply the existence of 
exceptional gap in Eq.~\eqref{eq:existence_of_ex_gap2}, unlike (A1).

\subsection{Behavior of Lindblad dynamics on finite subsets}\label{subsec:behavior_on_finite_subsets}
In preparation for the proof of Theorem~\ref{thm:main_theorem} and \ref{thm:main_theorem2}, we show some lemmas about Lindblad dynamics on finite subsets.
The first lemma concerns the spectrum of a Liouvillian $\mathcal{L}_\Lambda$ for finite $\Lambda\Subset\Gamma$.
Within this subsection, Assumption~\ref{asm:locality} on the locality of Liouvillians is not required.

\begin{lem}\label{lem:spectrum_in_finite_systems}
    For each finite subset $\Lambda\Subset\Gamma$, the Liouvillian $\mathcal{L}_\Lambda\colon\mathfrak{A}_\Lambda\to\mathfrak{A}_\Lambda$ has an eigenvalue $0$, every purely imaginary eigenvalue (including $0$) of $\mathcal{L}_\Lambda$ is semisimple, and all of the eigenvalues of $\mathcal{L}_\Lambda$ have non-positive real parts.
\end{lem}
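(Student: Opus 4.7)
The plan is to read off all three spectral constraints from the fact that the one-parameter semigroup $(e^{t\mathcal{L}_\Lambda})_{t\ge 0}$ consists of unital completely positive maps and is therefore uniformly bounded in $t$. Since $\mathfrak{A}_\Lambda$ is finite-dimensional, the spectral information is captured by a Jordan decomposition and the growth behaviour of $e^{t\mathcal{L}_\Lambda}$ along each Jordan block must be compatible with that bound.

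First, I would verify that $0$ is an eigenvalue by computing $\mathcal{L}_\Lambda(\hat{I}_\Lambda) = 0$ directly from the definition in Eq.~\eqref{eq:def_of_Liouvillian}: for each $Z\subset\Lambda$, $\ii[\hat{\Phi}(Z),\hat{I}]=0$ and $\sum_\alpha(\hat{L}_\alpha^*\hat{I}\hat{L}_\alpha - \tfrac{1}{2}\{\hat{L}_\alpha^*\hat{L}_\alpha,\hat{I}\})=0$. Next, I would recall the Lindblad/GKS result quoted after Eq.~\eqref{eq:def_of_Liouvillian} that $e^{t\mathcal{L}_\Lambda}$ is unit-preserving and completely positive on $\mathfrak{A}_\Lambda$. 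A unital positive linear map on a unital $C^*$-algebra has operator norm equal to $\|\mathcal{T}(\hat{I})\|$, so $\|e^{t\mathcal{L}_\Lambda}\|_{B(\mathfrak{A}_\Lambda)} = 1$ for every $t\ge 0$. In particular the semigroup is uniformly bounded in $t\ge 0$.

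Now suppose $\lambda\in\sigma_{B(\mathfrak{A}_\Lambda)}(\mathcal{L}_\Lambda)$ has $\operatorname{Re}\lambda > 0$, with eigenvector $\hat{v}\ne 0$. Then $e^{t\mathcal{L}_\Lambda}\hat{v} = e^{t\lambda}\hat{v}$, so
\begin{align*}
    \|e^{t\mathcal{L}_\Lambda}\|_{B(\mathfrak{A}_\Lambda)}\ \ge\ \frac{\|e^{t\mathcal{L}_\Lambda}\hat{v}\|}{\|\hat{v}\|}\ =\ e^{t\operatorname{Re}\lambda}\ \xrightarrow{t\to\infty}\ \infty,
\end{align*}
contradicting the bound $\|e^{t\mathcal{L}_\Lambda}\|_{B(\mathfrak{A}_\Lambda)} = 1$. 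Hence every eigenvalue has non-positive real part. For the semisimplicity of purely imaginary eigenvalues, suppose $\ii\beta\in\sigma_{B(\mathfrak{A}_\Lambda)}(\mathcal{L}_\Lambda)$ with $\beta\in\mathbb{R}$ is non-semisimple. Then there exists a generalized eigenvector $\hat{v}$ with $(\mathcal{L}_\Lambda - \ii\beta)\hat{v} = \hat{w}\ne 0$ and $(\mathcal{L}_\Lambda - \ii\beta)\hat{w} = 0$, and
\begin{align*}
    e^{t\mathcal{L}_\Lambda}\hat{v}\ =\ e^{\ii t\beta}\bigl(\hat{v} + t\hat{w}\bigr),
\end{align*}
whose norm grows at least linearly in $t$, once again contradicting the boundedness of the semigroup. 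Therefore every purely imaginary eigenvalue is semisimple.

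I do not anticipate a serious obstacle: the one possibly non-trivial ingredient is the norm bound $\|e^{t\mathcal{L}_\Lambda}\|_{B(\mathfrak{A}_\Lambda)} = 1$ for a unital positive map on $\mathfrak{A}_\Lambda$, which I would simply cite from standard $C^*$-algebra references (or derive from the Russo--Dye theorem, noting that a positive unital map is automatically $*$-preserving and bounded by its value on $\hat{I}$). Everything else is linear algebra on the finite-dimensional space $\mathfrak{A}_\Lambda$ via the Jordan decomposition of $\mathcal{L}_\Lambda$.
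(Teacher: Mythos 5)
Your proposal is correct and follows essentially the same route as the paper: both rest on the identity $\|e^{t\mathcal{L}_\Lambda}\|_{B(\mathfrak{A}_\Lambda)}=\|e^{t\mathcal{L}_\Lambda}(\hat{I})\|=1$ for the unital completely positive semigroup, then exclude eigenvalues with positive real part via exponential growth of an eigenvector and non-semisimple purely imaginary eigenvalues via linear growth along a Jordan chain (the paper phrases the latter with $\hat{A}\in\ker(\mathcal{L}_\Lambda-\ii r\,\id)^2\setminus\ker(\mathcal{L}_\Lambda-\ii r\,\id)$, which is the same computation as your $e^{t\mathcal{L}_\Lambda}\hat{v}=e^{\ii t\beta}(\hat{v}+t\hat{w})$). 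No gaps; your explicit Jordan-chain formula is, if anything, a slightly cleaner writing of the paper's estimate.
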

\begin{proof}
    The proof of this lemma is inspired by the proof in \cite[Lemma 4]{Minganti2018}.
    The essence of this proof is that for every $t\ge 0$
    \begin{align}\label{eq:unity_of_norm_of_time_evolution}
        \|e^{t\mathcal{L}_\Lambda}\|_{B(\mathfrak{A}_\Lambda)}
        = \|e^{t\mathcal{L}_\Lambda}(\hat{I})\|
        = \|\hat{I}\|
        = 1
    \end{align}
    holds because $\mathcal{L}_\Lambda$ generates the completely positive map $e^{t\mathcal{L}_\Lambda}$ over the finite dimensional $C^*$-algebra $\mathfrak{A}_\Lambda$.

    We can check from $\mathcal{L}_\Lambda(\hat{I})=0$ that the Liouvillian $\mathcal{L}_\Lambda$ has the eigenvalue 0. 

    We prove the remaining claims of the lemma by contradiction.
    Let $\ii r\in\ii\mathbb{R}$ be a purely imaginary eigenvalue of $\mathcal{L}_\Lambda$.
    We assume that the eigenvalue $\ii r$ is not semisimple and thus there exists a nonzero operator $\hat{A}\in\ker(\mathcal{L}_\Lambda-\ii r\id_{\mathfrak{A}_\Lambda})^2\setminus\ker(\mathcal{L}_\Lambda-\ii r\id_{\mathfrak{A}_\Lambda})\subset\mathfrak{A}_\Lambda$.
    For any $t\ge 0$ we have
    \begin{align*}
        \|e^{t\mathcal{L}_\Lambda}(\hat{A})\| - \|\hat{A}\|
        &= \|e^{-\ii rt}e^{t\mathcal{L}_\Lambda}(\hat{A})\| - \|\hat{A}\| \\
        &= \|\hat{I}+t(\mathcal{L}_\Lambda-\ii r\id_{\mathfrak{A}_\Lambda})(\hat{A})\| - \|\hat{A}\| \\
        &\ge t\|(\mathcal{L}_\Lambda-\ii r\id_{\mathfrak{A}_\Lambda})(\hat{A})\| -1 -\|\hat{A}\|.
    \end{align*}
    Since $(\mathcal{L}_\Lambda-\ii r\id_{\mathfrak{A}_\Lambda})(\hat{A})\neq 0$ by construction, $\|e^{t\mathcal{L}_\Lambda}(\hat{A})\|>\|\hat{A}\|$ holds for a finite $t$ satisfying
    \begin{align*}
        t>\frac{1+\|\hat{A}\|}{\|(\mathcal{L}_\Lambda-\ii r\id_{\mathfrak{A}_\Lambda})(\hat{A})\|},
    \end{align*}
    which contradicts Eq.~\eqref{eq:unity_of_norm_of_time_evolution}.

    Next, we assume that $\mathcal{L}_\Lambda$ has an eigenvalue $\lambda$ such that $\operatorname{Re}\lambda > 0$.
    Let $\hat{B}\in\mathfrak{A}_\Lambda\setminus\{0\}$ be an eigenoperator associated with the eigenvalue $\lambda$, $\mathcal{L}_\Lambda(\hat{B})=\lambda\hat{B}$. 
    For any $t>0$ we have
    \begin{align*}
        \|e^{t\mathcal{L}_\Lambda}(\hat{B})\| - \|\hat{B}\|
        = \|e^{t\lambda}\hat{B}\| - \|\hat{B}\|
        = (e^{t\operatorname{Re}\lambda}-1)\|\hat{B}\|
        > 0,
    \end{align*}
    which contradicts Eq.~\eqref{eq:unity_of_norm_of_time_evolution}.
\end{proof}

Next, we argue the long-time limit of $\omega\circ\gamma_t^\Lambda$ and $T^{-1}\int_0^T\omega\circ\gamma_t^\Lambda\dif t$ for finite $\Lambda\Subset\Gamma$.
\begin{lem}\label{lem:NESS_in_finite_systems}
    Let $\Pi_\Lambda\colon\mathfrak{A}\to\mathfrak{A}$ be the extension onto the quantum spin chain $\mathfrak{A}$ of the spectral projection $\pi_\Lambda\colon\mathfrak{A}_\Lambda\to\mathfrak{A}_\Lambda$ of the Liouvillian $\mathcal{L}_\Lambda$ associated with the eigenvalue 0.
    For each finite $\Lambda\Subset\Gamma$ and any state $\omega$, the following holds.
    \begin{enumerate}[(1)]
        \item If $\Delta_\Lambda>0$, then it holds that
        \begin{align}\label{eq:finite_NESS}
            \lim_{t\to\infty}\|\omega\circ\Pi_\Lambda-\omega\circ\gamma_t^\Lambda\|
            = 0
        \end{align}
        or in particular
        \begin{align*}
            \omega\circ\Pi_\Lambda
            = \wslim{t\to\infty} \omega\circ\gamma_t^\Lambda.
        \end{align*}
        \item It holds that 
        \begin{align}\label{eq:finite_TANESS}
            \lim_{T\to\infty}\left\| \omega\circ\Pi_\Lambda - \frac{1}{T}\int_0^T\omega\circ\gamma_t^\Lambda\dif t \right\|
            = 0
        \end{align}
        or in particular
        \begin{align*}
            \omega\circ\Pi_\Lambda
            = \wslim{T\to\infty} \frac{1}{T}\int_0^T\omega\circ\gamma_t^\Lambda\dif t.
        \end{align*}
        In parpicular, $\omega\circ\Pi_\Lambda$ is a state on $\mathfrak{A}$.
    \end{enumerate}
\end{lem}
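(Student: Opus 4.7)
The plan is to reduce both statements to a norm convergence $e^{t\mathcal{L}_\Lambda}\to\pi_\Lambda$ (or its time-averaged analogue) inside $B(\mathfrak{A}_\Lambda)$ and then to lift this to $\mathfrak{A}$ using the completely-bounded extension discussed in Section~\ref{subsec:CP&CB}. Concretely, since the extension of a linear map $\mathcal{T}_\Lambda\in B(\mathfrak{A}_\Lambda)$ to $\mathfrak{A}$ is linear and satisfies $\|\tilde{\tilde{\mathcal{T}}}_\Lambda\|_{B(\mathfrak{A})}=\|\mathcal{T}_\Lambda\|_{\cb}\le d^{|\Lambda|}\|\mathcal{T}_\Lambda\|_{B(\mathfrak{A}_\Lambda)}$, I will obtain
\begin{align*}
    \|\omega\circ\gamma_t^\Lambda-\omega\circ\Pi_\Lambda\|
    \le \|\gamma_t^\Lambda-\Pi_\Lambda\|_{B(\mathfrak{A})}
    \le d^{|\Lambda|}\|e^{t\mathcal{L}_\Lambda}-\pi_\Lambda\|_{B(\mathfrak{A}_\Lambda)},
\end{align*}
and a similar bound for the time average, so that it suffices to work on the finite-dimensional algebra.

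For part (1), I will use the Jordan decomposition of $\mathcal{L}_\Lambda$. Since $\pi_\Lambda$ commutes with $\mathcal{L}_\Lambda$, I can split $e^{t\mathcal{L}_\Lambda}=\pi_\Lambda+e^{t\mathcal{L}_\Lambda}(\id_{\mathfrak{A}_\Lambda}-\pi_\Lambda)$. On the invariant subspace $(\id_{\mathfrak{A}_\Lambda}-\pi_\Lambda)\mathfrak{A}_\Lambda$, the hypothesis $\Delta_\Lambda\ge\Delta>0$ together with Lemma~\ref{lem:spectrum_in_finite_systems} ensures that every eigenvalue of $\mathcal{L}_\Lambda$ has real part $\le-\Delta_\Lambda<0$. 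Jordan form then gives the standard bound $\|e^{t\mathcal{L}_\Lambda}(\id_{\mathfrak{A}_\Lambda}-\pi_\Lambda)\|_{B(\mathfrak{A}_\Lambda)}\le C_\Lambda(1+t)^{m_\Lambda}e^{-\Delta_\Lambda t}$ for some $m_\Lambda\in\mathbb{N}$, which tends to $0$, proving Eq.~\eqref{eq:finite_NESS}.

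For part (2), the gap assumption is absent, but Lemma~\ref{lem:spectrum_in_finite_systems} still forces all purely imaginary eigenvalues of $\mathcal{L}_\Lambda$ to be semisimple. I will write the Jordan decomposition as $\mathcal{L}_\Lambda=\sum_{j}\lambda_j P_j+\sum_{j}N_j$, where only the terms with $\operatorname{Re}\lambda_j<0$ can contribute nilpotent parts $N_j$, while the eigenvalues on the imaginary axis give rise only to rank-one spectral projections times phases. Substituting into $T^{-1}\int_0^T e^{t\mathcal{L}_\Lambda}\,dt$, the $\lambda=0$ block contributes $\pi_\Lambda$, each purely imaginary $\lambda_j=\ii r_j\ne 0$ block contributes $(\ii Tr_j)^{-1}(e^{\ii Tr_j}-1)P_j$ of order $O(T^{-1})$, and each block with $\operatorname{Re}\lambda_j<0$ produces a convergent integral divided by $T$ and hence $O(T^{-1})$. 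Summing yields $T^{-1}\int_0^T e^{t\mathcal{L}_\Lambda}\,dt\to\pi_\Lambda$ in $B(\mathfrak{A}_\Lambda)$, from which Eq.~\eqref{eq:finite_TANESS} follows by the same cb-estimate as above, with $\int_0^T\omega\circ\gamma_t^\Lambda\,dt=\omega\circ\int_0^T\gamma_t^\Lambda(\bullet)\,dt$ from Remark~\ref{rmk:Bochner}.

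Finally, to show that $\omega\circ\Pi_\Lambda$ is a state on $\mathfrak{A}$, I will note that each $\omega\circ\gamma_t^\Lambda$ is a state (since $\gamma_t^\Lambda$ is unit-preserving and completely positive), and that $T^{-1}\int_0^T\omega\circ\gamma_t^\Lambda\,dt$ is a convex average of states and hence itself a state. Norm convergence to $\omega\circ\Pi_\Lambda$ preserves both positivity and the normalization $(\omega\circ\Pi_\Lambda)(\hat I)=1$, so $\omega\circ\Pi_\Lambda$ is a state. The main delicate point in the plan is the treatment in (2) of the purely imaginary nonzero spectrum: without the semisimplicity statement from Lemma~\ref{lem:spectrum_in_finite_systems} one would face Jordan blocks on the imaginary axis, producing terms like $t^k e^{\ii r t}$ whose time averages do not vanish; the rest of the argument is routine linear algebra combined with the cb-norm machinery.
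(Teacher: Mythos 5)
Your proof is correct and follows essentially the same route as the paper: reduce everything to operator-norm convergence on the finite-dimensional algebra via the bound $\|\bullet\|_{\cb}\le d^{|\Lambda|}\|\bullet\|_{B(\mathfrak{A}_\Lambda)}$ and the extension property $\|\gamma_t^\Lambda-\Pi_\Lambda\|_{B(\mathfrak{A})}=\|e^{t\mathcal{L}_\Lambda}-\pi_\Lambda\|_{\cb}$, then control $e^{t\mathcal{L}_\Lambda}-\pi_\Lambda$ and its time average through the Jordan form combined with Lemma~\ref{lem:spectrum_in_finite_systems}, the paper differing only in that it rescales $\mathcal{L}_\Lambda$ by $\Delta_\Lambda/2$ (resp.\ $\Delta'_\Lambda/2$) to obtain explicit constants such as $\kappa_\Lambda(\mathcal{V})e^{-t\Delta_\Lambda/2}$ and an explicit $O(1/T)$ rate in place of your generic $C_\Lambda(1+t)^{m_\Lambda}e^{-\Delta_\Lambda t}$ estimate. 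Two pedantic remarks that do not affect validity: the identity $e^{t\mathcal{L}_\Lambda}\pi_\Lambda=\pi_\Lambda$ behind your splitting uses the semisimplicity of the eigenvalue $0$ (Lemma~\ref{lem:spectrum_in_finite_systems}), not commutation alone, and the spectral projections attached to purely imaginary eigenvalues need not be rank one, though their contribution to the time average is $O(T^{-1})$ regardless.
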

\begin{proof}
    Fix a finite subset $\Lambda\Subset\Gamma$.
    Since the assertion is obvious if $\mathcal{L}_\Lambda=0$, we suppose $\mathcal{L}_\Lambda\neq 0$ hereafter in this proof.

    First, we assume $\Delta_\Lambda>0$ and show Eq.~\eqref{eq:finite_NESS}.
    We transform $(\Delta_\Lambda/2)^{-1}\mathcal{L}_\Lambda$ into the Jordan canonical form,
    \begin{align}\label{eq:Jordan_canonical_form1}
        \bigoplus_{h=0}^m \left[ (\Delta_\Lambda/2)^{-1}\lambda_h I_h + N_h \right]
        = \mathcal{V}\circ(\Delta_\Lambda/2)^{-1}\mathcal{L}_\Lambda\circ\mathcal{V}^{-1},
    \end{align}
    where $\lambda_0,\lambda_1,\ldots,\lambda_m$ are all distinct eigenvalues of $\mathcal{L}_\Lambda$, $I_h$ is the identity matrix, $N_h$ is a nilpotent matrix whose upper diagonal entries are one or zero but all other entries are zero, and $\mathcal{V}$ is an invertible linear map from $\mathfrak{A}_\Lambda$ to $\C^{d^{2|\Lambda|}}$.
    Note that $\|N_h\|\le 1$ holds because
    \begin{align*}
        \|N_h\|^2
        \le \|N_h\|_1 \|N_h\|_\infty
        = \left( \max_j\sum_i \left|[N_h]_{i,j}\right| \right) \left( \max_i\sum_j \left|[N_h]_{i,j}\right| \right)
        \le 1,
    \end{align*}
    where $[N_h]_{i,j}$ denotes the $(i,j)$-element of $N_h$.
    Since $\mathcal{L}_\Lambda$ has the semisimple eigenvalue 0, we can take $\lambda_0=0$ and $N_0=0$ without loss of generality.
    From $\|N_h\|\le 1$ and $\operatorname{Re}\lambda_h\le -\Delta_\Lambda$ ($h=1,2,\ldots,m$), it follows that
    \begin{align*}
        \|e^{t\mathcal{L}_\Lambda} - \pi_\Lambda\|_{B(\mathfrak{A}_\Lambda)}
        &= \left\| \mathcal{V}^{-1} \circ \left( 0I_0 \oplus \bigoplus_{h=1}^m e^{t\lambda_h}e^{(\Delta_\Lambda/2)tN_h} \right) \circ \mathcal{V} \right\|_{B(\mathfrak{A}_\Lambda)} \\
        &\le \|\mathcal{V}^{-1}\|_{\C^{d^{2|\Lambda|}}\to\mathfrak{A}_\Lambda}\|\mathcal{V}\|_{\mathfrak{A}_\Lambda\to\C^{d^{2|\Lambda|}}}\max_{h=1,2,\ldots,m}|e^{t\lambda_h}|{\|e^{(\Delta_\Lambda/2)tN_h}\|} \\
        &\le \kappa_\Lambda(\mathcal{V})\max_{h=1,2,\ldots,m}e^{t\operatorname{Re}\lambda_h}e^{(\Delta_\Lambda/2)t\|N_h\|} \\
        &\le \kappa_\Lambda(\mathcal{V})e^{-t\Delta_\Lambda/2}.
    \end{align*}
    Hence, we have
        \begin{align*}
            \|\omega\circ\gamma_t^\Lambda - \omega\circ\Pi_\Lambda\|
            &\le \|\gamma_t^\Lambda-\Pi_\Lambda\|_{B(\mathfrak{A})} \\
            &= \|e^{t\mathcal{L}_\Lambda}-\pi_\Lambda\|_{\cb} \\
            &\le d^{|\Lambda|}\|e^{t\mathcal{L}_\Lambda}-\pi_\Lambda\|_{B(\mathfrak{A}_\Lambda)} \\
            &\le d^{|\Lambda|}\kappa_\Lambda(\mathcal{V})e^{-t\Delta_\Lambda/2}
            \xrightarrow{t\to\infty} 0
        \end{align*}
    for any state $\omega$ on $\mathfrak{A}$,
    which means that the net $(\omega\circ\gamma_t^\Lambda)_{t\ge 0}$ of states converges to $\omega\circ\Pi_\Lambda$ in the norm topology, and hence in particular the weak-$*$ topology.

    Next, we show Eq.~\eqref{eq:finite_TANESS} without the assumption $\Delta_\Lambda>0$.
    We have $\Delta_\Lambda^{\mathrm{p}}>0$ by the definition in Eq.~\eqref{eq:point_gap} and $\Delta_\Lambda^{\mathrm{ex}}>0$ from Lemma~\ref{lem:spectrum_in_finite_systems}. 
    We can take a finite positive value $\Delta'_\Lambda$ satisfying $0<\Delta'_\Lambda<\Delta_\Lambda^{\mathrm{ex}}$.
    Again, we transform $(\Delta'_\Lambda/2)^{-1}\mathcal{L}_\Lambda$ into the Jordan canonical form,
    \begin{align*}
        \bigoplus_{h=0}^m \left[ (\Delta'_\Lambda/2)^{-1}\lambda_h I_h + N_h \right]
        = \mathcal{V}'\circ(\Delta'_\Lambda/2)^{-1}\mathcal{L}_\Lambda\circ(\mathcal{V}')^{-1},
    \end{align*}
    where $\lambda_h$, $I_h$, and $N_h$ are same as in Eq.~\eqref{eq:Jordan_canonical_form1} and $\mathcal{V}'$ is an invertible linear map from $\mathfrak{A}_\Lambda$ to $\C^{d^{2|\Lambda|}}$.
    We evaluate the norm of the difference between $T^{-1}\int_0^T e^{t\mathcal{L}_\Lambda}\dif t$ and $\pi_\Lambda$ as
    \begin{align*}
        \left\| \frac{1}{T}\int_0^T e^{t\mathcal{L}_\Lambda}\dif t - \pi_\Lambda \right\|_{B(\mathfrak{A}_\Lambda)}
        &= \left\| (\mathcal{V}')^{-1} \circ \frac{1}{T}\int_0^T\left( 0I_0\oplus\bigoplus_{h=1}^m e^{t\lambda_h}e^{(\Delta'_\Lambda/2)tN_h} \right)\dif t \circ \mathcal{V}' \right\|_{B(\mathfrak{A}_\Lambda)} \\
        &\le \kappa_\Lambda(\mathcal{V}')\max_{h=1,2,\ldots,m}\frac{1}{T}\left\|\int_0^T e^{t\lambda_h}e^{(\Delta'_\Lambda/2)tN_h}\dif t\right\|.
    \end{align*}
    To calculate the term $T^{-1}\|\int_0^T e^{t\lambda_h}e^{(\Delta'_\Lambda/2)tN_h}\dif t\|$, we consider the following two cases.
    \begin{description}
        \item[If the eigenvalue $\lambda_h\neq 0$ is semisimple,] we have $N_h=0$ and hence obtain
        \begin{align}\label{eq:semisimple_case}
            \left\|\int_0^T e^{t\lambda_h}e^{(\Delta'_\Lambda/2)tN_h}\dif t\right\|
            = \left|\int_0^T e^{t\lambda_h}\dif t\right|
            = \frac{|1-e^{T\lambda_h}|}{|\lambda_h|}
            \le \frac{2}{|\lambda_h|}
            \le \frac{2}{\Delta_\Lambda^{\mathrm{p}}}.
        \end{align}
        \item[If the eigenvalue $\lambda_h\neq 0$ is not semisimple,] the fact that $\|N_h\|\le 1$ and $\operatorname{Re}\lambda_h\le -\Delta'_\Lambda<0$ yields
        \begin{align}\label{eq:non-semisimple_case}
            \left\|\int_0^T e^{t\lambda_h}e^{(\Delta'_\Lambda/2)tN_h}\dif t\right\|
            \le \int_0^T |e^{t\lambda_h}|e^{(\Delta'_\Lambda/2)t\|N_h\|}\dif t
            \le \int_0^T e^{-t\Delta'_\Lambda/2}\dif t
            = \frac{1-e^{-T\Delta'_\Lambda/2}}{\Delta'_\Lambda/2}
            \le \frac{2}{\Delta'_\Lambda}.
        \end{align}
    \end{description}
    Therefore, from Eqs.~\eqref{eq:semisimple_case} and \eqref{eq:non-semisimple_case}, we have
    \begin{align*}
        \left\| \frac{1}{T}\int_0^T e^{t\mathcal{L}_\Lambda}\dif t - \pi_\Lambda \right\|_{B(\mathfrak{A}_\Lambda)}
        \le \frac{2\kappa_\Lambda(\mathcal{V}')\max\{(\Delta_\Lambda^{\mathrm{p}})^{-1},(\Delta'_\Lambda)^{-1}\}}{T}.
    \end{align*}
    For each $\hat{A}\in\mathfrak{A}$, we consider the Bochner integral $\int_0^T \gamma_t^\Lambda(\hat{A})\dif t$ as mentioned in Remark~\ref{rmk:Bochner} and then perform the following evaluation,
    \begin{align*}
        \left| \frac{1}{T}\int_0^T\omega\circ\gamma_t^\Lambda(\hat{A})\dif t - \omega\circ\Pi_\Lambda(\hat{A}) \right|
        &= \left| \omega\circ\left( \frac{1}{T}\int_0^T\gamma_t^\Lambda(\hat{A})\dif t - \Pi_\Lambda(\hat{A}) \right) \right| \\
        &\le \left\| \frac{1}{T}\int_0^T\gamma_t^\Lambda(\hat{A})\dif t - \Pi_\Lambda(\hat{A}) \right\| \\
        &\le \left\| \frac{1}{T}\int_0^T e^{t\mathcal{L}_\Lambda}\dif t - \pi_\Lambda \right\|_{\cb}\|\hat{A}\| \\
        &\le d^{|\Lambda|}\left\| \frac{1}{T}\int_0^T e^{t\mathcal{L}_\Lambda}\dif t - \pi_\Lambda \right\|_{B(\mathfrak{A}_\Lambda)}\|\hat{A}\| \\
        &\le d^{|\Lambda|}\frac{2\kappa_\Lambda(\mathcal{V}')\max\{(\Delta_\Lambda^{\mathrm{p}})^{-1},(\Delta'_\Lambda)^{-1}\}}{T}\|\hat{A}\|,
    \end{align*}
    which implies that
    \begin{align*}
        \left\| \frac{1}{T}\int_0^T\omega\circ\gamma_t^\Lambda\dif t - \omega\circ\Pi_\Lambda \right\|
        \le d^{|\Lambda|}\frac{2\kappa_\Lambda(\mathcal{V}')\max\{(\Delta_\Lambda^{\mathrm{p}})^{-1},(\Delta'_\Lambda)^{-1}\}}{T}
        \xrightarrow{T\to\infty} 0.
    \end{align*}
    Therefore, the net $\left(T^{-1}\int_0^T\omega\circ\gamma_t^\Lambda\dif t\right)_{T\ge 0}$ of states converges to $\omega\circ\Pi_\Lambda$ in the norm topology, and hence in particular the weak-$*$ topology.
    Since the set of states over the $C^*$-algebra $\mathfrak{A}$ is closed in terms of the weak-$*$ topology, $\omega\circ\Pi_\Lambda$ is a state over $\mathfrak{A}$. 
\end{proof}

As seen in this lemma, the nonequilibrium steady state in the finite system is determined from the spectral projection of the Liouvillian, which can be calculated by an algebraic method of the Jordan canonical form.
It is remarkable that Theorem~\ref{thm:main_theorem} and 
\ref{thm:main_theorem2} claim the NESS and TANESS on the quantum spin system of an infinite size, originally defined as an fully analytic object, can be characterized partially in the algebraic way under some assumptions.

\subsection{Proof of main theorems}\label{subsec:proof_of_main_theorems}
Using the lemmas shown in Sec.~\ref{subsec:behavior_on_finite_subsets}, we prove the main theorems.

\begin{proof}[Proof of Theorem \ref{thm:main_theorem}]
    Let $\omega$ be an arbitrary state.
    As shown in Lemma~\ref{lem:NESS_in_finite_systems}, $\omega\circ\Pi_\Lambda=\wslim{t\to\infty}\omega\circ\gamma_t^\Lambda$ for every finite subset $\Lambda\Subset\Gamma$ is a state over $\mathfrak{A}$ and thus there exists a subnet $(\Lambda_\mu)_\mu$ of the net $(\Lambda)_{\Lambda\Subset\Gamma}$ such that the limit
    \begin{align*}
        \bar{\omega}
        = \wslim{\mu}\omega\circ\Pi_{\Lambda_\mu}
    \end{align*}
    exists as a state over $\mathfrak{A}$. 
    
    First, we see $\bar{\omega}=\omega_{\sss}$ for any $\omega_{\sss}\in\Sigma_{\NESS}(\omega)$.
    We take any operator $\hat{A}\in\mathfrak{A}_{\loc}$ with a finite support $\supp\hat{A}\Subset\Gamma$ and any finite subset $\Lambda$ of $\Gamma$ that includes $\supp\hat{A}$, i.e.,
    \begin{align*}
        \supp\hat{A}
        \subset \Lambda
        \Subset \Gamma.
    \end{align*}
    Then, we have
    \begin{align*}
        |\omega\circ\gamma_t^\Lambda(\hat{A})-\omega\circ\Pi_\Lambda(\hat{A})|
        \le \|e^{t\mathcal{L}_\Lambda}-\pi_\Lambda\|_{B(\mathfrak{A}_\Lambda)}\|\hat{A}\|.
    \end{align*}
    We transform $(\Delta^{\mathrm{ex}}-\epsilon)^{-1}\mathcal{L}_\Lambda$ into the Jordan canonical form as
    \begin{align*}
        \bigoplus_{h=0}^m \left[ (\Delta^{\mathrm{ex}}-\epsilon)^{-1}\lambda_h I_h+N_h \right]
        = \mathcal{V}_\Lambda^\epsilon\circ(\Delta^{\mathrm{ex}}-\epsilon)^{-1}\mathcal{L}_\Lambda\circ(\mathcal{V}_\Lambda^\epsilon)^{-1},
    \end{align*}
    where we use the same notations as in Eq.~\eqref{eq:Jordan_canonical_form1} for $\lambda_h$, $I_h$, and $N_h$.
    From 
    $\|\mathcal{V}_\Lambda^\epsilon\|_{\mathfrak{A}_\Lambda\to\C^{d^{2|\Lambda|}}}\|(\mathcal{V}_\Lambda^\epsilon)^{-1}\|_{\C^{d^{2|\Lambda|}}\to\mathfrak{A}_\Lambda}\le \kappa$,
    it follows that
    \begin{align*}
        \|e^{t\mathcal{L}_\Lambda} - \pi_\Lambda\|_{B(\mathfrak{A}_\Lambda)}
        &= \left\| (\mathcal{V}_\Lambda^\epsilon)^{-1} \circ \left( 0I_0 \oplus \bigoplus_{h=1}^m e^{t\lambda_h}e^{(\Delta^{\mathrm{ex}}-\epsilon)tN_h} \right) \circ \mathcal{V}_\Lambda^\epsilon \right\|_{B(\mathfrak{A}_\Lambda)} \\
        &\le \|(\mathcal{V}_\Lambda^\epsilon)^{-1}\|_{\C^{d^{2|\Lambda|}}\to\mathfrak{A}_\Lambda}\|\mathcal{V}_\Lambda^\epsilon\|_{\mathfrak{A}_\Lambda\to\C^{d^{2|\Lambda|}}}\max_{h=1,2,\ldots,m}|e^{t\lambda_h}|{\|e^{(\Delta^{\mathrm{ex}}-\epsilon)tN_h}\|} \\
        &\le \kappa\max_{h=1,2,\ldots,m}e^{t\operatorname{Re}\lambda_h}e^{(\Delta^{\mathrm{ex}}-\epsilon)t\|N_h\|}.
    \end{align*}
    To calculate the term $e^{t\operatorname{Re}\lambda_h}e^{(\Delta^{\mathrm{ex}}-\epsilon)t\|N_h\|}$, we consider the following two cases.
    \begin{description}
        \item[If the eigenvalue $\lambda_h\neq 0$ is semisimple,] we have $N_h=0$ and hence obtain from $\operatorname{Re}\lambda_h\le-\Delta_\Lambda\le-\Delta$
        \begin{align}\label{eq:semisimple_case_for_main_theorem1}
            e^{t\operatorname{Re}\lambda_h}e^{(\Delta^{\mathrm{ex}}-\epsilon)t\|N_h\|}
            \le e^{-\Delta t}.
        \end{align}
        \item[If the eigenvalue $\lambda_h\neq 0$ is not semisimple,] $\|N_h\|\le 1$ and $\operatorname{Re}\lambda_h\le -\Delta_\Lambda^{\mathrm{ex}}\le-\Delta^{\mathrm{ex}}$ follow
        \begin{align}\label{eq:non-semisimple_case_for_main_theorem1}
            e^{t\operatorname{Re}\lambda_h}e^{(\Delta-\epsilon)t\|N_h\|}
            \le e^{-\Delta^{\mathrm{ex}}t}e^{(\Delta^{\mathrm{ex}}-\epsilon)t}
            \le e^{-\epsilon t}.
        \end{align}
    \end{description}
    Therefore, from Eqs.~\eqref{eq:semisimple_case_for_main_theorem1} and \eqref{eq:non-semisimple_case_for_main_theorem1}, we have
    \begin{align*}
        |\omega\circ\gamma_t^\Lambda(\hat{A})-\omega\circ\Pi_\Lambda(\hat{A})|
        \le \kappa \|\hat{A}\|e^{-\min\{\Delta,\epsilon\} t}.
    \end{align*}
    For every set $\Lambda_\mu$ of the subnet $(\Lambda_\mu)_\mu$ that includes $\supp\hat{A}$, we obtain
    \begin{align*}
        &|\omega\circ\gamma_t^\Gamma(\hat{A})-\bar{\omega}(\hat{A})| \notag \\
        &\le |\omega\circ\gamma_t^\Gamma(\hat{A}) - \omega\circ\gamma_t^{\Lambda_\mu}(\hat{A})| + |\omega\circ\gamma_t^{\Lambda_\mu}(\hat{A}) - \omega\circ\Pi_{\Lambda_\mu}(\hat{A})| + |\omega\circ\Pi_{\Lambda_\mu}(\hat{A})-\bar{\omega}(\hat{A})| \\
        &\le \|\gamma_t^\Gamma(\hat{A}) - \gamma_t^{\Lambda_\mu}(\hat{A})\| + \kappa \|\hat{A}\| e^{-\min\{\Delta,\epsilon\}t} + |\omega\circ\Pi_{\Lambda_\mu}(\hat{A})-\bar{\omega}(\hat{A})|.
    \end{align*}
    Taking the limit in $\mu$ provides
    \begin{align*}
        |\omega\circ\gamma_t^\Gamma(\hat{A})-\bar{\omega}(\hat{A})|
        \le \kappa \|\hat{A}\| e^{-\min\{\Delta,\epsilon\}t}
    \end{align*}
    for any $t\ge 0$, which leads to
    \begin{align*}
        \lim_{t\to\infty}\omega\circ\gamma_t^\Gamma(\hat{A})
        = \bar{\omega}(\hat{A}).
    \end{align*}
    Therefore, for each NESS $\omega_{\sss}\in\Sigma_{\NESS}(\omega)$,     
    \begin{align*}
        \omega_{\sss}(\hat{A})
        = \bar{\omega}(\hat{A})
    \end{align*}
    holds for any $\hat{A}\in\mathfrak{A}_{\loc}$.
    Since every element $\hat{A}'$ of the quantum spin system $\mathfrak{A}$ is the (uniform) limit of some sequence $(\hat{A}_n)_n$ of local observables $\hat{A}_n\in\mathfrak{A}_{\loc}$, the continuity of the states $\omega_{\sss}$ and $\bar{\omega}$ results in
    \begin{align*}
        \omega_{\sss}(\hat{A}')
        = \lim_{n\to\infty}\omega_{\sss}(\hat{A}_n)
        = \lim_{n\to\infty}\bar{\omega}(\hat{A}_n)
        = \bar{\omega}(\hat{A}')
    \end{align*}
    for any $\hat{A}'\in\mathfrak{A}$, which implies $\omega_{\sss}=\bar{\omega}$.
    In particular, the NESS for the initial state $\omega$ is unique.
    
    The discussion above can be applied to arbitrary weakly-$*$ convergent subnets of the net $(\omega\circ\Pi_\Lambda)_{\Lambda\Subset\Gamma}$ of states.
    Specifically, every subnet of $(\omega\circ\Pi_\Lambda)_{\Lambda\Subset\Gamma}$ has some weakly-$*$ convergent subnet, the limit of which is the unique NESS $\omega_{\sss}$ for $\omega$.
    In view of Theorem~\ref{thm:subnet_of_subnet} in Appendix~\ref{sec:net} (also see \cite{Kelley1955} p.74 (c)), this implies the net $(\omega\circ\Pi_\Lambda)_{\Lambda\Subset\Gamma}$ weakly-$*$ converges to $\omega_{\sss}$, i.e.,
    \begin{align*}
        \omega_{\sss}
        = \wslim{\Lambda}\omega\circ\Pi_\Lambda
        = \wslim{\Lambda}\wslim{t\to\infty}\omega\circ\gamma_t^\Lambda.
    \end{align*}
\end{proof}

Theorem~\ref{thm:main_theorem2} is proved parallelly as that of Theorem~\ref{thm:main_theorem}, as follows.
\begin{proof}[Proof of Theorem~\ref{thm:main_theorem2}]
    Let $\omega$ be an arbitrary state.
    As shown in Lemma~\ref{lem:NESS_in_finite_systems}, $\omega\circ\Pi_\Lambda$ for every finite subset $\Lambda\Subset\Gamma$ is a state over $\mathfrak{A}$ and thus there exists a subnet $(\Lambda_\mu)_\mu$ of the net $(\Lambda)_{\Lambda\Subset\Gamma}$ such that the limit
    \begin{align*}
        \bar{\omega}
        = \wslim{\mu}\omega\circ\Pi_{\Lambda_\mu}
    \end{align*}
    exists as a state over $\mathfrak{A}$. 
    
    First, we see $\bar{\omega}=\omega_{\sss}^{\mathrm{ave}}$ for any $\omega_{\sss}^{\mathrm{ave}}\in\Sigma_{\TANESS}(\omega)$.
    We take any operator $\hat{A}\in\mathfrak{A}_{\loc}$ with a finite support $\supp\hat{A}\Subset\Gamma$ and any finite subset $\Lambda$ of $\Gamma$ that includes $\supp\hat{A}$, i.e.,
    \begin{align*}
        \supp\hat{A}
        \subset \Lambda
        \Subset \Gamma.
    \end{align*}
    Then, we have
    \begin{align*}
        \left|\frac{1}{T}\int_0^T\omega\circ\gamma_t^\Lambda(\hat{A})\dif t - \omega\circ\Pi_\Lambda(\hat{A})\right|
        &\le \left\|\frac{1}{T}\int_0^T e^{t\mathcal{L}_\Lambda}\dif t-\pi_\Lambda\right\|_{B(\mathfrak{A}_\Lambda)}\|\hat{A}\|.
    \end{align*}
    We transform $(\Delta^{\mathrm{ex}}-\epsilon)^{-1}\mathcal{L}_\Lambda$ into the Jordan canonical form as
    \begin{align*}
        \bigoplus_{h=0}^m \left[ (\Delta^{\mathrm{ex}}-\epsilon)^{-1}\lambda_h I_h+N_h \right]
        = \mathcal{V}_\Lambda^\epsilon\circ(\Delta^{\mathrm{ex}}-\epsilon)^{-1}\mathcal{L}_\Lambda\circ(\mathcal{V}_\Lambda^\epsilon)^{-1},
    \end{align*}
    where we use the same notations as \eqref{eq:Jordan_canonical_form1} for $\lambda_h$, $I_h$, and $N_h$.
    From $\|\mathcal{V}_\Lambda^\epsilon\|_{\mathfrak{A}_\Lambda\to\C^{d^{2|\Lambda|}}}\|(\mathcal{V}_\Lambda^\epsilon)^{-1}\|_{\C^{d^{2|\Lambda|}}\to\mathfrak{A}_\Lambda}\le \kappa$, it follows that
    \begin{align*}
        \left\|\frac{1}{T}\int_0^T e^{t\mathcal{L}_\Lambda}\dif t - \pi_\Lambda\right\|_{B(\mathfrak{A}_\Lambda)}
        &= \left\| (\mathcal{V}_\Lambda^\epsilon)^{-1} \circ \frac{1}{T}\int_0^T \left( 0I_0 \oplus \bigoplus_{h=1}^m e^{t\lambda_h}e^{(\Delta^{\mathrm{ex}}-\epsilon)tN_h} \right) \dif t \circ \mathcal{V}_\Lambda^\epsilon \right\|_{B(\mathfrak{A}_\Lambda)} \\
        &\le \kappa\max_{h=1,2,\ldots,m} \frac{1}{T}\left\|\int_0^Te^{t\lambda_h}e^{(\Delta^{\mathrm{ex}}-\epsilon)tN_h}\dif t\right\|.
    \end{align*}
    To calculate the term $T^{-1}\|\int_0^T e^{t\lambda_h}e^{(\Delta^{\mathrm{ex}}-\epsilon)tN_h}\dif t\|$, we consider the following two cases.
    \begin{description}
        \item[If the eigenvalue $\lambda_h\neq 0$ is semisimple,] we have $N_h=0$ and hence obtain from $|\lambda_h|\ge\Delta_\Lambda^{\mathrm{p}}\ge\Delta^{\mathrm{p}}$ that
        \begin{align}\label{eq:semisimple_case_for_main_theorem2}
            \left\|\int_0^T e^{t\lambda_h}e^{(\Delta^{\mathrm{ex}}-\epsilon)tN_h}\dif t\right\|
            = \left|\int_0^T e^{t\lambda_h}\dif t\right|
            = \frac{|1-e^{T\lambda_h}|}{|\lambda_h|}
            \le \frac{2}{|\lambda_h|}
            \le \frac{2}{\Delta^{\mathrm{p}}}.
        \end{align}
        \item[If the eigenvalue $\lambda_h\neq 0$ is not semisimple,] the fact that $\|N_h\|\le 1$ and $\operatorname{Re}\lambda_h\le -\Delta_\Lambda^{\mathrm{ex}}\le-\Delta^{\mathrm{ex}}$ yields
        \begin{align}\label{eq:non-semisimple_case_for_main_theorem2}
            \left\|\int_0^T e^{t\lambda_h}e^{(\Delta^{\mathrm{ex}}-\epsilon)tN_h}\dif t\right\|
            \le \int_0^T |e^{t\lambda_h}|e^{(\Delta^{\mathrm{ex}}-\epsilon)t\|N_h\|}\dif t
            \le \int_0^T e^{-\epsilon t}\dif t
            = \frac{1-e^{-\epsilon T}}{\epsilon}
            \le \epsilon^{-1}.
        \end{align}
    \end{description}
    Therefore, from Eqs.~\eqref{eq:semisimple_case_for_main_theorem2} and \eqref{eq:non-semisimple_case_for_main_theorem2}, we have
    \begin{align*}
        \left\| \frac{1}{T}\int_0^T e^{t\mathcal{L}_\Lambda}\dif t - \pi_\Lambda \right\|_{B(\mathfrak{A}_\Lambda)}
        \le \frac{\kappa\max\{2/\Delta^{\mathrm{p}},\epsilon^{-1}\}}{T}
    \end{align*}
    and thereby
    \begin{align*}
        \left|\frac{1}{T}\int_0^T\omega\circ\gamma_t^\Lambda(\hat{A})\dif t - \omega\circ\Pi_\Lambda(\hat{A})\right|
        \le \frac{\kappa\max\{2/\Delta^{\mathrm{p}},\epsilon^{-1}\}}{T}\|\hat{A}\|.
    \end{align*}
    For every set $\Lambda_\mu$ of the subnet $(\Lambda_\mu)_\mu$ that includes $\supp\hat{A}$, we obtain
    \begin{align}
        &\left|\frac{1}{T}\int_0^T\omega\circ\gamma_t^\Gamma(\hat{A})\dif t-\bar{\omega}(\hat{A})\right| \notag \\
        &\le \left|\frac{1}{T}\int_0^T\left[\omega\circ\gamma_t^\Gamma(\hat{A})- \omega\circ\gamma_t^{\Lambda_\mu}(\hat{A})\right]\dif t\right| + \left|\frac{1}{T}\int_0^T\omega\circ\gamma_t^{\Lambda_\mu}(\hat{A})\dif t - \omega\circ\Pi_{\Lambda_\mu}(\hat{A})\right| + |\omega\circ\Pi_{\Lambda_\mu}(\hat{A})-\bar{\omega}(\hat{A})| \notag\\
        &\le \frac{1}{T}\int_0^T\|\gamma_t^\Gamma(\hat{A}) - \gamma_t^{\Lambda_\mu}(\hat{A})\|\dif t + \frac{\kappa\max\{2/\Delta^{\mathrm{p}},\epsilon^{-1}\}}{T}\|\hat{A}\| + |\omega\circ\Pi_{\Lambda_\mu}(\hat{A})-\bar{\omega}(\hat{A})|.
        \label{eq:evaluate_TANESS}
    \end{align}
    Since the convergence $\|\gamma_t^\Gamma(\hat{A})-\gamma_t^{\Lambda_\mu}\|\xrightarrow{\mu}0$ is uniform on the closed interval $[0,T]$ as mentioned in Theorem~\ref{thm:existence_of_dynamics}, for each $T>0$ we have
    \begin{align*}
        \int_0^T\|\gamma_t^\Gamma(\hat{A})-\gamma_t^{\Lambda_\mu}(\hat{A})\|\dif t
        \xrightarrow{\mu} 0.
    \end{align*}
    Thus, taking the limit of $\mu$ in Eq.~\eqref{eq:evaluate_TANESS} provides
    \begin{align*}
        \left|\frac{1}{T}\int_0^T\omega\circ\gamma_t^\Gamma(\hat{A})\dif t-\bar{\omega}(\hat{A})\right|
        \le \frac{\kappa\max\{2/\Delta^{\mathrm{p}},\epsilon^{-1}\}}{T}\|\hat{A}\|
    \end{align*}
    for any $T>0$, which leads to
    \begin{align*}
        \lim_{T\to\infty}\frac{1}{T}\int_0^T\omega\circ\gamma_t^\Gamma(\hat{A})\dif t
        = \bar{\omega}(\hat{A}).
    \end{align*}
    For each TANESS $\omega_{\sss}^{\mathrm{ave}}\in\Sigma_{\TANESS}(\omega)$, therefore,
    \begin{align*}
        \omega_{\sss}^{\mathrm{ave}}(\hat{A})
        = \bar{\omega}(\hat{A})
    \end{align*}
    holds for any $\hat{A}\in\mathfrak{A}_{\loc}$.
    Since every element $\hat{A}'$ of the quantum spin system $\mathfrak{A}$ is the (uniform) limit of some sequence $(\hat{A}_n)_n$ of $\hat{A}_n\in\mathfrak{A}_{\loc}$, the continuity of the states $\omega_{\sss}^{\mathrm{ave}}$ and $\bar{\omega}$ results in
    \begin{align*}
        \omega_{\sss}^{\mathrm{ave}}(\hat{A}')
        = \lim_{n\to\infty}\omega_{\sss}^{\mathrm{ave}}(\hat{A}_n)
        = \lim_{n\to\infty}\bar{\omega}(\hat{A}_n)
        = \bar{\omega}(\hat{A}')
    \end{align*}
    for any $\hat{A}'\in\mathfrak{A}$, which implies $\omega_{\sss}^{\mathrm{ave}}=\bar{\omega}$.
    In particular, the TANESS for the initial state $\omega$ is unique.
    
    The discussion above can be applied to arbitrary weakly-$*$ convergent subnets of the net $(\omega\circ\Pi_\Lambda)_{\Lambda\Subset\Gamma}$ of states.
    Specifically, every subnet of $(\omega\circ\Pi_\Lambda)_{\Lambda\Subset\Gamma}$ has some weakly-$*$ convergent subnet, the limit of which is $\omega_{\sss}^{\mathrm{ave}}$, the unique TANESS for $\omega$.
    From Theorem \ref{thm:subnet_of_subnet} (also see \cite{Kelley1955} p.74 (c)), this implies the net $(\omega\circ\Pi_\Lambda)_{\Lambda\Subset\Gamma}$ weakly-$*$ converges to $\omega_{\sss}^{\mathrm{ave}}$,
    \begin{align*}
        \omega_{\sss}^{\mathrm{ave}}
        = \wslim{\Lambda}\omega\circ\Pi_\Lambda
        = \wslim{\Lambda}\wslim{t\to\infty}\omega\circ\gamma_t^\Lambda.
    \end{align*}
\end{proof}

\subsection{Thermodynamic limit of finite system NESS and TANESS}\label{subsec:TDL_of_finite_system_NESS_and_TANESS}
In this subsection, we examine in depth the state $\bar{\omega}$ appearing in the proofs of the main theorems.

For a fixed finite subset $\Lambda\Subset\Gamma$, Lemma~\ref{lem:NESS_in_finite_systems} claims 
\begin{align*}
    \omega\circ\Pi_\Lambda
    = \wslim{T\to\infty}\frac{1}{T}\int_0^T\omega\circ\gamma_t^\Lambda\dif t
\end{align*}
for any state $\omega$ over $\mathfrak{A}$.
Thus, we can call the state $\omega\circ\Pi_\Lambda$ a finite system TANESS.
Moreover, if $\Delta_\Lambda>0$ holds, $\omega\circ\Pi_\Lambda$ is also regarded as a finite system NESS because we have $\omega\circ\Pi_\Lambda=\wslim{t\to\infty}\omega\circ\gamma_t^\Lambda$.

Let $\bar{\omega}$ be a weak-$*$ cluster point of the net $(\omega\circ\Pi_\Lambda)_{\Lambda\Subset\Gamma}$.
Namely, $\bar{\omega}$ is the weak-$*$ limit of a subnet $(\omega\circ\Pi_{\Lambda_\mu})_\mu$, i.e.,
\begin{align}\label{eq:omega_bar}
    \bar{\omega}
    = \wslim{\mu}\omega\circ\Pi_{\Lambda_\mu}.
\end{align}
The state $\bar{\omega}$ is a thermodynamic limit of finite system TANESS (and sometimes NESS).
This $\bar{\omega}$ is coincident with the NESS or the TANESS for the initial state $\omega$ under some certain assumptions in Theorem~\ref{thm:main_theorem} or Theorem~\ref{thm:main_theorem2}, respectively, but $\bar{\omega}$ differs from the NESS or the TANESS for $\omega$ in generic cases.
However, if we take an initial state as $\bar{\omega}$ itself, $\bar{\omega}$ is a unique NESS and TANESS for $\bar{\omega}$.

\begin{prp}
    Let $(\mathcal{L}_\Lambda)_{\Lambda\Subset\Gamma}$ be a family of Liouvillians on finite subsets satisfying Assumption \ref{asm:locality}.
    Then, the state $\bar{\omega}$ defined in Eq.~\eqref{eq:omega_bar} is invariant under the time evolution $\gamma_t^\Gamma$, i.e.,
    \begin{align*}
        \bar{\omega}\circ\gamma_t^\Gamma
        = \bar{\omega}
    \end{align*}
    for $t\ge 0$.
    In particular, $\bar{\omega}$ is a unique NESS and TANESS for the initial state $\bar{\omega}$, i.e.,
    \begin{align}
        \Sigma_{\NESS}(\bar{\omega})
        = \Sigma_{\TANESS}(\bar{\omega})
        = \{\bar{\omega}\}.
    \end{align}
\end{prp}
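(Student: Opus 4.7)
The plan is to first establish the time-invariance $\bar{\omega}\circ\gamma_t^\Gamma = \bar{\omega}$. Once that is in hand, both identities $\Sigma_{\NESS}(\bar{\omega}) = \Sigma_{\TANESS}(\bar{\omega}) = \{\bar{\omega}\}$ follow trivially: the nets $(\bar{\omega}\circ\gamma_t^\Gamma)_{t>0}$ and $(T^{-1}\int_0^T \bar{\omega}\circ\gamma_t^\Gamma\dif t)_{T>0}$ become constantly equal to $\bar{\omega}$, so their only weak-$*$ cluster point is $\bar{\omega}$ itself.

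The core observation is that for every finite $\Lambda\Subset\Gamma$ and every $s\ge 0$,
\begin{align*}
    \omega\circ\Pi_\Lambda\circ\gamma_s^\Lambda = \omega\circ\Pi_\Lambda.
\end{align*}
This follows from $\pi_\Lambda\circ e^{s\mathcal{L}_\Lambda} = \pi_\Lambda$, since $\pi_\Lambda$ is the spectral projection onto $\ker\mathcal{L}_\Lambda$, hence commutes with $e^{s\mathcal{L}_\Lambda}$ and restricts to the identity on its image; the identity transfers to the extensions $\Pi_\Lambda$ and $\gamma_s^\Lambda$ on $\mathfrak{A}$, which are defined by tensoring with identities on the complement. (Alternatively, one can derive this directly from the ergodic-average formula $\omega\circ\Pi_\Lambda = \wslim{T\to\infty} T^{-1}\int_0^T \omega\circ\gamma_t^\Lambda\dif t$ in Lemma~\ref{lem:NESS_in_finite_systems}(2) by a change of variables.)

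Next, fix a local observable $\hat{A}\in\mathfrak{A}_{\loc}$ and restrict attention to indices $\mu$ large enough that $\supp\hat{A}\subset\Lambda_\mu$; these form a cofinal subnet because $(\Lambda_\mu)_\mu$ is a subnet of the directed family $(\Lambda)_{\Lambda\Subset\Gamma}$. Using the invariance from the previous paragraph,
\begin{align*}
    |\omega\circ\Pi_{\Lambda_\mu}(\gamma_t^\Gamma(\hat{A})) - \omega\circ\Pi_{\Lambda_\mu}(\hat{A})|
    &= |\omega\circ\Pi_{\Lambda_\mu}(\gamma_t^\Gamma(\hat{A})) - \omega\circ\Pi_{\Lambda_\mu}(\gamma_t^{\Lambda_\mu}(\hat{A}))| \\
    &\le \|\omega\circ\Pi_{\Lambda_\mu}\|\cdot\|\gamma_t^\Gamma(\hat{A}) - \gamma_t^{\Lambda_\mu}(\hat{A})\|.
\end{align*}
By Lemma~\ref{lem:NESS_in_finite_systems}(2), $\omega\circ\Pi_{\Lambda_\mu}$ is a state, so $\|\omega\circ\Pi_{\Lambda_\mu}\|=1$; by Theorem~\ref{thm:existence_of_dynamics}, the second factor tends to $0$ along $\mu$. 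Passing to the weak-$*$ limit in $\mu$ therefore gives $\bar{\omega}(\gamma_t^\Gamma(\hat{A})) = \bar{\omega}(\hat{A})$. Continuity of $\bar{\omega}$ and of $\gamma_t^\Gamma$, together with density of $\mathfrak{A}_{\loc}$ in $\mathfrak{A}$, extends the identity to every $\hat{A}\in\mathfrak{A}$, yielding $\bar{\omega}\circ\gamma_t^\Gamma = \bar{\omega}$.

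The step I expect to be the real content is the norm estimate above. A priori $\Pi_{\Lambda_\mu}$ can be an oblique projection whose operator norm grows uncontrollably with $\Lambda_\mu$ (precisely because $\mathcal{L}_{\Lambda_\mu}$ can be highly non-normal), which would ruin the bound. What saves the argument is that we pair $\Pi_{\Lambda_\mu}$ with $\omega$ from the outset: Lemma~\ref{lem:NESS_in_finite_systems}(2) identifies $\omega\circ\Pi_{\Lambda_\mu}$ as a genuine state, pinning its dual norm to $1$ regardless of the condition number of $\mathcal{L}_{\Lambda_\mu}$. This is precisely why the invariance holds unconditionally here, even though the commutativity of the long-time and thermodynamic limits for generic initial states $\omega$ required the spectral-gap and condition-number hypotheses of Theorems~\ref{thm:main_theorem} and~\ref{thm:main_theorem2}.
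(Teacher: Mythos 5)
Your proposal is correct and takes essentially the same route as the paper's proof: the key identity $\Pi_\Lambda\circ\gamma_t^\Lambda=\Pi_\Lambda$ (from $\pi_\Lambda\circ e^{t\mathcal{L}_\Lambda}=\pi_\Lambda$), the fact that $\omega\circ\Pi_{\Lambda_\mu}$ is a state (Lemma~\ref{lem:NESS_in_finite_systems}) so its norm is $1$, the convergence $\|\gamma_t^\Gamma(\hat{A})-\gamma_t^{\Lambda_\mu}(\hat{A})\|\to 0$ from Theorem~\ref{thm:existence_of_dynamics}, and passage to the weak-$*$ limit along the subnet. The only cosmetic difference is that you first treat local $\hat{A}$ and then invoke density, whereas the paper runs the same triangle-inequality estimate directly for every $\hat{A}\in\mathfrak{A}$, since Theorem~\ref{thm:existence_of_dynamics} already holds on all of $\mathfrak{A}$.
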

\begin{proof}
    Considering $\pi_\Lambda\circ\mathcal{L}_\Lambda=\mathcal{L}_\Lambda\circ\pi_\Lambda=0$, we obtain $\Pi_\Lambda\circ\gamma_t^\Lambda=\Pi_\Lambda$ for $\Lambda\Subset\Gamma$.
    Hence, we have 
    \begin{align*}
        |\bar{\omega}\circ\gamma_t^{\Gamma}(\hat{A})-\bar{\omega}(\hat{A})| 
        &\le |\bar{\omega}\circ\gamma_t^\Gamma(\hat{A})-\omega\circ\Pi_{\Lambda_\mu}\circ\gamma_t^\Gamma(\hat{A})| \\ 
        &\qquad+ |\omega\circ\Pi_{\Lambda_\mu}\circ\gamma_t^\Gamma(\hat{A})-\omega\circ\Pi_{\Lambda_\mu}\circ\gamma_t^{\Lambda_\mu}(\hat{A})| \\
        &\qquad+ |\omega\circ\Pi_{\Lambda_\mu}\circ\gamma_t^{\Lambda_\mu}(\hat{A})-\bar{\omega}(\hat{A})| \\
        &\le |\bar{\omega}\circ\gamma_t^\Gamma(\hat{A})-\omega\circ\Pi_{\Lambda_\mu}\circ\gamma_t^\Gamma(\hat{A})| \\
        &\qquad+ \|\gamma_t^\Gamma(\hat{A})-\gamma_t^{\Lambda_\mu}(\hat{A})\| \\
        &\qquad+ |\omega\circ\Pi_{\Lambda_\mu}(\hat{A})-\bar{\omega}(\hat{A})| \\
        &\xrightarrow{\mu} 0
    \end{align*}
    for any $\hat{A}\in\mathfrak{A}$.
\end{proof}

\section{An example}\label{sec:example}
In this section, we investigate a specific model for which the NESS (or TANESS) does not coincide with Eq.~\eqref{eq:LTLthenTDL} in Theorem~\ref{thm:main_theorem} (or Eq.~\eqref{eq:LTLthenTDL2} in Theorem~\ref{thm:main_theorem2}) for an initial state due to unboundedness of the condition number.
We can acquire exact time evolution of a certain observable and solve the eigenvalue problem of the Liouvillians to calculate the spectra and the condition number.

\subsection{Description of the model}
Let $\hat{\sigma}^0,\hat{\sigma}^1,\hat{\sigma}^2,\hat{\sigma}^3$ denote the $2\times 2$ identity matrix and the Pauli matrices, i.e.,
\begin{align*}
    \hat{\sigma}^0
    = \begin{pmatrix} 1 & 0 \\ 0 & 1 \end{pmatrix},
    \quad
    \hat{\sigma}^1
    = \begin{pmatrix} 0 & 1 \\ 1 & 0 \end{pmatrix},
    \quad
    \hat{\sigma}^2
    = \begin{pmatrix} 0 & -\ii \\ \ii & 0 \end{pmatrix},
    \quad
    \hat{\sigma}^3
    = \begin{pmatrix} 1 & 0 \\ 0 & -1 \end{pmatrix}.
\end{align*}
Throughout this section, we choose the countable set $\Gamma$ and the algebra $\mathfrak{A}_\Lambda$ for each finite subset $\Lambda\Subset\Gamma$ as
$\Gamma=\mathbb{N}$ and $\mathfrak{A}_\Lambda=B(\mathbb{C}^4)^{\otimes|\Lambda|}$.
We use the distance $\dist(x,y)=|x-y|$ ($x,y\in\mathbb{N}$) over $\Gamma$.
To define the Liouvillians, we give the following. For any $Z\Subset\Gamma$, we assign $\hat{\Psi}(Z)=0$ and
\begin{align*}
    \hat{L}(Z)
    = \begin{dcases*}
        \hat{\sigma}^0\otimes\hat{G}\otimes\hat{\sigma}^0 & if $Z=\{j,j+1\}$, $j\in\mathbb{N}$; \\
        \hat{G} & if $Z=\{j\}$, $j\in\mathbb{N}$; \\ 
        0 & otherwise,
    \end{dcases*}
\end{align*}
with
\begin{align}\label{eq:def_G}
    \hat{G}
    = \frac{1}{2}(\hat{\sigma}^1\otimes\hat{\sigma}^0 + \ii\hat{\sigma}^2\otimes\hat{\sigma}^3)
    \in B(\mathbb{C}^4).
\end{align}
From them, we have a Liouvillian $\mathcal{L}_\Lambda\colon\mathfrak{A}_\Lambda\to\mathfrak{A}_\Lambda$ defined as in Eq.~\eqref{eq:def_of_Liouvillian}.
Unless otherwise noted, we deal only with this Liouvillian throughout this section.

The model above satisfies Assumption~\ref{asm:locality}.
Indeed, Assumption~\ref{asm:locality}~(1) is satisfied by taking $F(a)=(1+a)^{-\alpha}$ with $\alpha>1$ and Assumption~\ref{asm:locality}~(2) holds for any $\mu\ge 0$ because
\begin{align*}
    \|\Psi_Z\|_{\cb}
    \le 2\|\hat{L}(Z)\|^2
    = \begin{dcases*}
        2 & if $Z=\{j\}$ or $\{j,j+1\}$, $j\in\mathbb{N}$; \\
        0 & otherwise.
    \end{dcases*}
\end{align*}
Thereby, the dynamics $\gamma_t^\Gamma$ over the whole quantum spin system $\mathfrak{A}$ exists, so we can define the NESS and the TANESS on the dynamics.

\subsection{Time evolution of an observable and the expectation value for NESS and TANESS}
In this subsection, we calculate $\omega\circ\gamma_t^{\Lambda_n}(\hat{O})$ for a particular observable $\hat{O}=\hat{\sigma}_{j=1}^3\coloneqq\hat{\sigma}^3\otimes\hat{\sigma}^0\otimes\hat{\sigma}^0\otimes\hat{\sigma}^0\otimes\cdots\in\mathfrak{A}$ and subspaces $\Lambda_n=\{1,2,\ldots,n\}\Subset\Gamma$, $n\in\mathbb{N}$.
The calculation of $\omega\circ\gamma_t^{\Lambda_n}$ is simply a matter of linear algebra.

For any self-adjoint operator $\hat{A}\in\mathfrak{A}_\Lambda$, the Liouvillian on $\Lambda_n$ has the form
\begin{align}\label{eq:expression_of_L_Lambda_n}
    \mathcal{L}_{\Lambda_n}(\hat{A})
    = \frac{1}{2}\sum_{m=1}^{2n-1} \left( (\hat{\sigma}^0)^{\otimes(m-1)}\otimes\hat{G}^*\otimes(\hat{\sigma}^0)^{\otimes(2n-m-1)}\left[\hat{A}, (\hat{\sigma}^0)^{\otimes(m-1)}\otimes\hat{G}\otimes(\hat{\sigma}^0)^{\otimes(2n-m-1)} \right] + \text{h.c.} \right),
\end{align}
where $\text{h.c.}$ denotes the Hermitian conjugate of all preceding terms in the parentheses.
For calculation of the time-evolution, the following lemma is useful.

\begin{lem}\label{lem:invariant_subspace}
    The $2n$-dimensional subspace
    \begin{align*}
        \operatorname{span}_\mathbb{C}\left\{ \hat{D}_m \mid 1\le m\le 2n \right\}
    \end{align*}
    of $\mathfrak{A}_{\Lambda_n}$ is an invariant subspace of $\mathcal{L}_{\Lambda_n}$, where
    \begin{align*}
        \hat{D}_m
        \coloneqq (\hat{\sigma}^0)^{\otimes (m-1)}\otimes\hat{\sigma}^3\otimes(\hat{\sigma}^0)^{\otimes(2n-m)}\in\mathfrak{A}_{\Lambda_n}
    \end{align*}
\end{lem}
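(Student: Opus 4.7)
The plan is to verify directly that $\mathcal{L}_{\Lambda_n}(\hat{D}_m)$ lies in $\operatorname{span}_{\mathbb{C}}\{\hat{D}_k : 1 \le k \le 2n\}$ for every $m$, using the explicit decomposition in Eq.~\eqref{eq:expression_of_L_Lambda_n}.

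First, I would invoke locality to prune the sum. The $m'$-th summand in Eq.~\eqref{eq:expression_of_L_Lambda_n} is supported on the two consecutive $\mathbb{C}^2$-sites $\{m', m'+1\}$, while $\hat{D}_m$ is supported on the single site $m$. When $\{m', m'+1\} \cap \{m\} = \emptyset$, both the commutator of $\hat{D}_m$ with the local $\hat{G}$-factor and its hermitian conjugate vanish, so the summand contributes zero. Therefore $\mathcal{L}_{\Lambda_n}(\hat{D}_m)$ is the sum of at most two terms, indexed by $m' = m-1$ and $m' = m$, with the obvious modifications at the boundaries $m=1$ and $m=2n$ where only one such $m'$ lies in $\{1,\ldots,2n-1\}$.

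Next, I would analyze these two local contributions separately. Each one reduces to computing the two-site dissipator $\mathcal{D}(\hat{A}) := \hat{G}^* \hat{A} \hat{G} - \tfrac{1}{2}\{\hat{G}^*\hat{G}, \hat{A}\}$ applied either to $\hat{A} = \hat{\sigma}^3 \otimes \hat{\sigma}^0$ (the ``left-site'' case, $m' = m$) or to $\hat{A} = \hat{\sigma}^0 \otimes \hat{\sigma}^3$ (the ``right-site'' case, $m' = m-1$). Using the definition of $\hat{G}$ in Eq.~\eqref{eq:def_G} -- or, more compactly, the computational-basis form $\hat{G} = |00\rangle\langle 10| + |11\rangle\langle 01|$ that it implies, together with $\hat{G}^*\hat{G} = |01\rangle\langle 01| + |10\rangle\langle 10|$ -- both resulting two-site operators can be rewritten as $\mathbb{C}$-linear combinations of $\hat{\sigma}^3 \otimes \hat{\sigma}^0$ and $\hat{\sigma}^0 \otimes \hat{\sigma}^3$. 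Lifted back to the $2n$-site chain by tensoring with identities, these become combinations of $\hat{D}_{m'}$ and $\hat{D}_{m'+1}$, and hence of elements of the candidate span.

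Combining the two local contributions then expresses $\mathcal{L}_{\Lambda_n}(\hat{D}_m)$ as an explicit linear combination of $\hat{D}_{m-1}, \hat{D}_m$, and $\hat{D}_{m+1}$, which proves the claimed invariance. The main obstacle is pure bookkeeping: one must track signs carefully in the four terms making up each local dissipator, distinguish the left-site versus right-site placement of $\hat{\sigma}^3$ relative to the pair $(m', m'+1)$, and correctly handle the boundary cases $m \in \{1, 2n\}$ where only one neighboring pair contributes. A useful by-product of this computation is an explicit bidiagonal matrix representation of $\mathcal{L}_{\Lambda_n}$ restricted to $\operatorname{span}\{\hat{D}_m\}$, which the authors will presumably exploit in the next subsection to extract the spectrum, to evaluate $\omega \circ \gamma_t^{\Lambda_n}(\hat{\sigma}^3_{j=1})$ in closed form, and ultimately to exhibit the unbounded condition number that drives the example.
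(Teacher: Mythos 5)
Your proposal is correct and matches the paper's own argument: the paper likewise reduces $\mathcal{L}_{\Lambda_n}(\hat{D}_m)$ to the local two-site dissipator acting on $\hat{\sigma}^3\otimes\hat{\sigma}^0$, $\hat{\sigma}^0\otimes\hat{\sigma}^3$, and $\hat{\sigma}^0\otimes\hat{\sigma}^0$ (the last handling the terms pruned by locality), finding $\mathcal{L}_{\Lambda_n}(\hat{D}_m)=-\hat{D}_m+\hat{D}_{m+1}$ for $m<2n$ and $0$ for $m=2n$, hence the bidiagonal matrix $\mathsf{L}_n$ you anticipate. Your auxiliary identities $\hat{G}=|00\rangle\langle 10|+|11\rangle\langle 01|$ and $\hat{G}^*\hat{G}=|01\rangle\langle 01|+|10\rangle\langle 10|$ are correct, so the remaining bookkeeping goes through exactly as you describe (with the right-site contribution vanishing, so no $\hat{D}_{m-1}$ term actually appears).
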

\begin{proof}
    Since on $\mathbb{C}^4$ it holds that
    \begin{align*}
        \hat{G}^* \left[\hat{\sigma}^0\otimes\hat{\sigma}^0, \hat{G} \right]
        = 0, \quad
        \hat{G}^* \left[\hat{\sigma}^0\otimes\hat{\sigma}^3, \hat{G} \right]
        = 0, \quad
        \hat{G}^* \left[\hat{\sigma}^3\otimes\hat{\sigma}^0, \hat{G} \right]
        = -\hat{\sigma}^3\otimes\hat{\sigma}^0 + \hat{\sigma}^0\otimes\hat{\sigma}^3,
    \end{align*}
    we obtain
    \begin{align*}
        \mathcal{L}_{\Lambda_n}(\hat{D}_m)
        = \begin{dcases*}
            -\hat{D}_m + \hat{D}_{m+1} & if $1\le m<2n$ \\
            0 & if $m = 2n$,
        \end{dcases*}
    \end{align*}
    and hence
    \begin{align}\label{eq:reduced_Liouvillian}
        \begin{pmatrix}
            \mathcal{L}_{\Lambda_n}(\hat{D}_1) \\
            \mathcal{L}_{\Lambda_n}(\hat{D}_2) \\
            \mathcal{L}_{\Lambda_n}(\hat{D}_3) \\
            \mathcal{L}_{\Lambda_n}(\hat{D}_4) \\
            \vdots \\
            \mathcal{L}_{\Lambda_n}(\hat{D}_{2n-1}) \\
            \mathcal{L}_{\Lambda_n}(\hat{D}_{2n})
        \end{pmatrix}
        = \mathsf{L}_n
        \begin{pmatrix}
            \hat{D}_1 \\
            \hat{D}_2 \\
            \hat{D}_3 \\
            \hat{D}_4 \\
            \vdots \\
            \hat{D}_{2n-1} \\
            \hat{D}_{2n}
        \end{pmatrix},
        \quad
        \mathsf{L}_n
        \coloneqq
        \begin{pmatrix}
            -1 & 1 & 0 & 0 & \cdots & 0 & 0 \\
            0 & -1 & 1 & 0 & \cdots & 0 & 0 \\
            0 & 0 & -1 & 1 & \cdots & 0 & 0 \\
            0 & 0 & 0 & -1 & \ddots & 0 & 0 \\
            \vdots & \vdots & \vdots & \vdots & \ddots & \ddots & \vdots \\
            0 & 0 & 0 & 0 & \cdots & -1 & 1 \\
            0 & 0 & 0 & 0 & \cdots & 0 & 0
        \end{pmatrix}
        \in \mathrm{M}_{2n}.
    \end{align}
\end{proof}

From this, the dimension of the vector space to be considered reduces to $2n$, while the original $C^*$-algebra $\mathfrak{A}_{\Lambda_n}$ is of $2^{4n}$ dimension.
The Jordan canonical form of $\mathsf{L}_n$ is given by
\begin{gather*}
    \mathsf{L}_n
    = \mathsf{S}_n^{-1}\mathsf{J}_n\mathsf{S}_n, \\
    \mathsf{S}_n^{-1}
    \coloneqq 
    \left( \begin{array}{ccc|c}
         &  &  & 1 \\
         & I_{2n-1} &  & \vdots \\
         &  &  & 1 \\ \hline
        0 & \cdots & 0 & 1
    \end{array} \right), 
    \quad
    \mathsf{S}_n
    = \left( \begin{array}{ccc|c}
         &  &  & -1 \\
         & I_{2n-1} &  & \vdots \\
         &  &  & -1 \\ \hline
        0 & \cdots & 0 & 1
    \end{array} \right), \\
    \mathsf{J}_n
    \coloneqq
    \begin{pmatrix}
        -1 & 1 & 0 & \cdots & 0 & 0 & 0 \\
        0 & -1 & 1 & \cdots & 0 & 0 & 0 \\
        0 & 0 & -1 & \ddots & 0 & 0 & 0 \\
        \vdots & \vdots & \vdots & \ddots & \ddots & \vdots & \vdots \\
        0 & 0 & 0 & \cdots & -1 & 1 & 0 \\
        0 & 0 & 0 & \cdots & 0 & -1 & 0 \\ 
        0 & 0 & 0 & \cdots & 0 & 0 & 0
    \end{pmatrix}, 
\end{gather*}
so we have
\begin{align*}
    \gamma_t^{\Lambda_n}(\hat{\sigma}_{j=1}^3)
    &= e^{t\mathcal{L}_{\Lambda_n}}(\hat{D}_1)
    = \begin{pmatrix}
        1 & 0 & \cdots & 0
    \end{pmatrix}\mathsf{S}_n^{-1} e^{t\mathsf{J}_n} \mathsf{S}_n\begin{pmatrix}
        \hat{D}_1 & \hat{D}_2 & \cdots & \hat{D}_{2n}
    \end{pmatrix}^{\mathsf{T}} \\
    &= e^{-t}\sum_{m=1}^{2n-1}\frac{t^{m-1}}{(m-1)!}\hat{D}_m + \left( 1 - e^{-t}\sum_{m=1}^{2n-1}\frac{t^{m-1}}{(m-1)!} \right)\hat{D}_{2n}.
\end{align*}

Now we take an initial state $\omega_0$ satisfying
\begin{align*}
    \omega_0(\hat{D}_{2j-1})
    = 1,
    \quad
    \omega_0(\hat{D}_{2j})
    = 0,
    \quad
    j\in\mathbb{N}.
\end{align*}
Then, we get
\begin{align*}
    \omega_0\circ\gamma_t^\Gamma(\hat{\sigma}_{j=1}^3)
    = \lim_{n\to\infty} \omega_0\circ\gamma_t^{\Lambda_n}(\hat{\sigma}_{j=1}^3)
    = \lim_{n\to\infty} e^{-t}\sum_{j=1}^n \frac{t^{2j-2}}{(2j-2)!}
    = e^{-t}\cosh t
\end{align*}
for $t\ge 0$.
In this model, it is unclear whether the NESS is unique for the initial state $\omega_0$ or not.
Yet, since $e^{-t}\cosh t$ converges to a common value $\lim_{t\to\infty}e^{-t}\cosh t=1/2$, we have 
\begin{align*}
    \omega_{\sss}(\hat{\sigma}_{j=1}^3)
    = \frac{1}{2}
\end{align*}
for any NESS $\omega_{\sss}$ with the initial state $\omega_0$.
In the same way, we obtain $\omega_{\sss}^{\text{ave}}(\hat{\sigma}_{j=1}^3)=1/2$ for any TANESS $\omega_{\sss}^{\text{ave}}$ with $\omega_0$.

We can also calculate another order of limits in Eqs.~\eqref{eq:LTLthenTDL} and \eqref{eq:LTLthenTDL2}.
Taking the long-time limit before the thermodynamic limit brings 
\begin{align*}
    \lim_{n\to\infty}\lim_{t\to\infty}\omega_0\circ\gamma_t^{\Lambda_n}(\hat{\sigma}_{j=1}^3)
    = \lim_{n\to\infty}\sum_{j=1}^n\frac{1}{(2j-2)!}\lim_{t\to\infty}t^{2j-2}e^{-t}
    = 0
\end{align*}
and
\begin{align*}
    \lim_{n\to\infty}\lim_{T\to\infty}\frac{1}{T}\int_0^T\omega_0\circ\gamma_t^{\Lambda_n}(\hat{\sigma}_{j=1}^3)\dif t
    = \lim_{n\to\infty}\sum_{j=1}^n\frac{1}{(2j-2)!}\lim_{T\to\infty}\frac{1}{T}\int_0^T t^{2j-2}e^{-t}\dif t
    = 0.
\end{align*}
In the present model, thus, the thermodynamic limit and the long-time limit do not commute and hence Eqs.~\eqref{eq:LTLthenTDL} and \eqref{eq:LTLthenTDL2} in Theorems~\ref{thm:main_theorem} and \ref{thm:main_theorem2} do not give the NESS and the TANESS with the initial state $\omega_0$.

Before moving on to another topic, let us discuss the differences between this model and another related model.
We find Eq.~\eqref{eq:reduced_Liouvillian} to be similar with the master equation for unidirectional random walk of a single particle.
An extension of such a classical stochastic process to many-particle systems is well-known as the totally asymmetric simple exclusion process (TASEP).
However, our model differs from the TASEP in some senses.
First, the TASEP has no counterpart of the observables $\hat{\sigma}^1,\hat{\sigma}^2$ in our model, which give rise to the non-commutative structure in the quantum spin system $\mathfrak{A}$.
Second, many-body correlations among $\hat{\sigma}^3$'s caused by the Liouvillian $\mathcal{L}_{\Lambda_n}$ do not reproduce the exclusion process in the TASEP.

\subsection{Spectrum of Liouvillian}
As seen in the previous subsection, our model should violate the assumption (A1) or (A2) in Theorem~\ref{thm:main_theorem}, and (B1) or (B2) in Theorem~\ref{thm:main_theorem2}.
To check this directly, we calculate the exact spectrum and a lower bound of the condition number of each Liouvillian $\mathcal{L}_{\Lambda_n}$.

We first point out that the Liouvillians $\mathcal{L}_{\Lambda_n}$ in our model have a triangular structure for a certain basis.
For $\boldsymbol{\mu}\in\{0,1,2,3\}^n$ with $n\in\mathbb{N}$, we use a notation $\hat{\sigma}^{\boldsymbol{\mu}}=\hat{\sigma}^{\mu_1}\otimes\cdots\otimes\hat{\sigma}^{\mu_n}\in B((\mathbb{C}^2)^{\otimes n})$.

\begin{lem}\label{lem:L_Lambda_n_is_triangular}
    Let $\boldsymbol{\mu}\prec\boldsymbol{\nu}$ be the lexicographical order of $\boldsymbol{\mu},\boldsymbol{\nu}\in\{0,1,2,3\}^n$, i.e., $\boldsymbol{\mu}\prec\boldsymbol{\nu}$ if and only if $\boldsymbol{\mu}\neq\boldsymbol{\nu}$ and $\mu_j<\nu_j$ hold for $j=\min\{i\in\{1,2,\ldots,n\}\mid\mu_i\neq\nu_i\}$.
    For any $n\in\mathbb{N}$, if $\boldsymbol{\mu},\boldsymbol{\nu}\in\{0,1,2,3\}^{2n}$ are ordered by $\boldsymbol{\mu}\prec\boldsymbol{\nu}$, then we have
    \begin{align*}
        \braket{\hat{\sigma}^{\boldsymbol{\mu}},\mathcal{L}_{\Lambda_n}(\hat{\sigma}^{\boldsymbol{\nu}})}_{\text{HS}}
        = 0,
    \end{align*}
    where $\braket{\bullet,\bullet}_{\text{HS}}$ is the Hilbert-Schmidt inner product defined by 
    \begin{align*}
        \braket{\hat{A},\hat{B}}_{\text{HS}}
        \coloneqq\frac{1}{2^n}\tr(\hat{A}^*\hat{B})
    \end{align*}
    for $\hat{A},\hat{B}\in B((\mathbb{C}^2)^{\otimes n})$.
\end{lem}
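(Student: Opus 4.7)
The plan is to exploit the locality of each Lindblad term and reduce the claim to a short two-site case analysis.

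First, I would decompose $\mathcal{L}_{\Lambda_n}=\sum_{m=1}^{2n-1}\Psi_m$, where each $\Psi_m$ acts as the two-site dissipator $\Psi(\hat{X}):=\hat{G}^*\hat{X}\hat{G}-\tfrac12\{\hat{G}^*\hat{G},\hat{X}\}$ on the $m$-th and $(m+1)$-st $\mathbb{C}^2$ factors of $(\mathbb{C}^2)^{\otimes 2n}$ and as the identity on the others. Since distinct Pauli strings are mutually orthogonal under $\langle\cdot,\cdot\rangle_{\text{HS}}$ and the identity is diagonal in the single-site Pauli basis, the tensor-product structure yields
\[
\langle\hat{\sigma}^{\boldsymbol{\mu}},\Psi_m(\hat{\sigma}^{\boldsymbol{\nu}})\rangle_{\text{HS}}=\Bigl(\prod_{j\notin\{m,m+1\}}\delta_{\mu_j,\nu_j}\Bigr)\,\langle\hat{\sigma}^{\mu_m}\!\otimes\!\hat{\sigma}^{\mu_{m+1}},\Psi(\hat{\sigma}^{\nu_m}\!\otimes\!\hat{\sigma}^{\nu_{m+1}})\rangle_{\text{HS}},
\]
which vanishes unless $\boldsymbol{\mu}$ and $\boldsymbol{\nu}$ agree outside $\{m,m+1\}$.

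Next, I would reduce the global statement to a two-site one. Given $\boldsymbol{\mu}\prec\boldsymbol{\nu}$ with $k:=\min\{i:\mu_i\neq\nu_i\}$, only the at most two indices $m\in\{k-1,k\}$ (those satisfying $k\in\{m,m+1\}$) can contribute; and for each such $m$ the global lex ordering of $\boldsymbol{\mu},\boldsymbol{\nu}$---which agree outside $\{m,m+1\}$---is governed by the two-site lex ordering of $(\mu_m,\mu_{m+1})$ versus $(\nu_m,\nu_{m+1})$, so in particular $(\mu_m,\mu_{m+1})\prec(\nu_m,\nu_{m+1})$. The claim thus reduces to showing $\langle\hat{\sigma}^{\boldsymbol{\alpha}},\Psi(\hat{\sigma}^{\boldsymbol{\beta}})\rangle_{\text{HS}}=0$ for all $\boldsymbol{\alpha},\boldsymbol{\beta}\in\{0,1,2,3\}^2$ with $\boldsymbol{\alpha}\prec\boldsymbol{\beta}$.

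Finally, I would verify this two-site triangularity by direct computation, using the explicit form $\hat{G}=|00\rangle\langle 10|+|11\rangle\langle 01|$ from which $\hat{G}^*\hat{G}=\tfrac12(\hat{\sigma}^0\!\otimes\!\hat{\sigma}^0-\hat{\sigma}^3\!\otimes\!\hat{\sigma}^3)$. The 16 two-site Pauli operators split by their parity under $\hat{\sigma}^3\!\otimes\!\hat{\sigma}^3$: the eight mixed-parity ones (exactly one of $\beta_1,\beta_2$ in $\{1,2\}$) anticommute with $\hat{\sigma}^3\!\otimes\!\hat{\sigma}^3$, forcing $\hat{G}^*(\hat{\sigma}^{\beta_1}\!\otimes\!\hat{\sigma}^{\beta_2})\hat{G}=0$ and $\Psi(\hat{\sigma}^{\beta_1}\!\otimes\!\hat{\sigma}^{\beta_2})=-\tfrac12\hat{\sigma}^{\beta_1}\!\otimes\!\hat{\sigma}^{\beta_2}$, which is already diagonal. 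The remaining eight operators form a $\{0,3\}^2$ block and a $\{1,2\}^2$ block of dimension four each, both $\Psi$-invariant; a short computation on each four-dimensional block---the identity $\hat{G}^*[\hat{\sigma}^3\!\otimes\!\hat{\sigma}^0,\hat{G}]=-\hat{\sigma}^3\!\otimes\!\hat{\sigma}^0+\hat{\sigma}^0\!\otimes\!\hat{\sigma}^3$ from the proof of Lemma~\ref{lem:invariant_subspace} is a representative case---then confirms the triangular structure in the local lex order. The main obstacle is just the bookkeeping within these two $4\times 4$ blocks, but the parity decomposition reduces the full $16\times 16$ case analysis to two quick computations.
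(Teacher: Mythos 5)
Your proposal is correct and follows essentially the same route as the paper's proof: decompose $\mathcal{L}_{\Lambda_n}$ into the two-qubit dissipators acting on adjacent tensor factors, factorize the Hilbert--Schmidt inner product so that only $m\in\{k-1,k\}$ can contribute and both satisfy $(\mu_m,\mu_{m+1})\prec(\nu_m,\nu_{m+1})$, and reduce everything to the triangularity of the single two-site map on the $16$ Pauli products. The only difference is cosmetic: the paper tabulates the action of the two-site dissipator $\mathcal{D}$ on all $16$ basis elements explicitly, whereas you dispose of the eight mixed-parity ones via the anticommutation with $\hat{\sigma}^3\otimes\hat{\sigma}^3$ and then check the two invariant $4$-dimensional blocks directly.
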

Note that $\hat{\sigma}^{\boldsymbol{\mu}}$ is normalized with respect to the Hilbert-Schmidt inner product, i.e., 
\begin{align*}
    \braket{\hat{\sigma}^{\boldsymbol{\mu}},\hat{\sigma}^{\boldsymbol{\mu}}}_{\text{HS}}=1.
\end{align*}
\begin{proof}
    We define $\mathcal{D}\colon B(\mathbb{C}^4)\to B(\mathbb{C}^4)$ by
    \begin{align*}
        \mathcal{D}(\hat{A})
        \coloneqq \frac{1}{2}(\hat{G}^*[\hat{A},\hat{G}] + \text{h.c.}),
        \quad
        \hat{A}\in B(\mathbb{C}^4),
    \end{align*}
    where $\hat{G}$ is given in Eq.~\eqref{eq:def_G}.
    A straightforward calculation yields the action of $\mathcal{D}$ to the basis $(\hat{\sigma}^{\mu_1}\otimes\hat{\sigma}^{\mu_2})_{\mu_1,\mu_2\in\{0,1,2,3\}}$ of $B(\mathbb{C}^4)$ as
    \begin{gather}
        \label{eq:action_of_D_top}
        \mathcal{D}(\hat{\sigma}^0\otimes\hat{\sigma}^0)
        = 0,
        \quad
        \mathcal{D}(\hat{\sigma}^0\otimes\hat{\sigma}^1)
        = -\frac{1}{2}\hat{\sigma}^0\otimes\hat{\sigma}^1,
        \\
        \mathcal{D}(\hat{\sigma}^0\otimes\hat{\sigma}^2)
        = -\frac{1}{2}\hat{\sigma}^0\otimes\hat{\sigma}^2,
        \quad
        \mathcal{D}(\hat{\sigma}^0\otimes\hat{\sigma}^3)
        = 0,
        \\
        \mathcal{D}(\hat{\sigma}^1\otimes\hat{\sigma}^0)
        = -\frac{1}{2}\hat{\sigma}^1\otimes\hat{\sigma}^0,
        \quad
        \mathcal{D}(\hat{\sigma}^1\otimes\hat{\sigma}^1)
        = 0,
        \\
        \mathcal{D}(\hat{\sigma}^1\otimes\hat{\sigma}^2)
        = 0,
        \quad
        \mathcal{D}(\hat{\sigma}^1\otimes\hat{\sigma}^3)
        = -\frac{1}{2}\hat{\sigma}^1\otimes\hat{\sigma}^3,
        \\
        \mathcal{D}(\hat{\sigma}^2\otimes\hat{\sigma}^0)
        = -\frac{1}{2}\hat{\sigma}^2\otimes\hat{\sigma}^0,
        \quad
        \mathcal{D}(\hat{\sigma}^2\otimes\hat{\sigma}^1)
        = -\hat{\sigma}^2\otimes\hat{\sigma}^1 + \hat{\sigma}^1\otimes\hat{\sigma}^2,
        \\
        \mathcal{D}(\hat{\sigma}^2\otimes\hat{\sigma}^2)
        = -\hat{\sigma}^2\otimes\hat{\sigma}^2 - \hat{\sigma}^1\otimes\hat{\sigma}^1,
        \quad
        \mathcal{D}(\hat{\sigma}^2\otimes\hat{\sigma}^3)
        = -\frac{1}{2}\hat{\sigma}^2\otimes\hat{\sigma}^3,
        \\
        \mathcal{D}(\hat{\sigma}^3\otimes\hat{\sigma}^0)
        = -\hat{\sigma}^3\otimes\hat{\sigma}^0 + \hat{\sigma}^0\otimes\hat{\sigma}^3,
        \quad
        \mathcal{D}(\hat{\sigma}^3\otimes\hat{\sigma}^1)
        = -\frac{1}{2}\hat{\sigma}^3\otimes\hat{\sigma}^1,
        \\
        \mathcal{D}(\hat{\sigma}^3\otimes\hat{\sigma}^2)
        = -\frac{1}{2}\hat{\sigma}^3\otimes\hat{\sigma}^2,
        \quad
        \mathcal{D}(\hat{\sigma}^3\otimes\hat{\sigma}^3)
        = -\hat{\sigma}^3\otimes\hat{\sigma}^3 + \hat{\sigma}^0\otimes\hat{\sigma}^0.
        \label{eq:action_of_D_bottom}
    \end{gather}
    Thereby, we notice that if $(\mu_1,\mu_2)\prec(\nu_1,\nu_2)$ then we have
    \begin{align*}
        \braket{\hat{\sigma}^{\mu_1}\otimes\hat{\sigma}^{\mu_2},\mathcal{D}(\hat{\sigma}^{\nu_1}\otimes\hat{\sigma}^{\nu_2})}_{\text{HS}}=0.
    \end{align*}

    In view of Eq.~\eqref{eq:expression_of_L_Lambda_n}, $\mathcal{L}_{\Lambda_n}(\hat{\sigma}^{\boldsymbol{\nu}})$ reads 
    \begin{align*}
        \mathcal{L}_{\Lambda_n}(\hat{\sigma}^{\boldsymbol{\nu}})
        = \sum_{m=1}^{2n-1}\left(\bigotimes_{k=1}^{m-1}\hat{\sigma}^{\nu_k}\right)\otimes\mathcal{D}(\hat{\sigma}^{\nu_m}\otimes\hat{\sigma}^{\nu_{m+1}})\otimes\left(\bigotimes_{\ell=m+2}^{2n}\hat{\sigma}^{\nu_\ell}\right),
    \end{align*}
    leading to
    \begin{align*}
        \braket{\hat{\sigma}^{\boldsymbol{\mu}},\mathcal{L}_{\Lambda_n}(\hat{\sigma}^{\boldsymbol{\nu}})}_{\text{HS}}
        = \sum_{m=1}^{2n-1}\left(\prod_{k=1}^{m-1}\delta_{\mu_k,\nu_k}\right)\braket{\hat{\sigma}^{\mu_m}\otimes\hat{\sigma}^{\mu_{m+1}},\mathcal{D}(\hat{\sigma}^{\nu_m}\otimes\hat{\sigma}^{\nu_{m+1}})}_{\text{HS}}\left(\prod_{\ell=m+2}^{2n}\delta_{\mu_\ell,\nu_\ell}\right).
    \end{align*}
    When $\boldsymbol{\mu}\prec\boldsymbol{\nu}$, we have some $\kappa\in\{1,2,\ldots,2n\}$ satisfying both $\mu_m=\nu_m$ for all $1\le m<\kappa$ and $\mu_\kappa<\nu_\kappa$.
    Hence, it follows that
    \begin{align*}
        \braket{\hat{\sigma}^{\boldsymbol{\mu}},\mathcal{L}_{\Lambda_n}(\hat{\sigma}^{\boldsymbol{\nu}})}_{\text{HS}}
        = \braket{\hat{\sigma}^{\mu_{\kappa-1}}\otimes\hat{\sigma}^{\mu_\kappa},\mathcal{D}(\hat{\sigma}^{\nu_{\kappa-1}}\otimes\hat{\sigma}^{\nu_\kappa})}_{\text{HS}}
        + \braket{\hat{\sigma}^{\mu_\kappa}\otimes\hat{\sigma}^{\mu_{\kappa+1}},\mathcal{D}(\hat{\sigma}^{\nu_\kappa}\otimes\hat{\sigma}^{\nu_{\kappa+1}})}_{\text{HS}}.
    \end{align*}
    Since $(\mu_{\kappa-1},\mu_{\kappa})\prec(\nu_{\kappa-1},\nu_\kappa)$ and $(\mu_\kappa,\mu_{\kappa+1})\prec(\nu_\kappa,\nu_{\kappa+1})$ hold, we finally get
    \begin{align*}
        \braket{\hat{\sigma}^{\boldsymbol{\mu}},\mathcal{L}_{\Lambda_n}(\hat{\sigma}^{\boldsymbol{\nu}})}_{\text{HS}}
        = 0
    \end{align*}
    for any $\boldsymbol{\mu},\boldsymbol{\nu}\in\{0,1,2,3\}^{2n}$ with $\boldsymbol{\mu}\prec\boldsymbol{\nu}$.
\end{proof}

This lemma implies that the representation matrix of each $\mathcal{L}_{\Lambda_n}$ are triangular when we take the basis as $(\hat{\sigma}^{\boldsymbol{\mu}})_{\boldsymbol{\mu}\in\{0,1,2,3\}^{2n}}$ aligned with the order determined by $\prec$.
Accordingly, all the eigenvalues of $\mathcal{L}_{\Lambda_n}$ are given by the diagonal elements $\braket{\hat{\sigma}^{\boldsymbol{\mu}},\mathcal{L}_{\Lambda_n}(\hat{\sigma}^{\boldsymbol{\mu}})}_{\text{HS}}$.
On the basis of such an argument, we arrive at the following theorem on the spectral gaps.

\begin{thm}\label{thm:gaps_for_Lambda_n}
    For $n\in\mathbb{N}$, the Liouvillian $\mathcal{L}_{\Lambda_n}$ has four-fold degenerate eigenvalue $0$ and the gaps
    \begin{align*}
        \Delta_{\Lambda_n}
        = \Delta^{\text{p}}_{\Lambda_n}
        = \frac{1}{2},
        \quad
        \Delta^{\text{ex}}_{\Lambda_n}
        = 1.
    \end{align*}
\end{thm}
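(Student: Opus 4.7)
The plan is to invoke Lemma~\ref{lem:L_Lambda_n_is_triangular}, which renders the matrix of $\mathcal{L}_{\Lambda_n}$ in the lexicographically ordered Pauli basis $(\hat{\sigma}^{\boldsymbol{\mu}})_{\boldsymbol{\mu}\in\{0,1,2,3\}^{2n}}$ upper triangular. The spectrum, counted with algebraic multiplicity, is then the multiset of diagonal entries $d(\boldsymbol{\mu})\coloneqq\braket{\hat{\sigma}^{\boldsymbol{\mu}},\mathcal{L}_{\Lambda_n}(\hat{\sigma}^{\boldsymbol{\mu}})}_{\mathrm{HS}}$, and by~\eqref{eq:expression_of_L_Lambda_n} this equals $\sum_{m=1}^{2n-1} d_{\mathcal{D}}(\mu_m,\mu_{m+1})$, where $d_{\mathcal{D}}(\mu,\nu)$ denotes the $\hat{\sigma}^\mu\otimes\hat{\sigma}^\nu$ coefficient of $\mathcal{D}(\hat{\sigma}^\mu\otimes\hat{\sigma}^\nu)$. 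A direct reading of \eqref{eq:action_of_D_top}--\eqref{eq:action_of_D_bottom} shows $d_{\mathcal{D}}\in\{0,-\tfrac12,-1\}$, so every eigenvalue is a real non-positive half-integer, making $\Delta_{\Lambda_n}=\Delta^{\mathrm{p}}_{\Lambda_n}$ automatic; the problem reduces to counting which values occur and checking semisimplicity.

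To pin down the four-fold $0$ eigenvalue I would view the four zero-cost pairs $\{(0,0),(0,3),(1,1),(1,2)\}$ as edges of a directed graph on $\{0,1,2,3\}$; since neither $2$ nor $3$ admits an outgoing zero-cost edge, the only length-$2n$ words with $d(\boldsymbol{\mu})=0$ are $(0^{2n})$, $(1^{2n})$, $(0^{2n-1},3)$, and $(1^{2n-1},2)$, giving algebraic multiplicity exactly $4$. Using $\mathcal{D}(\hat{\sigma}^0\otimes\hat{\sigma}^0)=\mathcal{D}(\hat{\sigma}^0\otimes\hat{\sigma}^3)=\mathcal{D}(\hat{\sigma}^1\otimes\hat{\sigma}^1)=\mathcal{D}(\hat{\sigma}^1\otimes\hat{\sigma}^2)=0$, a one-line application of~\eqref{eq:expression_of_L_Lambda_n} shows the four corresponding Pauli monomials all lie in $\ker\mathcal{L}_{\Lambda_n}$, so the geometric multiplicity is also $4$ and $0$ is semisimple. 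The bound $\Delta_{\Lambda_n}=\Delta^{\mathrm{p}}_{\Lambda_n}=\tfrac12$ is then realised, e.g., by $\boldsymbol{\mu}=(0^k,1^{2n-k})$, whose unique cost-$(-\tfrac12)$ transition forces $d(\boldsymbol{\mu})=-\tfrac12$.

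For the exceptional gap, the upper bound $\Delta^{\mathrm{ex}}_{\Lambda_n}\le 1$ is immediate from Lemma~\ref{lem:invariant_subspace}: the explicit Jordan form $\mathsf{J}_n$ of $\mathsf{L}_n$ displays a Jordan chain of length $2n-1$ at eigenvalue $-1$ inside the invariant subspace $\operatorname{span}\{\hat{D}_m\}$, and any Jordan chain in an invariant subspace lifts to a chain in the full operator, so $-1$ is non-semisimple as soon as $2n-1\ge 2$. The matching lower bound amounts to showing that $-\tfrac12$ is semisimple ($0$ having been handled already).

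I expect this last step to be the delicate one, and the argument I would use rests on the following observation from the tables \eqref{eq:action_of_D_top}--\eqref{eq:action_of_D_bottom}: the only strictly off-diagonal actions of $\mathcal{D}$ occur on exactly four local pairs, namely $(2,1)$, $(2,2)$, $(3,0)$, and $(3,3)$, and each of them satisfies $d_{\mathcal{D}}=-1$. Therefore any $\boldsymbol{\mu}$ containing one of those patterns at some position automatically has $d(\boldsymbol{\mu})\le -1$; equivalently, every $\boldsymbol{\mu}$ with $d(\boldsymbol{\mu})=-\tfrac12$ admits no local mixing at all, and \eqref{eq:expression_of_L_Lambda_n} gives $\mathcal{L}_{\Lambda_n}(\hat{\sigma}^{\boldsymbol{\mu}})=-\tfrac12\,\hat{\sigma}^{\boldsymbol{\mu}}$ exactly. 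The geometric multiplicity of $-\tfrac12$ therefore matches its algebraic multiplicity, so $-\tfrac12$ is semisimple and $\Delta^{\mathrm{ex}}_{\Lambda_n}=1$. Without this coincidence between the off-diagonal support of $\mathcal{D}$ and the condition $d_{\mathcal{D}}=-1$, the alternative would be to enumerate $\{\boldsymbol{\mu}:d(\boldsymbol{\mu})=-\tfrac12\}$ explicitly and analyse the rank of the constraint system inherited from the upper-triangular off-diagonal entries of $\mathcal{L}_{\Lambda_n}$, which looks considerably more awkward.
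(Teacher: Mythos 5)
Your proposal is correct and follows essentially the same route as the paper: triangularity in the lexicographically ordered Pauli basis (Lemma~\ref{lem:L_Lambda_n_is_triangular}) to read the spectrum off the diagonal, the same enumeration of the four zero-diagonal words, the same table-reading showing that a diagonal entry $-\tfrac12$ forces $\hat{\sigma}^{\boldsymbol{\mu}}$ to be an exact eigenvector (hence $-\tfrac12$ semisimple), and non-semisimplicity of $-1$ from the Jordan structure of $\mathsf{L}_n$ on the invariant subspace of Lemma~\ref{lem:invariant_subspace}. Your caveat that this last step needs $2n-1\ge 2$ is shared by the paper's own argument, which likewise only exhibits a non-semisimple eigenvalue $-1$ for $n\ge 2$, so it is not a gap relative to the paper's proof.
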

\begin{proof}
    It is sufficient to calculate $\braket{\hat{\sigma}^{\boldsymbol{\mu}},\mathcal{L}_{\Lambda_n}(\hat{\sigma}^{\boldsymbol{\mu}})}_{\text{HS}}$ for any $\boldsymbol{\mu}\in\{0,1,2,3\}^{2n}$.
    Here we introduce a symbol
    \begin{align*}
        \delta_\nu(\mu)
        = \begin{dcases*}
            1 & if $\mu=\nu$; \\
            0 & if $\mu\neq\nu$.
        \end{dcases*}
    \end{align*}
    Using some computations in the proof of Lemma~\ref{lem:L_Lambda_n_is_triangular}, we obtain
    \begin{align*}
        &\braket{\hat{\sigma}^{\boldsymbol{\mu}},\mathcal{L}_{\Lambda_n}(\hat{\sigma}^{\boldsymbol{\mu}})}_{\text{HS}} \notag \\
        &= \sum_{m=1}^{2n-1}\braket{\hat{\sigma}^{\mu_m}\otimes\hat{\sigma}^{\mu_{m+1}},\mathcal{D}(\hat{\sigma}^{\mu_m}\otimes\hat{\sigma}^{\mu_{m+1}})}_{\text{HS}} \notag \\
        &= \sum_{m=1}^{2n-1}\Bigg[ -\frac{1}{2}(\delta_0(\mu_m)+\delta_3(\mu_m))(\delta_1(\mu_{m+1})+\delta_2(\mu_{m+1})) \notag \\
        &\qquad\qquad- \frac{1}{2}(\delta_1(\mu_m)+\delta_2(\mu_m))(\delta_0(\mu_{m+1})+\delta_3(\mu_{m+1})) \notag \\
        &\qquad\qquad- \delta_2(\mu_m)(\delta_1(\mu_{m+1})+\delta_{2}(\mu+1)) - \delta_3(\mu_m)(\delta_0(\mu_{m+1})+\delta_3(\mu_{m+1})) \Bigg] \notag \\
        &= -\frac{1}{2}\sum_{m=1}^{2n-1}\Big[ \delta_0(\mu_m)(\delta_1(\mu_{m+1})+\delta_2(\mu_{m+1}))
        + \delta_1(\mu_m)(\delta_0(\mu_{m+1})+\delta_3(\mu_{m+1})) \notag \\
        &\qquad\qquad\quad+ \delta_2(\mu_m)(1+\delta_1(\mu_{m+1})+\delta_2(\mu_{m+1})) + \delta_3(\mu_m)(1+\delta_0(\mu_{m+1})+\delta_3(\mu_{m+1})) \Big].
    \end{align*}
    Since all terms in the sum is a nonnegative integer, $\braket{\hat{\sigma}^{\boldsymbol{\mu}},\mathcal{L}_{\Lambda_n}(\hat{\sigma}^{\boldsymbol{\mu}})}_{\text{HS}}$ can be zero or a negative half-integer, i.e., 
    \begin{align*}
        \sigma_{B(\mathfrak{A}_{\Lambda_n})}(\mathcal{L}_{\Lambda_n})
        \subset -\frac{1}{2}\mathbb{Z}_{\ge 0}
        = \left\{ 0, -\frac{1}{2},-1,-\frac{3}{2},\ldots \right\}.
    \end{align*}

    To obtain $\braket{\hat{\sigma}^{\boldsymbol{\mu}},\mathcal{L}_{\Lambda_n}(\hat{\sigma}^{\boldsymbol{\mu}})}_{\text{HS}}=0$, it is necessary to see either $(\mu_m,\mu_{m+1})=(0,0),(0,3),(1,1),$ or $(1,2)$ for any $m=1,2,\ldots,2n$. 
    This condition is satisfied only in four cases $\boldsymbol{\mu}=(0,\ldots,0,0),(0,\ldots,0,3),(1,\ldots,1,1),$ and $(1,\ldots,1,2)$.
    Conversely, we can check $\mathcal{L}_{\Lambda_n}(\hat{\sigma}^{\boldsymbol{\mu}})=0$ for the four types of $\boldsymbol{\mu}$.
    Hence, the eigenvalue 0 of $\mathcal{L}_{\Lambda_n}$ is four-fold degenerate.

    Next, we will show that $-1/2$ is a semisimple eigenvalue of $\mathcal{L}_{\Lambda_n}$. 
    It is easy to check that $-1/2$ is an eigenvalue. 
    For example, we can show that $\mathcal{L}_{\Lambda_n}((\hat{\sigma}^0)^{\otimes(2n-1)}\otimes\hat{\sigma}^1)=(-1/2)(\hat{\sigma}^0)^{\otimes(2n-1)}\otimes\hat{\sigma}^1$.
    To see the semisimplicity, it is sufficient to prove that for any $\boldsymbol{\mu}\in\{0,1,2,3\}^{2n}$ satisfying $\braket{\hat{\sigma}^{\boldsymbol{\mu}},\mathcal{L}_{\Lambda_n}(\hat{\sigma}^{\boldsymbol{\mu}})}_{\text{HS}}=-1/2$, we have $\mathcal{L}_{\Lambda_n}(\hat{\sigma}^{\boldsymbol{\mu}})=(-1/2)\hat{\sigma}^{\boldsymbol{\mu}}$.
    When $\braket{\hat{\sigma}^{\boldsymbol{\mu}},\mathcal{L}_{\Lambda_n}(\hat{\sigma}^{\boldsymbol{\mu}})}_{\text{HS}}=-1/2$ holds, we have some $\kappa\in\{1,2,\ldots,2n-1\}$ such that
    \begin{align}\label{eq:diagonal_element_is_-1/2}
        \braket{\hat{\sigma}^{\mu_\kappa}\otimes\hat{\sigma}^{\mu_{\kappa+1}},\mathcal{D}(\hat{\sigma}^{\mu_\kappa}\otimes\hat{\sigma}^{\mu_{\kappa+1}})}_{\text{HS}}
        = -\frac{1}{2}
    \end{align}
    and
    \begin{align}\label{eq:diagonal_element_is_0}
        \braket{\hat{\sigma}^{\mu_m}\otimes\hat{\sigma}^{\mu_{m+1}},\mathcal{D}(\hat{\sigma}^{\mu_m}\otimes\hat{\sigma}^{\mu_{m+1}})}_{\text{HS}}
        = 0
    \end{align}
    for any $m\neq \kappa$.
    From Eqs.~\eqref{eq:action_of_D_top}-\eqref{eq:action_of_D_bottom}, Eq.~\eqref{eq:diagonal_element_is_-1/2} follows $\mathcal{D}(\hat{\sigma}^{\mu_\kappa}\otimes\hat{\sigma}^{\mu_{\kappa+1}})=(-1/2)\hat{\sigma}^{\mu_\kappa}\otimes\hat{\sigma}^{\mu_{\kappa+1}}$ and Eq.~\eqref{eq:diagonal_element_is_0} does $\mathcal{D}(\hat{\sigma}^{\mu_m}\otimes\hat{\sigma}^{\mu_{m+1}})=0$. 
    Thus, $\braket{\hat{\sigma}^{\boldsymbol{\mu}},\mathcal{L}_{\Lambda_n}(\hat{\sigma}^{\boldsymbol{\mu}})}_{\text{HS}}=-1/2$ results in $\mathcal{L}_{\Lambda_n}(\hat{\sigma}^{\boldsymbol{\mu}})=(-1/2)\hat{\sigma}^{\boldsymbol{\mu}}$.
    Consequently, we get $\Delta_{\Lambda_n}=\Delta^{\text{p}}_{\Lambda_n}=1/2$ and $\Delta^{\text{ex}}_{\Lambda_n}\ge 1$.
    Recalling Lemma~\ref{lem:invariant_subspace} and its proof, we find that $1$ is a non-semisimple eigenvalue of $\mathcal{L}_{\Lambda_n}$ and obtain $\Delta^{\text{ex}}_{\Lambda_n}=1$.
\end{proof}

We can extend this results to general $\Lambda\Subset\Gamma=\mathbb{N}$.

\begin{cor}\label{cor:gaps_for_general_Lambda}
    For any nonempty finite subset $\Lambda\Subset\Gamma$, the gaps of $\mathcal{L}_\Lambda$ are 
    \begin{align}\label{eq:gaps_for_general_Lambda}
        \Delta_\Lambda
        = \Delta^{\mathrm{p}}_\Lambda
        = \frac{1}{2},
        \quad
        \Delta^{\text{ex}}_\Lambda
        = 1.
    \end{align}
    In particular, for Eqs.~\eqref{eq:line_gap}, \eqref{eq:point_gap}, and \eqref{eq:existence_of_ex_gap}, we can choose
    \begin{align*}
        \Delta
        = \Delta^{\text{p}}
        = \frac{1}{2},
        \quad
        \Delta^{\text{ex}}
        = 1.
    \end{align*}
\end{cor}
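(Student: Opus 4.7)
The plan is to reduce to Theorem~\ref{thm:gaps_for_Lambda_n} by decomposing an arbitrary finite $\Lambda \Subset \mathbb{N}$ into its maximal consecutive runs and exploiting the short-range, translation-invariant structure of the Lindblad data. Write $\Lambda = R_1 \sqcup R_2 \sqcup \cdots \sqcup R_K$ where each $R_k = \{j_k, j_k+1, \ldots, j_k + n_k - 1\}$ is a maximal interval (so $j_{k+1} \geq j_k + n_k + 1$). Because $\hat{L}(Z) \neq 0$ only when $Z$ is a singleton or a pair of consecutive integers, and every such $Z \subset \Lambda$ must lie entirely within one run, the identification $\mathfrak{A}_\Lambda = \bigotimes_{k=1}^K \mathfrak{A}_{R_k}$ yields the Kronecker-sum decomposition
\begin{align*}
    \mathcal{L}_\Lambda = \sum_{k=1}^K \Bigl( \mathcal{L}_{R_k} \otimes \bigotimes_{k' \neq k} \mathrm{id}_{\mathfrak{A}_{R_{k'}}} \Bigr).
\end{align*}
By translation invariance of $\hat{L}(Z)$, each $\mathcal{L}_{R_k}$ is unitarily equivalent to $\mathcal{L}_{\Lambda_{n_k}}$, so its spectrum and Jordan data are furnished by Theorem~\ref{thm:gaps_for_Lambda_n}.

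Next, I invoke the classical Kronecker-sum formula, which gives the spectrum of $\mathcal{L}_\Lambda$ as the Minkowski sum $\{\sum_k \lambda_k : \lambda_k \in \sigma(\mathcal{L}_{R_k})\}$. Since each $\sigma(\mathcal{L}_{R_k}) \subset -\tfrac{1}{2}\mathbb{Z}_{\geq 0}$ with $0$ present and $-1/2$ the nearest other eigenvalue, the sums yield $\sigma(\mathcal{L}_\Lambda) \subset -\tfrac{1}{2}\mathbb{Z}_{\geq 0}$, $0$ is attained (all $\lambda_k = 0$), and $-1/2$ is attained (one $\lambda_k = -1/2$, the rest $0$). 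This immediately gives $\Delta_\Lambda = \Delta^{\mathrm{p}}_\Lambda = 1/2$.

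For the exceptional gap, I use the analogue of the Kronecker-sum formula at the level of Jordan structure: the generalized eigenspace of $\mathcal{L}_\Lambda$ at $\lambda$ decomposes as a direct sum, over tuples $(\lambda_k)$ with $\sum_k \lambda_k = \lambda$, of the tensor products of generalized eigenspaces, and the nilpotent part on each summand is itself a Kronecker sum of the nilpotent parts of the $\mathcal{L}_{R_k}$. In particular, if every $\lambda_k$ in every such decomposition is semisimple, so is $\lambda$; conversely, if some decomposition contains a non-semisimple $\lambda_k$, then $\lambda$ is non-semisimple. By Theorem~\ref{thm:gaps_for_Lambda_n}, both $0$ and $-1/2$ are semisimple in each $\mathcal{L}_{R_k}$, hence the only decompositions contributing to $\lambda = 0$ (all $\lambda_k = 0$) and $\lambda = -1/2$ (one $-1/2$, rest $0$) are purely semisimple; therefore every non-semisimple eigenvalue of $\mathcal{L}_\Lambda$ has real part at most $-1$, i.e., $\Delta^{\mathrm{ex}}_\Lambda \geq 1$. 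When some $|R_k| \geq 2$, Theorem~\ref{thm:gaps_for_Lambda_n} supplies a non-semisimple $-1$ eigenvalue in $\mathcal{L}_{R_k}$, whence $-1$ is non-semisimple in $\mathcal{L}_\Lambda$ and $\Delta^{\mathrm{ex}}_\Lambda = 1$; in the degenerate case that all $|R_k| = 1$, $\mathcal{L}_\Lambda$ has no non-semisimple eigenvalues at all, so $\Delta^{\mathrm{ex}}_\Lambda = +\infty$ by the convention $\min \emptyset = +\infty$, which still satisfies the claimed lower bound. Either way, $\Delta^{\mathrm{ex}} = 1$ is a valid uniform choice.

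The main obstacle is the careful application of the Kronecker-sum Jordan-structure fact. The eigenvalue statement is routine, but tracking semisimplicity across $K$ factors requires the observation that a Kronecker sum of nilpotent operators vanishes only when each summand does — a standard but easily overlooked point. The translation-invariance identification $\mathcal{L}_{R_k} \cong \mathcal{L}_{\Lambda_{n_k}}$ is immediate from the position-independence of $\hat{\Phi}(Z)$ and $\hat{L}(Z)$ and requires no further work.
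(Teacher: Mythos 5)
Your proposal is correct, and it follows the same overall reduction as the paper --- decompose $\Lambda$ into its maximal runs of consecutive sites and invoke Theorem~\ref{thm:gaps_for_Lambda_n} for each run --- but it handles the recombination step differently, and in fact more carefully. The paper writes $\mathcal{L}_\Lambda=\bigoplus_\eta\mathcal{L}_{Z_\eta}$ and takes the spectrum (and exceptional spectrum) to be the \emph{union} of those of the blocks; taken literally this is inaccurate, since on the tensor product $\mathfrak{A}_\Lambda=\bigotimes_\eta\mathfrak{A}_{Z_\eta}$ the Liouvillian is a Kronecker sum, whose spectrum is the set of \emph{sums} of block eigenvalues rather than their union (e.g.\ $-3/2$ occurs for two disjoint runs even when it lies in neither block's spectrum). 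You instead use the Kronecker-sum structure explicitly: Minkowski-sum spectrum, tensor decomposition of generalized eigenspaces, and the observation that a Kronecker sum of nilpotents vanishes only if every summand does. Combined with the fact that all nonzero block eigenvalues have real part $\le-1/2$ (Lemma~\ref{lem:spectrum_in_finite_systems} plus Theorem~\ref{thm:gaps_for_Lambda_n}), this yields $\Delta_\Lambda=\Delta^{\mathrm{p}}_\Lambda=1/2$ and $\Delta^{\mathrm{ex}}_\Lambda\ge 1$ exactly as you argue; the paper's looser bookkeeping reaches the same gap values only because the extremal eigenvalues happen to coincide in the union and in the sum-set. Your treatment also surfaces an edge case the paper's statement glosses over: when every run is a single site, the Liouvillian is diagonalizable (the $-1$ eigenvalue pairs with kernel elements in $2\times 2$ blocks of the form appearing in the proof of Lemma~\ref{lem:L_Lambda_n_is_triangular}, and $\mathsf{L}_1$ itself is semisimple), so $\Delta^{\mathrm{ex}}_\Lambda=+\infty$ rather than $1$; as you note, this only strengthens the inequality needed for (A2)/(B2), so the ``in particular'' choices $\Delta=\Delta^{\mathrm{p}}=1/2$, $\Delta^{\mathrm{ex}}=1$ are unaffected, while the literal equality $\Delta^{\mathrm{ex}}_\Lambda=1$ claimed in the corollary (and in Theorem~\ref{thm:gaps_for_Lambda_n} at $n=1$) fails only in that degenerate case.
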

\begin{proof}
    For any $\emptyset\neq\Lambda\Subset\Gamma$, there exists a unique decomposition 
    \begin{align*}
        \Lambda
        = \coprod_{\eta=1}^\Xi Z_\eta
    \end{align*}
    with the set $Z_\eta=\{n\in\mathbb{N}\mid n_\eta^0\le n\le n_\eta^1\}$ of successive integers satisfying $n_\eta^1<n_{\eta+1}^0$ if $\Xi\ge 2$.
    The Liouvillian $\mathcal{L}_\Lambda$ over $\mathfrak{A}_\Lambda=\bigotimes_{\eta=1}^\Xi\mathfrak{A}_{Z_\eta}$ is also decomposed into the direct sum
    \begin{align*}
        \mathcal{L}_\Lambda
        = \bigoplus_{\eta=1}^\Xi\mathcal{L}_{Z_\eta},
    \end{align*}
    so the spectrum of $\mathcal{L}_\Lambda$ is simply the union of the spectra of $\mathcal{L}_{Z_\eta}$, $\eta=1,2,\ldots,\Xi$, i.e.,
    \begin{align*}
        \sigma_{B(\mathfrak{A}_\Lambda)}(\mathcal{L}_\Lambda)
        = \bigcup_{\eta=1}^\xi \sigma_{B(\mathfrak{A}_{Z_\eta})}(\mathcal{L}_{Z_\eta}),
        \quad
        \sigma^{\text{ex}}_{B(\mathfrak{A}_\Lambda)}(\mathcal{L}_\Lambda)
        = \bigcup_{\eta=1}^\xi \sigma^{\text{ex}}_{B(\mathfrak{A}_{Z_\eta})}(\mathcal{L}_{Z_\eta}).
    \end{align*}
    Since $\mathcal{L}_{Z_\eta}$ is equivalent to $\mathcal{L}_{\Lambda_{n(\eta)}}$ with $n(\eta)=|Z_\eta|=n_\eta^1-n_\eta^0+1$ as a linear map, from Theorem~\ref{thm:gaps_for_Lambda_n} we have Eq.~\eqref{eq:gaps_for_general_Lambda}.
\end{proof}

\subsection{Condition number}

The results in Corollary~\ref{cor:gaps_for_general_Lambda} implies that the present model satisfies the assumptions (A1) and (B1) in Theorems~\ref{thm:main_theorem} and \ref{thm:main_theorem2}.
Therefore, both assumptions (A2) and (B2) should not be satisfied, i.e., for any $\epsilon\in(0,\Delta^{\text{ex}}=1)$ the condition number $\kappa_{\Lambda}(\mathcal{V}_{\Lambda}^\epsilon)$ should be unbounded with respect to $\Lambda\Subset\Gamma$.
In this subsection, we will check this by showing a lower bound for $\kappa_{\Lambda_n}(\mathcal{V}_{\Lambda_n}^\epsilon)$ diverges to infinity as $n\to\infty$.

To evaluate the condition number for our model, we determine the generalized eigenspace associated with the eigenvalue $-1$ of $\mathcal{L}_{\Lambda_n}$.
We first compute the algebraic multiplicity of the eigenvalue $-1$. From Lemma~\ref{lem:L_Lambda_n_is_triangular}, we just have to count how many $\boldsymbol{\mu}\in\{0,1,2,3\}^{2n}$ satisfy $\braket{\hat{\sigma}^{\boldsymbol{\mu}},\mathcal{L}_{\Lambda_n}(\hat{\sigma}^{\boldsymbol{\mu}})}_{\text{HS}}=-1$.
There are two cases to realize $\braket{\hat{\sigma}^{\boldsymbol{\mu}},\mathcal{L}_{\Lambda_n}(\hat{\sigma}^{\boldsymbol{\mu}})}_{\text{HS}}=-1$ as follows: 
\begin{enumerate}[(i)]
    \item There exists $\kappa\in\{1,2,\ldots,2n-1\}$ such that it holds that
    \begin{align}\label{eq:diagonal_element_is_-1}
        \braket{\hat{\sigma}^{\mu_\kappa}\otimes\hat{\sigma}^{\mu_{\kappa+1}},\mathcal{D}(\hat{\sigma}^{\mu_\kappa}\otimes\hat{\sigma}^{\mu_{\kappa+1}})}_{\text{HS}}
        = -1
    \end{align}
    and 
    \begin{align}\label{eq:diagonal_element_is_0_condnum}
        \braket{\hat{\sigma}^{\mu_m}\otimes\hat{\sigma}^{\mu_{m+1}},\mathcal{D}(\hat{\sigma}^{\mu_m}\otimes\hat{\sigma}^{\mu_{m+1}})}_{\text{HS}}
        = 0
    \end{align}
    for any $m\neq\kappa$;
    \item There exist $\kappa_1,\kappa_2\in\{1,2,\ldots,2n-1\}$ with $\kappa_1 \neq \kappa_2$ such that it holds that
    \begin{align*}
        \braket{\hat{\sigma}^{\mu_m}\otimes\hat{\sigma}^{\mu_{m+1}},\mathcal{D}(\hat{\sigma}^{\mu_m}\otimes\hat{\sigma}^{\mu_{m+1}})}_{\text{HS}}
        = -\frac{1}{2}
    \end{align*}
    for $m=\kappa_1,\kappa_2$ and
    \begin{align*}
        \braket{\hat{\sigma}^{\mu_m}\otimes\hat{\sigma}^{\mu_{m+1}},\mathcal{D}(\hat{\sigma}^{\mu_m}\otimes\hat{\sigma}^{\mu_{m+1}})}_{\text{HS}}
        = 0
    \end{align*}
    for $m\neq \kappa_1,\kappa_2$.
\end{enumerate}
We begin with the case (i).
To obtain Eq.~\eqref{eq:diagonal_element_is_-1} it is necessary and sufficient to satisfy $(\mu_\kappa,\mu_{\kappa+1})=(3,0),(3,3),(2,1),$ or $(2,2)$ while Eq.~\eqref{eq:diagonal_element_is_0_condnum} is equivalent to $(\mu_\kappa,\mu_{\kappa+1})=(0,0),(0,3),(1,1),$ or $(1,2)$.
Therefore, all possible $\boldsymbol{\mu}\in\{0,1,2,3\}^{2n}$ to realize the case (i) are following $4(2n-1)$ patterns:
\begin{itemize}
    \item $\mu_\kappa = 3$ and $\mu_m = 0$ for some $\kappa\in\{1,2,\ldots,2n-1\}$ and any $m\neq\kappa$;
    \item $\mu_\kappa = 3$, $\mu_{2n}=3$, and $\mu_m = 0$ for some $\kappa\in\{1,2,\ldots,2n-1\}$ and any $m\neq\kappa,2n$;
    \item $\mu_\kappa = 2$ and $\mu_m = 1$ for some $\kappa\in\{1,2,\ldots,2n-1\}$ and any $m\neq\kappa$;
    \item $\mu_\kappa = 2$, $\mu_{2n}=2$, and $\mu_m = 1$ for some $\kappa\in\{1,2,\ldots,2n-1\}$ and any $m\neq\kappa,2n$.
\end{itemize}
Corresponding to the $\boldsymbol{\mu}$'s, we have $4(2n-1)$ operators $\hat{D}_m,\hat{D}_m\hat{E},\hat{D}_m\hat{X},\hat{D}_m\hat{E}\hat{X}\in\mathfrak{A}_{\Lambda_n}$ ($m=1,2,\ldots,2n-1$), where $\hat{D}_m$ was defined in Lemma~\ref{lem:invariant_subspace} and $\hat{E},\hat{X}$ are introduced by
\begin{align*}
    \hat{E}
    = \hat{D}_{2n}
    = (\hat{\sigma}^0)^{\otimes(2n-1)}\otimes\hat{\sigma}^3,
    \quad
    \hat{X}
    = (\hat{\sigma}^2)^{\otimes 2n}.
\end{align*}
In fact, the unitary operators $\hat{E},\hat{X}$ define $\mathbb{Z}_2$ strong symmetries of the Liouvillian $\mathcal{L}_{\Lambda_n}$ (cf. \cite{Buca2012}),
\begin{align*}
    [\hat{L}(Z),\hat{E}]
    = [\hat{L}(Z),\hat{X}]
    = 0,
    \quad
    Z\subset\Lambda_n,
    \quad
    \hat{E}^2
    =\hat{X}^2
    = \hat{I}.
\end{align*}
From the symmetries and Eq.~\eqref{eq:reduced_Liouvillian}, it follows that
\begin{align*}
    \begin{pmatrix}
            \mathcal{L}_{\Lambda_n}(\hat{D}_1\hat{E}^\alpha\hat{X}^\beta) \\
            \mathcal{L}_{\Lambda_n}(\hat{D}_2\hat{E}^\alpha\hat{X}^\beta) \\
            \vdots \\
            \mathcal{L}_{\Lambda_n}(\hat{D}_{2n}\hat{E}^\alpha\hat{X}^\beta)
        \end{pmatrix}
        = \mathsf{L}_n
        \begin{pmatrix}
            \hat{D}_1\hat{E}^\alpha\hat{X}^\beta \\
            \hat{D}_2\hat{E}^\alpha\hat{X}^\beta \\
            \vdots \\
            \hat{D}_{2n}\hat{E}^\alpha\hat{X}^\beta
        \end{pmatrix}
\end{align*}
for $\alpha,\beta\in\{0,1\}$.
Thus, we find four operators $\hat{D}_{2n}\hat{E}^\alpha\hat{X}^\beta$ ($\alpha,\beta\in\{0,1\}$) are associated with the eigenvalue 0 of $\mathcal{L}_{\Lambda_n}$ and $4(2n-1)$ operators $\hat{D}_m\hat{E}^\alpha\hat{X}^\beta$ ($m=1,2,\ldots,2n-1$) are elements of the generalized eigenspace with the eigenvalue $-1$ of $\mathcal{L}_{\Lambda_n}$:
\begin{align*}
    \mathcal{L}_{\Lambda_n}(\hat{D}_{2n}\hat{E}^\alpha\hat{X}^\beta)
    = 0,
    \quad
    (\mathcal{L}_{\Lambda_n})^{2n-m}((\hat{D}_m-\hat{D}_{2n})\hat{E}^\alpha\hat{X}^\beta)
    = -(\hat{D}_m-\hat{D}_{2n})\hat{E}^\alpha\hat{X}^\beta.
\end{align*}
On the other hand, if $\boldsymbol{\mu}\in\{0,1,2,3\}^{2n}$ belongs to the case (ii), 
we have $\mathcal{L}_{\Lambda_n}(\hat{\sigma}^{\boldsymbol{\mu}})=-\hat{\sigma}^{\boldsymbol{\mu}}$ in the same way as the proof of Theorem~\ref{thm:gaps_for_Lambda_n}.
Let $K_n$ be the number of $\boldsymbol{\mu}$'s satisfying the conditions in the case (ii) and $\hat{F}_1,\hat{F}_2,\ldots,\hat{F}_{K_n}$ the distinct operators of the form $\hat{\sigma}^{\boldsymbol{\mu}}$ satisfying $\mathcal{L}_{\Lambda_n}(\hat{F}_q)=-\hat{F}_q$ for $q=1,2,\ldots,K_n$.
After all, we get the full generalized eigenspace 
\begin{align*}
    \operatorname{span}_\mathbb{C}\{\hat{D}_m\hat{E}^\alpha\hat{X}^\beta,\hat{F}_q \mid m\in\{1,2,\ldots,2n-1\},\alpha,\beta\in\{0,1\},q\in\{1,2,\ldots,K_n\}\},
\end{align*}
whose dimension $4(2n-1)+K_n$ equals the algebraic multiplicity of the eigenvalue $-1$ of $\mathcal{L}_{\Lambda_n}$.
Note that $4(2n-1)+K_n$ operators $\hat{D}_m\hat{E}^\alpha\hat{X}^\beta,\hat{F}_q$ are entirely orthonormal because they are of the form $e^{\ii\theta}\hat{\sigma}^{\boldsymbol{\mu}}$ with $\theta\in\{0,\pi/2,\pi\}$.
Also, the subspace
\begin{align*}
    \mathfrak{I}_{\Lambda_n}
    \coloneqq \operatorname{span}_\mathbb{C}\{\hat{D}_m\hat{E}^\alpha\hat{X}^\beta,\hat{F}_q \mid m\in\{1,2,\ldots,2n\},\alpha,\beta\in\{0,1\},q\in\{1,2,\ldots,K_n\}\}
\end{align*}
of $\mathfrak{A}_{\Lambda_n}$ is an invariant subspace of $\mathcal{L}_{\Lambda_n}$.

Using the result above, we consider the direct-sum decomposition $\mathfrak{A}_{\Lambda_n}\simeq\mathfrak{I}_{\Lambda_n}\oplus(\mathfrak{A}_{\Lambda_n}/\mathfrak{I}_{\Lambda_n})$.
The Liouvillian $\mathcal{L}_{\Lambda_n}$ is also decomposed into
\begin{align*}
    \mathcal{U}_{\Lambda_n} \mathcal{L}_{\Lambda_n} \mathcal{U}_{\Lambda_n}^*
    = \begin{pmatrix}
        \mathcal{L}_{\Lambda_n}|_{\mathfrak{I}_{\Lambda_n}} & \mathcal{M}_{\mathfrak{I}_{\Lambda_n}} \\
        0 & [\mathcal{L}_{\Lambda_n}]
    \end{pmatrix}
\end{align*}
for some unitary $\mathcal{U}_{\Lambda_n}\colon\mathfrak{A}_{\Lambda_n}\to\mathfrak{I}_{\Lambda_n}\oplus(\mathfrak{A}_{\Lambda_n}/\mathfrak{I}_{\Lambda_n})$, where $\mathcal{L}_{\Lambda_n}|_{\mathfrak{I}_{\Lambda_n}}\colon\mathfrak{I}_{\Lambda_n}\to\mathfrak{I}_{\Lambda_n}$ is the restriction of $\mathcal{L}_{\Lambda_n}$ onto the invariant subspace $\mathfrak{I}_{\Lambda_n}$ and $[\mathcal{L}_{\Lambda_n}]\colon\mathfrak{A}_{\Lambda_n}/\mathfrak{I}_{\Lambda_n}\to\mathfrak{A}_{\Lambda_n}/\mathfrak{I}_{\Lambda_n}$ is the linear map induced by $\mathcal{L}_{\Lambda_n}$ on the quotient space $\mathfrak{A}_{\Lambda_n}/\mathfrak{I}_{\Lambda_n}$.
By the construction of $\mathfrak{I}_{\Lambda_n}$, the spectra of $\mathcal{L}_{\Lambda_n}|_{\mathfrak{I}_{\Lambda_n}}$ and $[\mathcal{L}_{\Lambda_n}]$ are $\{0,-1\}$ and $\sigma_{B(\mathfrak{A}_{\Lambda_n})}(\mathcal{L}_{\Lambda_n})\setminus\{0,-1\}$, respectively.
Let $d_{\text{inv}}=4(2n-1)+K_n$ be the dimension of $\mathfrak{I}_{\Lambda_n}$ and $d_{\text{quot}}=2^{4n}-d_{\text{inv}}$ that of $\mathfrak{A}_{\Lambda_n}/\mathfrak{I}_{\Lambda_n}$.
We transform $(1-\epsilon)^{-1}\mathcal{L}_{\Lambda_n}|_{\mathfrak{I}_{\Lambda_n}}$ and $(1-\epsilon)^{-1}[\mathcal{L}_{\Lambda_n}]$ into the Jordan canonical forms, 
\begin{align*}
    \mathcal{J}_{\Lambda_n}^{\text{inv},\epsilon}
    = \mathcal{V}_{\Lambda_n}^{\text{inv},\epsilon}\circ\frac{\mathcal{L}_{\Lambda_n}|_{\mathfrak{I}_{\Lambda_n}}}{1-\epsilon}\circ(\mathcal{V}_{\Lambda_n}^{\text{inv},\epsilon})^{-1},
    \quad
    \mathcal{J}_{\Lambda_n}^{\text{quot},\epsilon}
    = \mathcal{V}_{\Lambda_n}^{\text{quot},\epsilon}\circ\frac{[\mathcal{L}_{\Lambda_n}]}{1-\epsilon}\circ(\mathcal{V}_{\Lambda_n}^{\text{quot},\epsilon})^{-1}
\end{align*}
with invertible operators $\mathcal{V}_{\Lambda_n}^{\text{inv},\epsilon}\colon\mathfrak{I}_{\Lambda_n}\to\mathbb{C}^{d_{\text{inv}}}$ and $\mathcal{V}_{\Lambda_n}^{\text{quot},\epsilon}\colon\mathfrak{A}_{\Lambda_n}/\mathfrak{I}_{\Lambda_n}\to\mathbb{C}^{d_{\text{quot}}}$.
Using these operators, we further perform the similarity transformation
\begin{align*}
    &(\mathcal{V}_{\Lambda_n}^{\text{inv},\epsilon}\oplus\mathcal{V}_{\Lambda_n}^{\text{quot},\epsilon})\circ\mathcal{U}_{\Lambda_n}\circ\frac{1}{1-\epsilon}\mathcal{L}_{\Lambda_n}\circ\mathcal{U}_{\Lambda_n}^* \circ (\mathcal{V}_{\Lambda_n}^{\text{inv},\epsilon}\oplus\mathcal{V}_{\Lambda_n}^{\text{quot},\epsilon})^{-1} \notag \\
    &= \begin{pmatrix}
        \mathcal{J}_{\Lambda_n}^{\text{inv},\epsilon} & \mathcal{V}_{\Lambda_n}^{\text{inv},\epsilon}\circ\mathcal{M}_{\mathfrak{I}_{\Lambda_n}}\circ(\mathcal{V}_{\Lambda_n}^{\text{quot},\epsilon})^{-1} \\
        0 & \mathcal{J}_{\Lambda_n}^{\text{quot},\epsilon}
    \end{pmatrix}.
\end{align*}
Since $\mathcal{J}_{\Lambda_n}^{\text{inv},\epsilon}$ and $\mathcal{J}_{\Lambda_n}^{\text{quot},\epsilon}$ have entirely distinct spectra,
 the Sylvester equation
\begin{align*}
    \mathcal{X}_{\Lambda_n}^\epsilon\mathcal{J}_{\Lambda_n}^{\text{quot},\epsilon} -\mathcal{J}_{\Lambda_n}^{\text{inv},\epsilon}\mathcal{X}_{\Lambda_n}^\epsilon
    + \mathcal{V}_{\Lambda_n}^{\text{inv},\epsilon}\circ\mathcal{M}_{\mathfrak{I}_{\Lambda_n}}\circ(\mathcal{V}_{\Lambda_n}^{\text{quot},\epsilon})^{-1}
    = 0
\end{align*}
has a unique solution $\mathcal{X}_{\Lambda_n}^\epsilon$ (cf. \cite{Rosenblum1956,Bhatia1997}).
Therefore, the operator defined by
\begin{align}\label{eq:V_in_the_model}
    \mathcal{V}_{\Lambda_n}^\epsilon
    = \begin{pmatrix}
        I & \mathcal{X}_{\Lambda_n}^\epsilon \\
        0 & I
    \end{pmatrix}
    \circ (\mathcal{V}_{\Lambda_n}^{\text{inv},\epsilon}\oplus\mathcal{V}_{\Lambda_n}^{\text{quot},\epsilon})\circ\mathcal{U}_{\Lambda_n}
    = \begin{pmatrix}
        \mathcal{V}_{\Lambda_n}^{\text{inv},\epsilon} & \mathcal{X}_{\Lambda_n}^\epsilon\mathcal{V}_{\Lambda_n}^{\text{quot},\epsilon} \\
        0 & \mathcal{V}_{\Lambda_n}^{\text{quot},\epsilon}
    \end{pmatrix}\circ\mathcal{U}_{\Lambda_n}
\end{align}
transforms $(1-\epsilon)^{-1}\mathcal{L}_{\Lambda_n}$ into the Jordan canonical form,
\begin{align*}
    \mathcal{V}_{\Lambda_n}^\epsilon\circ\frac{1}{1-\epsilon}\mathcal{L}_{\Lambda_n}^\epsilon\circ(\mathcal{V}_{\Lambda_n}^\epsilon)^{-1}
    = \begin{pmatrix}
        \mathcal{J}_{\Lambda_n}^{\text{inv},\epsilon} & 0 \\
        0 & \mathcal{J}_{\Lambda_n}^{\text{quot},\epsilon}
    \end{pmatrix},
\end{align*}
up to orders of Jordan cells.

We intend to evaluate the condition number $\kappa_{\Lambda_n}(\mathcal{V}_{\Lambda_n}^\epsilon)$ of $\mathcal{V}_{\Lambda_n}^\epsilon$ given by Eq.~\eqref{eq:V_in_the_model}.
Since
\begin{align*}
    \|\mathcal{V}_{\Lambda_n}^\epsilon\|_{\mathfrak{A}_{\Lambda_n}\to\mathbb{C}^{2^{4n}}}
    = \left\| \begin{pmatrix}
        \mathcal{V}_{\Lambda_n}^{\text{inv},\epsilon} & \mathcal{X}_{\Lambda_n}^\epsilon\mathcal{V}_{\Lambda_n}^{\text{quot},\epsilon} \\
        0 & \mathcal{V}_{\Lambda_n}^{\text{quot},\epsilon}
    \end{pmatrix} \right\|_{\mathfrak{A}_{\Lambda_n}\to\mathbb{C}^{2^{4n}}}
    \ge \|\mathcal{V}_{\Lambda_n}^{\text{inv},\epsilon}\|_{\mathfrak{I}_{\Lambda_n}\to\mathbb{C}^{d_{\text{inv}}}}
\end{align*}
and 
\begin{align*}
    \|(\mathcal{V}_{\Lambda_n}^\epsilon)^{-1}\|_{\mathbb{C}^{2^{4n}}\to\mathfrak{A}_{\Lambda_n}}
    = \left\| \begin{pmatrix}
        (\mathcal{V}_{\Lambda_n}^{\text{inv},\epsilon})^{-1} & -(\mathcal{V}_{\Lambda_n}^{\text{inv},\epsilon})^{-1}\mathcal{X}_{\Lambda_n}^\epsilon \\
        0 & (\mathcal{V}_{\Lambda_n}^{\text{quot},\epsilon})^{-1}
    \end{pmatrix} \right\|_{\mathbb{C}^{2^{4n}}\to\mathfrak{A}_{\Lambda_n}}
    \ge \|(\mathcal{V}_{\Lambda_n}^{\text{inv},\epsilon})^{-1}\|_{\mathbb{C}^{d_{\text{inv}}}\to\mathfrak{I}_{\Lambda_n}},
\end{align*}
we have
\begin{align*}
    \kappa_{\Lambda_n}(\mathcal{V}_{\Lambda_n}^\epsilon)
    \ge \|\mathcal{V}_{\Lambda_n}^{\text{inv},\epsilon}\|_{\mathfrak{I}_{\Lambda_n}\to\mathbb{C}^{d_{\text{inv}}}}\|(\mathcal{V}_{\Lambda_n}^{\text{inv},\epsilon})^{-1}\|_{\mathbb{C}^{d_{\text{inv}}}\to\mathfrak{I}_{\Lambda_n}}.
\end{align*}
To obtain the explicit form of $\mathcal{J}_{\Lambda_n}^{\text{inv},\epsilon}$, $\mathcal{V}_{\Lambda_n}^{\text{inv},\epsilon}$, and $(\mathcal{V}_{\Lambda_n}^{\text{inv},\epsilon})^{-1}$, 
we write down the action of $\mathcal{L}_{\Lambda_n}$ on $\mathfrak{I}_{\Lambda_n}$ as follows:
\begin{align*}
    &\mathcal{L}_{\Lambda_n}\left( \begin{pmatrix}
        \hat{F}_1 & \cdots & \hat{F}_{K_n}
    \end{pmatrix}\oplus\bigoplus_{\alpha,\beta\{0,1\}}\begin{pmatrix}
        \hat{D}_{2n}\hat{E}^\alpha\hat{X}^\beta & \cdots & \hat{D}_1\hat{E}^\alpha\hat{X}^\beta
    \end{pmatrix} \right) \notag \\
    &= \left[ \begin{pmatrix}
        \hat{F}_1 & \cdots & \hat{F}_{K_n}
    \end{pmatrix}\oplus\bigoplus_{\alpha,\beta\{0,1\}}\begin{pmatrix}
        \hat{D}_{2n}\hat{E}^\alpha\hat{X}^\beta & \cdots & \hat{D}_1\hat{E}^\alpha\hat{X}^\beta
    \end{pmatrix} \right]\left( I_{K_n}\oplus\mathsf{M}_n^{\oplus 4}\right),
\end{align*}
where 
\begin{align*}
    \mathsf{M}_n
    = \begin{pmatrix}
        0 & 1 & 0 & \cdots & 0 & 0 \\
        0 & -1 & 1 & \cdots & 0 & 0 \\
        0 & 0 & -1 & \ddots & 0 & 0 \\
        \vdots & \vdots & \vdots & \ddots & \vdots & \vdots \\
        0 & 0 & 0 & \cdots & -1 & 1 \\
        0 & 0 & 0 & \cdots & 0 & -1
    \end{pmatrix}.
\end{align*}
The Jordan canonical form of $(1-\epsilon)^{-1}\mathsf{M}_n$ is 
\begin{align*}
    \mathsf{J}_{n}^\epsilon
    = \begin{pmatrix}
        0 & 0 & 0 & \cdots & 0 & 0 \\
        0 & -\frac{1}{1-\epsilon} & 1 & \cdots & 0 & 0 \\
        0 & 0 & -\frac{1}{1-\epsilon} & \cdots & 0 & 0 \\
        \vdots & \vdots & \vdots & \ddots & \vdots & \vdots \\
        0 & 0 & 0 & \cdots & -\frac{1}{1-\epsilon} & 1 \\
        0 & 0 & 0 & \cdots & 0 & -\frac{1}{1-\epsilon}
    \end{pmatrix}
    = \mathsf{V}_n^\epsilon\frac{\mathsf{M}_n}{1-\epsilon}(\mathsf{V}_n^\epsilon)^{-1}
\end{align*}
with
\begin{align*}
    \mathsf{V}_n^\epsilon
    = \left(\begin{array}{c|ccccc}
        1 & 1 & 1 & 1 & \cdots & 1 \\
        \hline
         0 & 1 & & & & \\
         0 & & \frac{1}{1-\epsilon} & & & \\
         0 & & & \frac{1}{(1-\epsilon)^2} & & \\
         \vdots & & & & \ddots & \\
         0 & & & & & \frac{1}{(1-\epsilon)^{2n-2}}
    \end{array}\right)
\end{align*}
and 
\begin{align*}
    (\mathsf{V}_n^\epsilon)^{-1}
    = \left(\begin{array}{c|ccccc}
        1 & -1 & -(1-\epsilon) & -(1-\epsilon)^2 & \cdots & -(1-\epsilon)^{2n-2} \\
        \hline
         0 & 1 & & & & \\
         0 & & 1-\epsilon & & & \\
         0 & & & (1-\epsilon)^2 & & \\
         \vdots & & & & \ddots & \\
         0 & & & & & (1-\epsilon)^{2n-2}
    \end{array}\right).
\end{align*}
Therefore, we get the following expressions
\begin{gather*}
    \mathcal{J}_{\Lambda_n}^{\text{inv},\epsilon}
    = \frac{I_{K_n}}{1-\epsilon}\oplus(\mathsf{J}_n^\epsilon)^{\oplus 4}, \\
    \mathcal{V}_{\Lambda_n}^{\text{inv},\epsilon}(\hat{A})
    = \begin{pmatrix}
        \braket{\hat{F}_1,\hat{A}}_{\text{HS}} \\
        \vdots \\
        \braket{\hat{F}_{K_n},\hat{A}}_{\text{HS}}
    \end{pmatrix}\oplus\bigoplus_{\alpha,\beta\in\{0,1\}}\mathsf{V}_n^\epsilon\begin{pmatrix}
        \braket{\hat{D}_{2n}\hat{E}^\alpha\hat{X}^\beta,\hat{A}}_{\text{HS}} \\
        \vdots \\
        \braket{\hat{D}_1\hat{E}^\alpha\hat{X}^\beta,\hat{A}}_{\text{HS}}
    \end{pmatrix},
    \quad
    \hat{A}\in\mathfrak{I}_{\Lambda_n}, \\
    (\mathcal{V}_{\Lambda_n}^{\text{inv},\epsilon})^{-1}(\boldsymbol{u})
    = \left[ \begin{pmatrix}
        \hat{F}_1 & \cdots & \hat{F}_{K_n}
    \end{pmatrix}\oplus\bigoplus_{\alpha,\beta\{0,1\}}\begin{pmatrix}
        \hat{D}_{2n}\hat{E}^\alpha\hat{X}^\beta & \cdots & \hat{D}_1\hat{E}^\alpha\hat{X}^\beta
    \end{pmatrix}(\mathsf{V}_n^\epsilon)^{-1} \right]\boldsymbol{u},
    \quad
    \boldsymbol{u}\in\mathbb{C}^{d_{\text{inv}}}.
\end{gather*}
With this, the norms of $\mathcal{V}_{\Lambda_n}^{\text{inv},\epsilon}$ and its inverse can be evaluated as
\begin{align*}
    \|\mathcal{V}_{\Lambda_n}^{\text{inv},\epsilon}\|_{\mathfrak{I}_{\Lambda_n}\to\mathbb{C}^{d_{\text{inv}}}}
    \ge \frac{\|\mathcal{V}_{\Lambda_n}^{\text{inv},\epsilon}(\hat{D}_1-\hat{D}_{2n})\|_2}{\|\hat{D}_1-\hat{D}_{2n}\|}
    = \frac{1}{2(1-\epsilon)^{2n-2}},
\end{align*}
and
\begin{align*}
    \|(\mathcal{V}_{\Lambda_n}^{\text{inv},\epsilon})^{-1}\|_{\mathbb{C}^{d_{\text{inv}}}\to\mathfrak{I}_{\Lambda_n}}
    \ge \|\hat{D}_{2n}\|
    = 1.
\end{align*}
Consequently, we conclude that the condition number diverges to infinity as $n\to\infty$ from the inequality
\begin{align*}
    \kappa_{\Lambda_n}(\mathcal{V}_{\Lambda_n}^\epsilon)
    \ge \frac{1}{2(1-\epsilon)^{2n-2}}.
\end{align*}
This result is consistent with the fact that our model should violate the conditions (A2) in Theorem~\ref{thm:main_theorem} and (B2) in Theorem~\ref{thm:main_theorem2}.


\section*{Acknowledgement}
We thank Professor Tomohiro Sasamoto for valuable comments.
K. Shimomura is supported by JST CREST Grant No.~JPMJCR19T2 and JSPS KAKENHI Grant No.~JP25KJ1632.
S. Kusuoka is supported by JSPS KAKENHI Grant No.~JP19H00643, No.~JP21H00988, No.~JP22H00099, and No.~JP23K20801.
This work was partially done while N. Hara was at Kyoto University.
We appreciate the Mathematics-Based Creation of Science (MACS) Program at Graduate School of Science, Kyoto University for providing us with an opportunity to start this work.

\appendix

\section{Net in a nutshell}\label{sec:net}
In this Appendix, we summarize basic properties of the net as required in the main text.
The description in this appendix is based on \cite{Kelley1955}.

\begin{dfn}
    A non-empty set $\mathcal{M}$ is a \emph{directed set} if there is a relation $\le$ on $\mathcal{M}$ satisfying three conditions as follows.
    \begin{itemize}
        \item $\mu\le\mu$ for each $\mu\in\mathcal{M}$,
        \item if $\mu_1\le\mu_2$ and $\mu_2\le\mu_3$ then $\mu_1\le\mu_3$,
        \item for $\mu_1,\mu_2\in\mathcal{M}$, there is some $\mu_3\in\mathcal{M}$ satisfying $\mu_1\le\mu_3$ and $\mu_2\le\mu_3$.
    \end{itemize}
    For example, a set of all non-empty finite subsets of a non-empty set can be directed by the inclusion.
    A \emph{net} in a set $X$ is a function from some directed set $\mathcal{M}$ to $X$.
    We often write a net $\mathcal{M}\to X$, $\mu\mapsto x_\mu$ by $(x_\mu)_{\mu\in\mathcal{M}}$ or simply $(x_\mu)$.
    A \emph{subnet} of a net $P\colon\mathcal{M}\to X$ is the composition $P\circ\varphi$, where $\varphi\colon\mathcal{N}\to\mathcal{M}$ is a \emph{cofinal} function from some directed set $\mathcal{N}$ to $\mathcal{M}$, i.e., for each $\mu\in\mathcal{M}$ there is some $\nu\in\mathcal{N}$ such that $\mu\le\varphi(\nu)$.
    We often write a subnet of a net $(x_\mu)_{\mu\in\mathcal{M}}$ by $(x_{\mu_\nu})_{\nu\in\mathcal{N}}$ or simply $(x_{\mu_\nu})$, where the function $\nu\in\mathcal{N}\mapsto\mu_\nu\in\mathcal{M}$ is increasing and cofinal in $\mathcal{M}$.
\end{dfn}
Note that some textbooks (such as \cite{willard1970}) define the subnet in a different way; they require that the function $\varphi\colon\mathcal{N}\to\mathcal{M}$ is \emph{increasing}, i.e., $\varphi(\nu_1)\le\varphi(\nu_2)$ whenever $\nu_1\le\nu_2$ for $\nu_1,\nu_2\in\mathcal{N}$, as well as cofinal.

The convergence of a net is defined in the following way.

\begin{dfn}
    A net $(x_\mu)_{\mu\in\mathcal{M}}$ in a topological space $X$ \emph{converges} to $x\in X$ (witten $x_\mu\xrightarrow{\mu}x$ or $\lim_\mu x_\mu = x$), if for each neighborhood $U$ of $x$, there is some $\mu_0\in\mathcal{M}$ such that $x_\mu\in U$ whenever $\mu_0\le\mu$.
    A net has $y\in X$ as a \emph{cluster point} if it has a subnet converging to $y$.
\end{dfn}

In particular, the convergence of a net $(x_\mu)$ in a metric spece $(X,\operatorname{dist})$ agrees with the definition that $x_\mu\xrightarrow{\mu}x$ if for any positive number $\epsilon$ there is some $\mu_0\in\mathcal{M}$ such that $\mu_0\le\mu$ implies $\operatorname{dist}(x_\mu,x)<\epsilon$.
The assertion below holds for general topological spaces.
\begin{thm}[\cite{Kelley1955} p.74 (c)]\label{thm:subnet_of_subnet}
    If every subnet of a net $(x_\mu)$ in a topological space $X$ has a subnet converging to $x\in X$, then $(x_\mu)$ converges to $x$.
\end{thm}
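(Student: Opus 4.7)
The plan is to prove Theorem~\ref{thm:subnet_of_subnet} by contrapositive: assume that $(x_\mu)_{\mu\in\mathcal{M}}$ does \emph{not} converge to $x$, and produce a subnet none of whose subnets converge to $x$, contradicting the hypothesis. The starting observation is that failure of convergence gives, by unfolding the definition, an open neighborhood $U$ of $x$ such that for every $\mu_0\in\mathcal{M}$ there exists $\mu\ge\mu_0$ with $x_\mu\notin U$. In other words, the index set
\begin{equation*}
    \mathcal{N}\coloneqq\{\mu\in\mathcal{M}\mid x_\mu\notin U\}
\end{equation*}
is cofinal in $\mathcal{M}$.

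Next, I would verify that $\mathcal{N}$, endowed with the order induced from $\mathcal{M}$, is itself a directed set. Given $\mu_1,\mu_2\in\mathcal{N}$, pick an upper bound $\mu'\in\mathcal{M}$ for them using the directedness of $\mathcal{M}$, then use the cofinality established above to find $\mu\ge\mu'$ with $\mu\in\mathcal{N}$; this $\mu$ is a common upper bound in $\mathcal{N}$. With the inclusion map $\varphi\colon\mathcal{N}\hookrightarrow\mathcal{M}$, which is both increasing and cofinal, the restriction $(x_\mu)_{\mu\in\mathcal{N}}$ is a bona fide subnet of $(x_\mu)_{\mu\in\mathcal{M}}$.

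Now comes the punchline: every term of this subnet lies in the closed set $X\setminus U$, and the same is therefore true of every subnet of this subnet. But any net converging to $x$ must eventually lie in every neighborhood of $x$, in particular in $U$, which is impossible for a net contained entirely in $X\setminus U$. Thus no subnet of $(x_\mu)_{\mu\in\mathcal{N}}$ can converge to $x$, contradicting the hypothesis. This contradiction forces $x_\mu\to x$.

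The only subtlety — and what I expect to be the main (minor) obstacle — is the bookkeeping around the subnet definition used in the paper: one must ensure that the inclusion $\mathcal{N}\hookrightarrow\mathcal{M}$ really qualifies as a cofinal function in the sense of the Definition stated in Appendix~\ref{sec:net}, and that ``subnet of a subnet'' still has all its values in $X\setminus U$. Both reduce to straightforward unpacking of definitions, but they are the places where the proof could stumble if one is careless about which convention for subnet (Kelley's vs.\ Willard's) is in force. Since the paper follows Kelley, the cofinality version used here is exactly the one needed.
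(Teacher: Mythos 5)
Your argument is correct: the paper itself offers no proof of this theorem (it simply cites Kelley), and your contrapositive construction --- extracting the cofinal index set $\mathcal{N}=\{\mu\mid x_\mu\notin U\}$ for a bad neighborhood $U$, checking it is directed and that the inclusion is cofinal, and noting that the resulting subnet and all of its subnets avoid $U$ --- is exactly the standard (Kelley-style) proof. Your closing remark about conventions is also handled correctly, since the inclusion $\mathcal{N}\hookrightarrow\mathcal{M}$ is increasing as well as cofinal, so the constructed subnet qualifies under both Kelley's and Willard's definitions.
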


Furthermore, the compactness of a topological space is characterized by using the concept of nets.
\begin{thm}
    A topological space $X$ is compact, if and only if each net in $X$ has a cluster point.
\end{thm}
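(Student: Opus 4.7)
The plan is to prove both directions via the standard correspondence between nets and the finite intersection property (FIP) of families of closed sets. Recall that $X$ is compact iff every family of closed sets with the FIP has non-empty intersection; this is the reformulation I will use on both sides.

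For the forward direction, suppose $X$ is compact and let $(x_\mu)_{\mu\in\mathcal{M}}$ be a net in $X$. For each $\mu_0\in\mathcal{M}$ form the tail $T_{\mu_0}\coloneqq\{x_\mu : \mu_0\le\mu\}$. Because $\mathcal{M}$ is directed, for any finite collection $\mu_1,\ldots,\mu_n$ there exists $\mu^*$ with $\mu_i\le\mu^*$ for all $i$, so $x_{\mu^*}\in T_{\mu_1}\cap\cdots\cap T_{\mu_n}$; hence the family $\{\overline{T_{\mu_0}}\}_{\mu_0\in\mathcal{M}}$ of closed sets has the FIP. Compactness yields a point $x\in\bigcap_{\mu_0}\overline{T_{\mu_0}}$. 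I then verify that $x$ is a cluster point: for every open neighborhood $U$ of $x$ and every $\mu_0\in\mathcal{M}$, since $x\in\overline{T_{\mu_0}}$ the intersection $U\cap T_{\mu_0}$ is non-empty, so there is some $\mu\ge\mu_0$ with $x_\mu\in U$. Standard constructions (directing the set $\mathcal{N}\coloneqq\{(\mu,U): x_\mu\in U,\ U\ni x \text{ open}\}$ by $(\mu_1,U_1)\le(\mu_2,U_2)\iff\mu_1\le\mu_2 \text{ and } U_2\subset U_1$) then produce an explicit convergent subnet.

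For the converse, assume every net in $X$ has a cluster point and let $\{F_\alpha\}_{\alpha\in A}$ be a family of closed sets with the FIP. Let $\mathcal{M}$ be the set of non-empty finite subsets of $A$ directed by inclusion. Using the FIP, for each $F\in\mathcal{M}$ choose (by the axiom of choice) an element $x_F\in\bigcap_{\alpha\in F}F_\alpha$. By hypothesis the net $(x_F)_{F\in\mathcal{M}}$ has a cluster point $x\in X$, obtained from some convergent subnet $(x_{F_\nu})_{\nu\in\mathcal{N}}$. For any fixed $\alpha_0\in A$ the singleton $\{\alpha_0\}$ lies in $\mathcal{M}$, and by cofinality there exists $\nu_0$ with $\{\alpha_0\}\subset F_{\nu_0}$; for all $\nu\ge\nu_0$ one has $F_\nu\supset F_{\nu_0}\supset\{\alpha_0\}$, so $x_{F_\nu}\in F_{\alpha_0}$. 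Since $F_{\alpha_0}$ is closed and $(x_{F_\nu})$ converges to $x$, we get $x\in F_{\alpha_0}$. As $\alpha_0$ was arbitrary, $x\in\bigcap_\alpha F_\alpha$, proving $X$ compact.

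The main technical care-points will be: (i) writing the cluster-point/subnet construction in the forward direction cleanly using the product-style directed set $\mathcal{N}$ described above, so that the resulting subnet actually fits the paper's definition of subnet (cofinal map, not necessarily monotone); and (ii) being careful in the converse that the selection $F\mapsto x_F$ is allowed (invoke choice) and that the cofinality argument only uses the paper's weaker notion of subnet. Neither step is conceptually hard, but the first is where bookkeeping tends to go wrong, so I would write it out in full rather than leaving it implicit.
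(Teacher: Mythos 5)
The paper states this theorem in Appendix~A as a background fact imported from Kelley's book and gives no proof at all, so there is nothing internal to compare against; your argument is the standard textbook proof via the finite intersection property, and the forward direction is complete and correct. In particular the $(\mu,U)$-indexed subnet you sketch is directed (intersect the neighborhoods, then use that $x$ lies in the closure of every tail to find a later index landing in the intersection), and the projection $(\mu,U)\mapsto\mu$ is both increasing and cofinal, so it is a subnet under any of the competing definitions.

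The one step you must nail down is in the converse, where you assert that $\nu\ge\nu_0$ implies $F_\nu\supset F_{\nu_0}$. That uses monotonicity of the subnet map $\nu\mapsto F_\nu$, but the displayed definition of subnet in Appendix~A requires only that the map have cofinal image. Under that literal reading your inclusion fails, and in fact the theorem itself becomes false: for the identity net $x_n=n$ on discrete $\mathbb{N}$, take $\mathcal{N}=\mathbb{N}\times\mathbb{N}$ with the product order and $\varphi(a,b)=a$ when $b=0$, $\varphi(a,b)=1$ otherwise; the image of $\varphi$ is cofinal, yet the ``subnet'' is eventually constant equal to $x_1$ and so converges, even though $(x_n)$ has no cluster point. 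So your care-point (ii) is well taken, but the resolution is not to weaken the argument: you should state explicitly that you work with the increasing-and-cofinal notion of subnet (which the paper's own notation sentence, and Kelley's actual definition via ``for each $\mu_0$ there is $\nu_0$ such that $\nu\ge\nu_0$ implies $\mu_\nu\ge\mu_0$,'' both supply). With that reading the inclusion you need is immediate, the closedness of $F_{\alpha_0}$ finishes the converse, and the proof is complete.
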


\section{\texorpdfstring{Some properties of $C^*$-algebra}{TEXT}}\label{sec:C*-algebra}
In this appendix, we summarize some properties of $C^*$-algebra used in the main text.

A Banach space $\mathfrak{A}$ is said to be a \emph{Banach $*$-algebra} if $\mathfrak{A}$ is a $\mathbb{C}$-algebra with the involution $*$.
A unital $C^*$-algebra is a unital Banach $*$-algebra with the unit $\hat{I}$ satisfying the \emph{$C^*$-condition}:
\begin{align*}
    \|\hat{A}^*\hat{A}\|
    = \|\hat{A}\|^2
\end{align*}
for any $\hat{A}\in\mathfrak{A}$. 
An element $\hat{A}$ of a $C^*$-algebra $\mathfrak{A}$ is defined to be positive if there exists an element $\hat{B}\in\mathfrak{A}$ such that $\hat{A}=\hat{B}^*\hat{B}$.

A linear functional $\psi$ over a $C^*$-algebra $\mathfrak{A}$ is defined to be positive if 
\begin{align*}
    \psi(\hat{A}^*\hat{A})
    \ge 0
\end{align*}
for any $\hat{A}\in\mathfrak{A}$. 
A state $\omega$ over $\mathfrak{A}$ is a positive linear functional over a $C^*$-algebra $\mathfrak{A}$ with
\begin{align*}
    \|\omega\|
    \coloneqq \sup\left\{|\omega(\hat{A})|\mid\hat{A}\in\mathfrak{A},|\hat{A}\|=1\right\}
    = 1.
\end{align*}

\begin{thm}[\cite{Bratteli1987} Proposition 2.3.11]\label{thm:compactness_of_states_set}
    Let $\omega$ be a linear functional over a unital $C^*$-algebra $\mathfrak{A}$.
    The following conditions are equivalent:
    \begin{enumerate}[(1)]
        \item $\omega$ is positive;
        \item $\omega$ is continuous and satisfies
        \begin{align*}
            \|\omega\|
            = \omega(\hat{I}).
        \end{align*}
    \end{enumerate}
\end{thm}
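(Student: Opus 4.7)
The plan is to prove the two implications separately, with the $C^*$-identity $\|\hat{A}^*\hat{A}\|=\|\hat{A}\|^2$ and continuous functional calculus as the main tools. The unital structure of $\mathfrak{A}$ lets me test against $\hat{I}$, which is what produces the sharp equality $\|\omega\|=\omega(\hat{I})$ rather than a cruder multiplicative bound.

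For $(1)\Rightarrow(2)$, the idea is a Cauchy-Schwarz argument. Positivity of $\omega$ makes $(\hat{A},\hat{B})\mapsto\omega(\hat{A}^*\hat{B})$ a positive semidefinite sesquilinear form (the symmetry $\omega(\hat{A}^*\hat{B})=\overline{\omega(\hat{B}^*\hat{A})}$ follows from expanding $\omega\bigl((\hat{A}+\lambda\hat{B})^*(\hat{A}+\lambda\hat{B})\bigr)\ge 0$ in $\lambda\in\mathbb{C}$), so Cauchy-Schwarz gives $|\omega(\hat{A})|^2\le\omega(\hat{I})\,\omega(\hat{A}^*\hat{A})$ upon setting $\hat{B}=\hat{I}$. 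For any positive $\hat{P}\in\mathfrak{A}$, continuous functional calculus yields $\|\hat{P}\|\hat{I}-\hat{P}\ge 0$, hence $\omega(\hat{P})\le\|\hat{P}\|\,\omega(\hat{I})$; applying this to $\hat{P}=\hat{A}^*\hat{A}$ and invoking the $C^*$-identity gives $|\omega(\hat{A})|\le\omega(\hat{I})\|\hat{A}\|$, so $\|\omega\|\le\omega(\hat{I})$. The reverse bound is immediate from $\|\hat{I}\|=1$ together with $\omega(\hat{I})\ge 0$.

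For $(2)\Rightarrow(1)$, after rescaling so that $\omega(\hat{I})=1=\|\omega\|$, I would first establish that $\omega$ takes real values on self-adjoint elements. For $\hat{A}=\hat{A}^*$, write $\omega(\hat{A})=\alpha+\ii\beta$; the element $\hat{A}+\ii t\hat{I}$ is normal with $\|\hat{A}+\ii t\hat{I}\|=\sqrt{\|\hat{A}\|^2+t^2}$, so the bound $|\omega(\hat{A}+\ii t\hat{I})|\le\|\hat{A}+\ii t\hat{I}\|$ reads $\alpha^2+(\beta+t)^2\le\|\hat{A}\|^2+t^2$ for every $t\in\mathbb{R}$, forcing $\beta=0$ upon sending $t\to\pm\infty$. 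Once reality is known, for positive $\hat{A}$ the element $\|\hat{A}\|\hat{I}-\hat{A}$ has spectrum in $[0,\|\hat{A}\|]$, hence norm at most $\|\hat{A}\|$, so $\bigl|\|\hat{A}\|-\omega(\hat{A})\bigr|\le\|\hat{A}\|$ gives $\omega(\hat{A})\ge 0$.

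The main technical input I expect to lean on is the identity $\|\hat{N}\|=r(\hat{N})$ for normal elements $\hat{N}$ of a $C^*$-algebra, which underlies $\|\hat{A}+\ii t\hat{I}\|^2=\|\hat{A}\|^2+t^2$ in the second implication; this is where the $C^*$-condition cannot be replaced by a generic Banach $*$-algebra norm. Beyond that, the argument reduces to one Cauchy-Schwarz inequality and elementary spectral bounds via functional calculus, and I do not anticipate any serious obstacle apart from careful bookkeeping to preserve the sharp constant $\omega(\hat{I})$ throughout, rather than the larger multiples $2\omega(\hat{I})$ or $4\omega(\hat{I})$ that a crude real/imaginary and positive/negative decomposition of a general element would produce.
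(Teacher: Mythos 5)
Your proof is correct; note that the paper does not prove this statement at all but simply cites it as Proposition 2.3.11 of Bratteli--Robinson, and your argument (Cauchy--Schwarz for $(1)\Rightarrow(2)$, and for $(2)\Rightarrow(1)$ reality on self-adjoint elements via the $\hat{A}+\ii t\hat{I}$ trick followed by the bound $\bigl\|\,\|\hat{A}\|\hat{I}-\hat{A}\,\bigr\|\le\|\hat{A}\|$) is essentially the standard proof given in that reference. The only points you implicitly lean on are the standard C*-facts that $\hat{A}^*\hat{A}$ has nonnegative spectrum and that such elements admit square roots, which is legitimate background for this level of statement.
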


For the dual space $\mathfrak{A}^*$, the weak-$*$ topology on $\mathfrak{A}^*$ is defined by the family of seminorms $p_{\hat{a}}:\omega \mapsto |\omega(\hat{A})|$ for $\hat{A} \in \mathfrak{A}$. Let $E_\mathfrak{A} \subseteq \mathfrak{A}^*$ be a set of states over $\mathfrak{A}$. The following Banach-Alaoglu theorem shows that $E_\mathfrak{A}$ is weakly-$*$ compact.
\begin{thm}[{\cite[Theorem 2.3.15]{Bratteli1987}}]
    The set $E_\mathfrak{A}$ of states over a unital $C^*$-algebra $\mathfrak{A}$ is weakly-$*$ compact.
\end{thm}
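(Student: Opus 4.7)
The plan is to realize $E_\mathfrak{A}$ as a weak-$*$ closed subset of the closed unit ball of $\mathfrak{A}^*$ and invoke the classical Banach--Alaoglu theorem.

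First I would apply Theorem~\ref{thm:compactness_of_states_set} to see that every state satisfies $\|\omega\|=\omega(\hat{I})=1$, so $E_\mathfrak{A}$ is contained in the closed unit ball $B^* \coloneqq \{\varphi\in\mathfrak{A}^* : \|\varphi\|\le 1\}$. By the classical Banach--Alaoglu theorem (embedding $B^*$ into $\prod_{\hat{A}\in\mathfrak{A},\,\|\hat{A}\|\le 1}\overline{D(0,1)}$, which is compact by Tychonoff, and checking that the image is closed), the set $B^*$ is weak-$*$ compact. I would cite this rather than reprove it.

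Next I would show $E_\mathfrak{A}$ is weak-$*$ closed in $\mathfrak{A}^*$. Let $(\omega_\mu)_\mu$ be a net in $E_\mathfrak{A}$ converging weak-$*$ to some $\varphi\in\mathfrak{A}^*$. For any $\hat{A}\in\mathfrak{A}$, positivity passes to the limit: $\varphi(\hat{A}^*\hat{A})=\lim_\mu \omega_\mu(\hat{A}^*\hat{A})\ge 0$, so $\varphi$ is positive. Normalization also passes to the limit: $\varphi(\hat{I})=\lim_\mu \omega_\mu(\hat{I})=1$. Theorem~\ref{thm:compactness_of_states_set} applied to the positive functional $\varphi$ then yields continuity together with $\|\varphi\|=\varphi(\hat{I})=1$, i.e., $\varphi\in E_\mathfrak{A}$.

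Combining these two ingredients, $E_\mathfrak{A}$ is a weak-$*$ closed subset of the weak-$*$ compact set $B^*$, hence itself weak-$*$ compact. The only substantive input is the classical Banach--Alaoglu theorem on the unit ball of a dual Banach space; the rest reduces to the observation that positivity and the value at the identity are both evaluations against fixed elements of $\mathfrak{A}$, hence are continuous in the weak-$*$ topology and survive the limit. I do not anticipate any genuine obstacle beyond keeping careful track of these preservation properties.
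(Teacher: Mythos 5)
Your proof is correct, and it is essentially the argument behind the cited result: the paper itself gives no proof but refers to \cite[Theorem 2.3.15]{Bratteli1987}, whose unital case is exactly your reduction to the classical Banach--Alaoglu theorem via weak-$*$ closedness of positivity and of the normalization $\varphi(\hat{I})=1$, combined with Theorem~\ref{thm:compactness_of_states_set}. No gaps to report.
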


\printbibliography

\end{document}